\let\newfloat\newfloat@ltx
\newcommand{\bra}[1]{{\left\langle{#1}\right\vert}}
\newcommand{\ket}[1]{{\left\vert{#1}\right\rangle}}
\newcommand{\smallket}[1]{{\vert{#1}\rangle}}
\newcommand{\qw}[1][-1]{\ar @{-} [0,#1]}
\newcommand{\qwx}[1][-1]{\ar @{-} [#1,0]}
\newcommand{\gate}[1]{*+<.6em>{#1} \POS ="i","i"+UR;"i"+UL **\dir{-};"i"+DL **\dir{-};"i"+DR **\dir{-};"i"+UR **\dir{-},"i" \qw}
\newcommand{\control}{*!<0em,.025em>-=-<.2em>{\bullet}}
\newcommand{\controlo}{*+<.01em>{\xy -<.095em>*\xycircle<.19em>{} \endxy}}
\newcommand{\ctrl}[1]{\control \qwx[#1] \qw}
\newcommand{\ctrlo}[1]{\controlo \qwx[#1] \qw}
\newcommand{\targ}{*+<.02em,.02em>{\xy ="i","i"-<.39em,0em>;"i"+<.39em,0em> **\dir{-}, "i"-<0em,.39em>;"i"+<0em,.39em> **\dir{-},"i"*\xycircle<.4em>{} \endxy} \qw}
\newcommand{\multigate}[2]{*+<1em,.9em>{\hphantom{#2}} \POS [0,0]="i",[0,0].[#1,0]="e",!C *{#2},"e"+UR;"e"+UL **\dir{-};"e"+DL **\dir{-};"e"+DR **\dir{-};"e"+UR **\dir{-},"i" \qw}
\newcommand{\ghost}[1]{*+<1em,.9em>{\hphantom{#1}} \qw}
\newcommand{\gategroup}[6]{\POS"#1,#2"."#3,#2"."#1,#4"."#3,#4"!C*+<#5>\frm{#6}}
\newcommand{\rstick}[1]{*!L!<-.5em,0em>=<0em>{#1}}
\newcommand{\lstick}[1]{*!R!<.5em,0em>=<0em>{#1}}
\newcommand{\Qcircuit}{\xymatrix @*=<0em>}
\def\EQ#1{\begin{eqnarray}#1\end{eqnarray}}
\newcommand{\cost}[1]{\text{Space}(#1)}
\newcommand{\dm}[1]{|#1\rangle \langle #1 |}
\newcommand{\SIAB}{\texttt{SIAB}}
\newcommand{\SIAR}{\texttt{SIAR}}
\newtheorem{theorem}{Theorem}[section]
\newtheorem{proposition}[theorem]{Proposition}
\newtheorem{corollary}[theorem]{Corollary}
\newtheorem{lemma}[theorem]{Lemma}
\newtheorem{definition}[theorem]{Definition}
\newtheorem{remark}[theorem]{Remark}
\newcommand{\iw}{\mathrm{iw}}
\newcommand\concept[1]{\textit{#1}}
\newcommand\defmath[2]{\newcommand#1{\ensuremath{#2}\xspace}}
\algnewcommand{\IfThenElse}[3]{
  \State \algorithmicif\ #1\ \algorithmicthen\ #2\ \algorithmicelse\ #3}
\algnewcommand{\IfThen}[2]{
  \State \algorithmicif\ #1\ \algorithmicthen\ #2}
\defmath\xor{~\oplus\hspace{-1.4mm}=~}
\providecommand{\set}[1]{\ensuremath{\left\lbrace #1 \right\rbrace}}
\providecommand{\sizeof}[1]{\ensuremath{\left\vert{#1}\right\vert}}
\providecommand{\vect}[1]{\ensuremath{\left\langle #1 \right\rangle}}
\providecommand{\floor}[1]{\lfloor#1\rfloor}
\providecommand{\ceil}[1]{\lceil#1\rceil}
\defmath\tildeo{\tilde{O}}
\defmath\staro{{O}^\ast}
\newcommand{\defn}{\,\triangleq\,}
\defmath\lolo{\mathrm{ll}}
\defmath\as{|\vec{a}_{s}|}
\defmath\ssiar{S_{\mathit{r}}}
\defmath\tsiar{T_{\mathit{r}}}
\defmath\ssiab{S_{\mathit{b}}}
\defmath\tsiab{T_{\mathit{b}}}
\defmath\ssias{S_{\mathit{s}}}
\defmath\tsias{T_{\mathit{s}}}
\defmath\ks{{k_{s}}}
\defmath\kr{{k_{r}}}
\defmath\jcut{J_{c}}
\defmath\jleaves{J_{\ell}}
\newcommand{\Avg}{\text{Avg}}
\begin{document}

\title{Hybrid divide-and-conquer approach for tree search algorithms~:~possibilities and limitations}

\author{Mathys Rennela}
\email{mathys.rennela@inria.fr}
\affiliation{Laboratoire de Physique de l’Ecole Normale Supérieure, Inria, CNRS, ENS-PSL, Mines-Paristech, Sorbonne Université, PSL Research University, Paris, France}%
 
\author{Sebastiaan Brand}
\email{s.o.brand@liacs.leidenuniv.nl}
\affiliation{LIACS, Leiden University, Leiden, The Netherlands}%

\author{Alfons Laarman}
\email{a.w.laarman@liacs.leidenuniv.nl}
\affiliation{LIACS, Leiden University, Leiden, The Netherlands}%
  
\author{Vedran Dunjko}
\email{v.dunjko@liacs.leidenuniv.nl}
\affiliation{LIACS, Leiden University, Leiden, The Netherlands}%


\maketitle

\begin{abstract}
One of the challenges of quantum computers in the near- and mid- term is the limited number of qubits we can use for computations. Finding methods that achieve useful quantum improvements under size limitations is thus a key question in the field.
In this vein, it was recently shown that a hybrid classical-quantum method can help provide polynomial speed-ups to classical divide-and-conquer algorithms, even when only given access to a quantum computer much smaller than the problem itself.
In this work, we study the hybrid divide-and-conquer method in the context of tree search algorithms, and extend it by including quantum backtracking, which allows better results than previous Grover-based methods.
Further, we provide general criteria for threshold-free polynomial speed-ups in the tree search context, and provide a number of examples where polynomial speed ups, using arbitrarily smaller quantum computers, can be obtained.
We provide conditions for speedups for the well known algorithm of DPLL, and we prove threshold-free speed-ups for
the PPSZ algorithm (the core of the fastest exact Boolean satisfiability solver) for well-behaved classes of formulas.
We also provide a simple example where speed-ups can be obtained in an algorithm-independent fashion, under certain well-studied complexity-theoretical assumptions. Finally, we briefly discuss the fundamental limitations of hybrid methods in providing speed-ups for larger problems.
\end{abstract}


\section{Introduction}
Years of progress in experimental quantum physics have now brought us to the verge of real-world quantum computers. These devices will, however, for the near term remain quite limited in a number of ways, including fidelities, architectures, decoherence times, and, total qubit numbers.
Each of the constraints places challenges on the quantum algorithm designer. Specifically the limitation on qubit count -- which is the focus of this work -- motivates the search for  space-efficient quantum algorithms, and the development of new methods which allow us to beneficially apply smaller devices.

Recent works introduced an approach to extend the applicability of smaller devices by proposing a hybrid divide-and-conquer scheme~\cite{dnq-schoning,dnq-eppstein}. This method exploits the pre-specified sub-division of problems in such algorithms, and delegates the work to the quantum machine when the instances become small enough. This regular structure also allowed for analytic expressions for the asymptotic run-times of hybrid algorithms.

The hybrid divide-and-conquer method was applied to two cases of divide and conquer algorithms, that of derandomized Sch\"{o}ning's algorithm for solving Boolean satisfiability~\cite{dnq-schoning}, and to the problem of finding Hamilton cycles on cubic graphs~\cite{dnq-eppstein}. These schemes achieved 
asymptotic polynomial speedups given a quantum computer of size $m$, where $m$ is a fraction of the instance size $n$, i.e., $m =\kappa n$. Interestingly, these speed-ups are obtainable for all fractions~$\kappa,$ i.e.~the improvements are \emph{threshold free}.

The space efficiency of the quantum subroutines was identified as a key criterion for determining whether threshold-free speed-ups are possible, as one may expect.

In these works, the quantum algorithmic backbone was Grover's search, which is space-frugal, but known to be sub-optimal for the cases when the underlying search spaces, the \emph{search trees}, are not complete nor uniform. 
To obtain speedup in these cases, more involved quantum search techniques, namely quantum backtracking~\cite{qbacktracking}, need to be employed.
However, until this work, it was not clear how the space demands of quantum backtracking would influence the applicability of the hybrid approach.

Here, we resolve this issue, and investigate the generalizations of the hybrid divide-and-conquer scheme from the perspective of algorithms which reduce to tree search, in particular, backtracking algorithms.  Our approach is then applied to the two of the arguably best known exact algorithms for Boolean satisfiability: the algorithm of  Davis-Putnam-Logemann-Loveland (DPLL) algorithm~\cite{og-dll} (which is still the backbone of many heuristic real-world SAT solvers) and the Paturi-Pudl\'ak-Saks-Zane (PPSZ) algorithm~\cite{og-ppsz} (which is the backbone of the best-known exact SAT solver). 

The main contributions of this work are summarized as follows:

\begin{itemize}
 \item We analyze the hybrid divide-and-conquer scheme from the perspective of search trees and provide very general criteria which can ensure polynomial-time speed-ups over classical algorithms (Section~\ref{sec:hybrid}). We also consider the limitations of the scheme in the context of online classical algorithms, which terminate as soon as a result is found.
 
 \item We demonstrate that quantum backtracking methods can be employed in a hybrid scheme. This implies an improvement over the previous hybrid algorithm for Hamilton cycles on degree 3  graphs. While the performance of Grover's search and the quantum backtracking algorithm have been compared in the context of the $k$-SAT problem, with hardware limitations in mind~\cite{campbell-khurana-montanaro}, this is the first time that the hybrid method and quantum backtracking is combined.

 \item We exhibit the first, and very simple example of an algorithm-independent provable hybrid speed-up with quantum backtracking, under well-studied complexity-theoretic assumptions.
 
 \item We study a number of settings from the search tree structure and provide all algorithmic elements required for space-efficient quantum hybrid enhancements of DPLL and PPSZ; specially for the case of PPSZ tree search, we demonstrate, under some assumptions on the variable order, a threshold-free hybrid speed-up for a class of formulas, including settings where our methods likely beat not just PPSZ tree search but any classical algorithm.
 
 \item We discuss of the fundamental limitations of our and related hybrid methods.
\end{itemize}

To achieve the above results, we provide space and time-efficient 
{quantum versions} of various subroutines specific to these algorithms, but also of routines which may be of independent interest. This includes a simple yet exponentially more space efficient implementation of the phase estimation step in quantum backtracking, over the direct implementation \cite{practical-qbacktracking}.


The structure of the paper is as follows.
The background material is discussed in section~\ref{sec:background}. This section lays the groundwork for hybrid algorithms from a search tree perspective, by elaborating on how backtracking defines those trees. 
Section~\ref{sec:hybrid} then introduces a tree decomposition which defines our hybrid strategy, i.e.~from what points in the search tree will we start using the quantum computer, and analyzes their impact.
In Section~\ref{sec:criteria}, we discuss sufficient criteria for attaining speedups with both Grover-based search and quantum backtracking over the original classical algorithms, including online algorithms. 
Section~\ref{sec:concrete} provides concrete examples of algorithms and problem classes where threshold-free, and algorithm independent speed-ups can be obtained. Finally, in Section~\ref{sec:discussion}, we discuss the potential and limitations of hybrid approaches for the DPLL algorithm, also in the more practical case when all run-times are restricted to polynomial. This section also briefly addresses the question of the limits of possible speed-ups in any hybrid setting. 
The appendix collects all our more technical results, and some of the background frameworks.

\section{Background}
\label{sec:background}

This section introduces 
SAT and a general backtracking framework, which is instantiated for two exact algorithms: DPLL and PPSZ.
While DPLL performs better in practice on many instances, PPSZ-based algorithms provide the best worst-case runtime guarantee at the time of writing~\cite{biased-ppsz}. 
The section ends with an explanation of the hybrid divide-and-conquer method.

%
%
%

\subsection{Satisfiability}
\label{subsub:satisfiability}

A Boolean formula $F$ over $n$ variables corresponds to a function 
$F:\{0,1\}^{n} \rightarrow \{0,1\}$ in the natural way.
The Boolean satisfiability problem (SAT) is the constraint satisfaction problem of determining whether a given Boolean formula $F$ in conjunctive normal form (CNF) has a satisfying assignment, i.e.~a bit string $y\in\set{0,1}^n$ such that $F(y)=1$. 
A CNF formula is a conjunction (logical `and') of disjunctions (logical `or') over variables or their negations (jointly called literals).
For ease of manipulation, we view CNF formulas  as a set of clauses (a conjunction), where each clause is a set of $k$ literals (a disjunction),
with positive or negative polarity (i.e.~a Boolean atom $x$, or its negation $\bar{x}$).
In the $k$-SAT problem, the formula $F$ is a $k$-CNF formula, meaning that all the clauses have $k$ literals.\footnote{Without loss of generality, we will allow that the formula has clauses with fewer literals, but assume that at least one clause has $k$ literals, and no clauses have more.}

It is well known that solving SAT for $3$-CNF formulas ($3$-SAT) is an NP-complete problem. As a canonical problem, it is highly relevant both in computer science~\cite{biere2009handbook} and outside, e.g.~finding ground energies of classical systems can often reduced to SAT (see~e.g.~\cite{ipc-book}).

Since it is a disjunction, a clause $C \in F$ evaluates to true (1) on a (partial) assignment $\vec{x}$, 
 if at least one of the literals in $C$ attains the value true (1) according to $\vec{x}$.
A formula $F$ evaluates to true on an assignment $\vec{x}$, 
i.e.~$F_{\vec x} = 1$, if all its clauses evaluate to true on this partial assignment.
Note that to determine the formula evaluates to true on some (partial) assignment, the assignment has to fix of at least one variable per clause of $F$, i.e.~the assignment should be linear in the number of variables. On the other hand, a formula can evaluate to false (0) on very sparse partial assignments: it suffices that the given partial assignment renders any of the clauses false
by setting $k$ variables, i.e.~a constant amount.
In such a case, we say that the partial assignment \emph{establishes a contradiction}. This is the equivalent of saying that the constraint formula is \emph{inconsistent}.

Given a partial assignment $\vec{x}\in \{0,1, \ast\}^n$,
we denote with $F_{| \vec{x}}$ the subfunction of $F$ restricted to that assignment.
This subfunction can also be obtained by setting the variables in the formula $F$ to the values specified in the partial assignment. For CNF formulas this means the following: for an assigned variable $x_j$, for every clause of $F$ where  $x_j$ appears as a literal (of some polarity), and the setting of $x_j$ renders the corresponding literal true, that clause is dropped in $F_{| \vec{x}}$. For every clause of $F$ where  $x_j$ appears as a literal (of some polarity), and the setting of $x_j$ renders the corresponding literal false, that literal is removed from the corresponding clause.
Finally, an assignment $\vec{x}$ that establishes a contradiction introduces an \emph{empty clause}, i.e.~$\emptyset \in F_{|\vec{x}}$, whereas a satisfying assignment $\vec{y}$ yields an \emph{empty formula}, i.e.~$F_{|\vec{y}} = \emptyset$.
We will call such formulas, where some variables have been fixed ($F_{|\vec{x}})$, \concept{restricted formulas}, and say that $F_{|\vec{x}}$ is a restriction of $F$ by the partial assignment $\vec{x}$.
We will also use a similar notation for setting literals to true. Given a literal $l \in \{x_i, \bar{x}_i\}$ we write $F_{| l}$ for the restricted formula given by $F$ with the value of the variable $x_i$ set such as to render $l$ true.

Additionally, we say a formula $G$ is semantically entailed by a formula $F$ if all satisfying assignments (models) of $F$ also satisfy $G$, denoted $F \models G$. 
In general $G$ can be any formula over a subset of variables of $F$, although we only consider cases where $G$ consists of a single literal $l \in \{x_i, \bar{x}_i\}$.

Finally, \concept{resolution} is a logical inference rule used for satisfiability proving~\cite{davis-putnam}. Basic resolution rules takes  two clauses $(x \vee A)$ and $(\bar{x} \vee B)$, where $A,B$ are clauses,
and derives a clause $(A\vee B)$, since both clauses are satisfiable if and only the inferred clause is. The clausal inference rule is refutation complete, meaning that a complete, recursive search over all inferred clauses will find a refutation if the original formula, i.e.~a set of clauses, is unsatisfiable.
In \concept{unit resolution}, we have $A = \emptyset$ (or $B=\emptyset$) in the above, i.e., one of the input clauses is a \concept{unit clause} $(x)$ (or  $(\bar x)$). 


\subsection{The DPLL algorithm family}
\label{sub:dpll}

The algorithm of Davis~Putnam~Logemann~Loveland (DPLL) is a backtracking algorithm which recursively explores the possible assignments for a given formula, applying (incomplete) reduction rules along the way to prune the search space~\cite{og-dll}. Originally designed to resolve the memory intensity of solvers based only on resolution (the DP algorithm uses a complete resolution system~\cite{davis-putnam}), this backtracking-based algorithm can now be found in some of the most competitive SAT solvers~\cite{heule2018proceedings}. DPLL was also generalized to support various theories, including linear integer arithmetic and uninterpreted functions~\cite{abstract-dpll}, and is using consequently in many combinatorial domains~\cite{biere2009handbook} and automated theorem proving~\cite{blanchette2013extending}.


DPLL branches on assignments to individual variables $x$ and $\bar x$.
It uses \emph{reduction} rules based on resolution to prune the resulting search tree.
In order to prune a branch, a reduction rule $\mathcal{R}$ (efficiently) \emph{under-estimates} whether a literal $l$ is entailed by a formula $F$, i.e.,
\[
F~\models_{\mathcal{R}} l \implies F \models l.
\]
We consider two rules:
\begin{itemize}
 \item Unit resolution: if there exists a unit clause $\set{x}$ (or $\set{\bar x}$),  then set value $x$  to true (false). 
 \item Pure literal: if variable $x$ only appears positively (negatively), then set $x$ to true (false).
\end{itemize}

\begin{figure}
\begin{algorithm}[H]
  \caption{The DPLL algorithm}
  \label{alg:dpll}
   \begin{algorithmic}[1]
   \Function{DPLL}{$F\colon CNF$}
   	\IfThen{$F = \emptyset$}{\Return{1}} \label{dpll:return}
   	\IfThen{$\emptyset \in F$}{\Return{0}} \label{dpll:bt}
   	\State $x \gets$ next var {acc. to branching heuristic}\label{dpll:heur}
   	\If{$F~\models_h~(x=c)$ \textbf{with} $c\in \set{0,1}$}
       \State \Return{ DPLL($F_{|x=c}$)}\label{dpll:pos}
	\Else
       \State \Return{ DPLL($F_{|x}$) $\vee$ DPLL($F_{|\bar{x}}$)} \label{dpll:try}
   \EndIf
   \EndFunction
   \end{algorithmic}
\end{algorithm}
\end{figure}

Algorithm~\ref{alg:dpll} corresponds to the DPLL algorithm in its classical form. 
It recursively assigns truth values to the variables of $F$ in a heuristically chosen order (see Line~\ref{dpll:heur}), while eagerly simplifying the formula (see $F_{|x}, F_{|\bar x}$ at Line~\ref{dpll:pos},\ref{dpll:try}). 
Branches are possibly pruned with the reduction rule at Line \ref{dpll:pos}. 
When the formula becomes (un)satisfiable, the recursion backtracks (Line \ref{dpll:return},\ref{dpll:bt}). Note that a short circuiting or at Line \ref{dpll:try} will cause the algorithm to terminate early when the formula is found to be satisfiable.

The algorithm explores a subtree of the full binary search tree. We can consider each node of the tree as uniquely labeled with a (restricted) formula or a partial assignment, as explained in more detail in \autoref{sec:hybrid}.
We call the nodes where pruning happens, the \concept{forced nodes}  and the others \concept{guessed nodes.}

A key element of the (heuristic) efficiency of backtracking
algorithms, is its ability to simplify subproblems as early as possible in order to prune the tree search. This is what DPLL achieves through its branching heuristic, which determines the order in which variables should be considered.
Notice that the order may change in the different branches of the tree, for example, because different unit clauses appear under different assignments in the concrete heuristic discussed above. 
On the other hand, one can of course always implement a static branching heuristic that simply selects variables according to a predetermined order, as we do in the next section for PPSZ.

\subsection{The PPSZ algorithm family}
\label{sub:ppsz-dnc}



The best exact algorithms for the (unique) $k$-SAT problem have for many years been based on the PPSZ (Paturi, Pudl\'ak, Saks, and Zane~\cite{og-ppsz}) algorithm. 
The PPSZ algorithm is a Monte Carlo algorithm, i.e.~it returns the correct answer with high probability using randomization, although a derandomized version also exists \cite{rolf2005derandomization}.
Like DPLL, PPSZ uses heuristics to force or guess variables. The upper bound on the runtime is derived from the probability that a variable is guessed, as the expected number of guesses dictates the size of the search space.
We first explain the intuition based on a simple version that achieves a runtime of 
$\staro(2^{(1-\nicefrac{1}k+\varepsilon) n})$, for $\varepsilon > 0$.\footnote{In the remainder of the text, we will be using the standard notation for ``lazy'' scalings: with the superscript $\ast$, e.g.~$\staro$, we denote scalings which ignore only polynomially contributing terms (relevant when the main costs scale exponentially), just as tilde, e.g. $\tildeo$ highlights we ignore logarithmically contributing terms (when the main costs are polynomial in the relevant parameters).}

Assuming there is only a single satisfying assignment $\vec u$, it is easy to see that the unit resolution heuristic can force on average $1/k$ variables for a random variable permutation $\pi \in S_n$: In the first place, for every variable $x$ there has to be a clause $C$ which forces the value of $x$, i.e., where the only true literal under assignment $\vec u$ is $x$ or $\bar x$, otherwise there would be multiple satisfying assignments. Second, if the variables are assigned according to the order $\pi$ and $\pi$ places $x$ last with respect to the other $k-1$ variables~of~$C$, then unit resolution will indeed identify the forced variable. As a result, a PPSZ algorithm merely has to search a space proportional to the number of guesses. For a $k$-SAT problem we let $\gamma_k$ denote the expected fraction of the $n$ variables which are guessed by PPSZ, such that $\gamma_k n$ is the expected number of guessed variables. The amount of variables which need to be guessed depends on the heuristic $h$ used to check if $F \models_h l$. If the heuristic consists solely of unit resolution,\footnote{The PPSZ algorithm where unit resolution is the only heuristic is in fact the PPZ algorithm~\cite{scheder-steinberger}.} then for unique\footnote{This approach can be generalized to a setting with multiple satisfying assignments~\cite{scheder-steinberger}.} $k$-SAT PPSZ needs to make at most $(1 - 1/k)n$ guesses.
When unit resolution is replaced with the more general $s$-implication (discussed below) something equivalent to the original PPSZ algorithm is obtained.

 

\begin{figure}
\hspace{-1em}
\begin{algorithm}[H]
  \caption{dncPPSZ$_s$($F$,$\pi$)}
  \label{algorithm:dnc-ppsz}
   \begin{algorithmic}
   \IfThen{$F = \emptyset$}{\Return{1}}       
   \IfThen{${\emptyset \in F}$ \textbf{or} $n - \sizeof{\pi} > (\gamma_k+\varepsilon) n $}{\Return{0}}   
   \State $x,~ \pi \gets \pi[0],~ \pi[1\dots] $ \Comment{first var in $\pi$, postfix}  
   	\If{$F \models_s (x = c)$ \textbf{with} $c \in \set{0, 1}$}
       \State \Return{dncPPSZ$_s$($F_{|x=c},\pi$)}
	\Else
       \State \Return{dncPPSZ$_s$($F_{|x}$,$\pi$)$\vee$dncPPSZ($F_{|\bar{x}}$,$\pi$)}
   \EndIf
   \end{algorithmic}
\end{algorithm}
\end{figure}

There are currently various versions of PPSZ
(see e.g.~\cite{rolf-ppsz,hertli-ppsz,hertli-breaking,biased-ppsz}),
but recent analysis~\cite{scheder21}
shows that the original PPSZ~\cite{og-ppsz} still has the best worst-case runtime of  $2^{(\gamma_k+\varepsilon) n}$, where for example $\gamma_3 \approx 0.386229$.
The original PPSZ can be seen as using a reduction rule
called $s$-implication during the tree search~\cite{hertli-ppsz}.
A literal $l= x,\bar x$ is $s$-implied, written $F \models_s l$, if there is a sub formula of $s$ clauses, which implies it, i.e.~$G\subseteq F$ with $\sizeof{G}=s$ and $G \models l$.
In other words, if all satisfying assignments of $G$ set the variable $x$ to the same value.

The original PPSZ algorithm evaluates random assignments on random variable orders and is not immediately amenable for (quantum) backtracking.
Here we introduce a backtracking version of PPSZ that offers the same runtime guarantees as recent PPSZ versions, while allowing for the benefits of backtracking on a quantum computer (i.e., heuristically pruning search). Algorithm~\ref{algorithm:dnc-ppsz} shows dncPPSZ, which is similar to DPLL with $s$-implication as reduction rule. It takes a fixed (random) variable order $\pi$ and prunes branches after $(\gamma_k+\varepsilon) n$ guesses have been performed. Since the expected number of guesses over all variable orders equals $\gamma_k n$
for $s$-implication~\cite{scheder-steinberger}, Markov's inequality tells us that, for any $\varepsilon > 0 \in \Theta(1)$, with constant probability we can find the satisfying with fewer than $(\gamma_k+\varepsilon) n$ guesses for a random~$\pi$.
Hence we obtain \autoref{prop:ppsz-is-dnq} (see Appendix~\ref{proof:ppsz-is-dnq} for more detail and a proof).


\begin{proposition}
\label{prop:ppsz-is-dnq}
Executing dncPPSZ a constant number of times, using random variable orders $\pi$, is sufficient to decide satisfiability. The run-time of dncPPSZ is upper bounded by the run-time of the standard PPSZ algorithm.
\end{proposition}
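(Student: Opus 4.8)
The plan is to prove the two assertions separately, and in doing so to correctly attribute the exponential factor $2^{(\gamma_k+\varepsilon)n}$: in standard PPSZ it appears as the number of repetitions, whereas in dncPPSZ it should appear as the size of the tree explored in a \emph{single} execution, leaving only $O(1)$ repetitions. The single fact I would import as a black box from the PPSZ analysis~\cite{og-ppsz,hertli-ppsz} is the bound on the expected number of \emph{guessed} (i.e.\ non-$s$-implied) variables: fixing a reference satisfying assignment $y^{\ast}$, for a uniformly random permutation $\pi \in S_n$ one has $\mathbb{E}_\pi[G(\pi,y^{\ast})] \le \gamma_k\, n$, where $G(\pi,y^{\ast})$ counts the variables along the root-to-$y^{\ast}$ path on which dncPPSZ branches (rather than forces) under the ordering $\pi$.

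First I would establish the correctness claim with a \emph{constant} number of repetitions. By Markov's inequality, $\Pr_\pi[G(\pi,y^{\ast}) > (\gamma_k+\varepsilon)n] \le \gamma_k/(\gamma_k+\varepsilon)$, so with probability at least the constant $p := \varepsilon/(\gamma_k+\varepsilon) > 0$ the path to $y^{\ast}$ uses at most $(\gamma_k+\varepsilon)n$ guesses. The structural point that separates dncPPSZ from standard PPSZ is that, for a fixed $\pi$, dncPPSZ branches on \emph{both} truth values at every guessed variable instead of sampling one, so it exhaustively explores the whole tree of guess-sequences up to the budget, pruning only contradictions and over-budget branches. Hence whenever $G(\pi,y^{\ast}) \le (\gamma_k+\varepsilon)n$, the leaf $y^{\ast}$ lies in the explored tree and is not pruned, so the execution returns \textbf{True}. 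A single dncPPSZ execution therefore succeeds with probability at least $p$, \emph{independent of $n$}, and running it $\ell$ times on independent random permutations fails with probability at most $(1-p)^\ell$; thus $\ell = O(\log(1/\delta)) = O(1)$ executions reach success probability $1-\delta$. This is precisely where the exponential leaves the repetition count.

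Next I would bound the runtime. For a fixed $\pi$, every root-to-leaf path of the pruned tree contains at most $(\gamma_k+\varepsilon)n$ branching (guessing) nodes, since forced variables do not branch; consequently the tree has at most $2^{(\gamma_k+\varepsilon)n}$ leaves and $O(n\,2^{(\gamma_k+\varepsilon)n})$ nodes, and each node costs $\poly(n)$ to process ($s$-implication checks range over size-$s$ subformulas with $s$ constant, plus formula restriction). One execution thus costs $2^{(\gamma_k+\varepsilon)n}\poly(n)$, and the $O(1)$ repetitions preserve this. Standard PPSZ, in contrast, succeeds per trial with probability $\approx 2^{-(\gamma_k+\varepsilon)n}$ and so must amplify over $\approx 2^{(\gamma_k+\varepsilon)n}$ trials of cost $\poly(n)$ each; the two totals coincide to leading exponential order. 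I would phrase the comparison as: the same exponential factor is the per-run tree size for dncPPSZ and the trial count for standard PPSZ, so dncPPSZ's total runtime does not exceed that of standard PPSZ.

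The main obstacle---and the place where a naive argument fails---is the correctness half: a constant number of repetitions is justified \emph{only} by a constant per-permutation success probability, which in turn requires the expectation bound $\mathbb{E}_\pi[G(\pi,y^{\ast})] \le \gamma_k n$ together with Markov, and \emph{not} the per-path success $2^{-(\gamma_k+\varepsilon)n}$. In particular the lossy step $\Pr_\pi[\text{success}] \ge \mathbb{E}_{\pi}[\Pr_g[\text{success}\mid\pi]]$ only returns the exponentially small bound and must be avoided. A secondary subtlety is the non-unique-SAT case, where I would invoke the PPSZ/Hertli guarantee for a suitably chosen reference assignment $y^{\ast}$ so that the expectation bound still holds, and confirm the leaf count $\le 2^{(\gamma_k+\varepsilon)n}$ to certify that the per-run cost never exceeds standard PPSZ's total.
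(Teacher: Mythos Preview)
Your proof is correct and follows the paper's overall strategy: import the PPSZ bound on the expected number of guessed variables, deduce via a Markov argument that a random ordering is ``good'' with constant probability, and bound the dncPPSZ tree size by $2^{(\gamma_k+\varepsilon)n}$ via the branching-number cap. The one technical difference is where Markov is applied. You apply it directly to $G(\pi,y^{\ast})$, obtaining constant per-permutation success probability $\varepsilon/(\gamma_k+\varepsilon)$; the paper's appendix instead applies Markov to the minimal truncated tree size $\alpha(\pi)\le 2^{G(\pi,y^{\ast})}$, asserting $\mathbb{E}_\pi[\alpha(\pi)]\in O(2^{(\gamma_k+\varepsilon)n})$ to conclude exponentially small per-permutation failure $O(2^{-\varepsilon n})$. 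Your route is arguably cleaner, since the expectation you invoke, $\mathbb{E}_\pi[G]\le\gamma_k n$, is exactly the standard PPSZ lemma, whereas the bound on $\mathbb{E}_\pi[\alpha]$ does not obviously follow from the success-probability statement the appendix cites (a lower bound on $\mathbb{E}[2^{-G}]$ does not in general control $\mathbb{E}[2^{G}]$). Either conclusion suffices for the proposition as stated, since constant per-run success already yields $O(1)$ repetitions; the paper's stronger claim would additionally make a single repetition succeed with high probability.
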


We highlight that the PPSZ algorithm involves two steps: the ``core'' of the algorithm which is the \emph{PPSZ tree search} performed by the dncPPSZ algorithm, which takes an ordered formula on input (i.e.~the variable ordering is fixed, by e.g. the natural ordering over the indices). And an outer loop that repeats the PPSZ tree search a constant number of times, randomizing the order for each call. This is critical in our subsequent analysis.

\subsection{Quantum algorithms for tree search}
\label{sec:quantum}

The extent to which quantum computing can help the exploration of trees, and more generally, graphs is a long-standing question with infrequent but noticeable progress.
In the context of backtracking (for concreteness, for SAT problems), up until relatively recently, the best methods involved Grover search~\cite{grover1996fast} over all possible satisfying assignments (the leafs in the full search tree).
In such an approach, we would introduce a separate register of length $d$ equal to the number of variables and a \concept{search predicate}, implementing the satisfiability criterion based on the input formula.
This brute-force approach however yields a redundant search over branches which would be pruned away by the resolution rules, and, in the worst case may force a quantum computer to search an exponentially larger space than the classical algorithm would.

Nonetheless, Grover provides an advantageous strategy whenever the classical search space is larger than $2^{\nicefrac n2}$, hence this was the method used in previous work on hybrid divide-and-conquer strategies~\cite{dnq-eppstein,dnq-schoning}.
One example is Sch\"{o}ning's algorithm for SAT, which allows a full quadratic quantum speed-up~\cite{ambainis2004quantum} (but was since surpassed by PPSZ).
A particular advantage of Grover's search is that it is frugal regarding time and space: beyond what is needed to implement the oracle (the search predicate), it requires at most one ancillary qubit, and very few other gates.

More recently, Montanaro~\cite{qbacktracking}, Ambainis \& Kokainis~\cite{ambainis-kokainis}, and Jarret \& Wan~\cite{jarret-wan} have given quantum-walk based algorithms which allow us to achieve an essentially full quadratic speed-ups in the number of queries to the tree (when trees are exponentially sized). We will refer to the underlying method as the \emph{quantum backtracking method}.

The quantum backtracking method can be applied whenever we have access to local algorithms which specify the children of a given vertex, and whenever we can implement the search predicate, indicating the satisfiability of a (partial) assignment.
At the heart of the routine is the construction of a Szegedy-style walk operator $W$ over the bipartite graph specified by the even and odd depths of the search tree (details provided in Appendix~\ref{app:qpe}); Montanaro shows that the spectral gap of $W$ reveals whether the underlying graph contains a marked element (a satisfying assignment, as defined by the search predicate $P$). The difference in the eigenphases of the two cases dictates the overall run-time, as they are detected by a \concept{quantum phase estimation} (QPE) overarching routine. The overall algorithm calls the operator $W$ no more than $O(\sqrt{Tn} \log(1/\delta))$ times, for a correct evaluation given a tree of size $T$, over $n$ variables (depth), except with probability $\delta$. This is essentially a full quadratic improvement when $T$ is exponential (even superpolynomial will do) in~$n$. For the algorithm to work, one assumes that the tree size is known in advance, or at least, a good upper bound is known.

{The original paper on quantum backtracking implements DPLL with the unit rule~\cite{qbacktracking}, and led to a body of work focused on improvements~\cite{ambainis-kokainis,jarret-wan} and applications~\cite{practical-qbacktracking,moylett-linden-montanaro,ambainis-kokainis,campbell-khurana-montanaro}.}

\paragraph*{Tree search implementation.}
The implementation details are important for the discussion of the efficiency of later approaches,
so we provide a framework for the implementation of quantum backtracking. If the functions specified in the framework are implemented (reversibly),
then quantum backtracking can be implemented with almost no space overhead, as we provide a space efficient implementation of QPE (see \autoref{app:qpe}). The runtime overhead is multiplicative,
so any polynomial implementation of the framework will do since we focus on obtaining a reduction in the exponent in the hybrid method.

The trees $\mathcal T$ that we consider are characterized by the backtracking functions specifying the local 
tree structure for some input formula $F$, as described in Section~\ref{sec:background}. The nodes $v$ of the tree contain the information to represent the restriction $F_{|\vec x}$ for some partial variable assignment $\vec x$. In the classical approach (DPLL) each node is represented by the restricted formula itself (as a clause database~\cite{biere2009handbook}).

Here, we highlight a duality between partial assignments $\vec{x}$ and corresponding restricted formulas $F_{|\vec{x}}$ -- instead of defining the tree in terms of partial assignments, one can define it in terms of restricted formulas, and we will often use this duality as storing partial assignments for a fixed formula requires less memory than storing a whole restricted formula.
This duality is also important as it allows us to see backtracking algorithms as divide-and-conquer algorithms: each vertex in a search tree -- the restricted formula  $F_{|\vec{x}}$ -- corresponds to a restriction of the initial problem, and children in turn correspond to smaller instances where one additional variable is restricted. 
Consequently backtracking algorithms can often be neatly rewritten in a recursive form. 

To implement the walk operator for quantum backtracking, it suffices to be able to implement a unitary operator which given a vertex $v$, produces the children of $v$ and one that decides whether the partial assignment corresponding to node $v$ makes the formula true.
For this, we specify the functions $ch1,ch2, chNo, P$ below. 
Appendix~\ref{app:backtracking} shows how the walk operator can be implemented
based on these functions.
\begin{itemize}
 \item a function $ch2(v,b)$, which takes on input a vertex $v$, and returns one of the children, specified by the bit $b$;
 \item a function $ch1(v)$ which returns the single child if $v$ is forced;
 \item a function $chNo(v)$, which returns whether $v$ has zero, one or two children, and
 \item a search predicate $P$ which identifies leafs and solutions, i.e., $P(\vec x) = b$ if
$F_{|\vec x} = b$ with $b \in \set{0,1}$, and $P(\vec x) = \bot$ otherwise.
\end{itemize}

As in many cases, the classical, non-reversible implementations are efficient, the runtime of the reversible version is often not a problem, as long as it stays polynomial (although even sub exponential will suffice).
However, in the context of hybrid methods, the space complexity of these implementations becomes vital.
Standard approaches efficiently implement a data structure which can represent the last vertex visited, and update this structure with the next vertex reversibly. The reversibility prevents deleting information efficiently, so the simplest solution is to store the entire branching sequence, reconstructing the state from this succinct representation.
Indeed, the technical core of the papers~\cite{dnq-eppstein,dnq-schoning}, and parts of the remainder of this paper, focus on low space algorithms solving this problem.



\paragraph*{Online algorithms.} Note, the quantum algorithm for quantum backtracking always performs $O(\sqrt{Tn} \log(1/\delta))$ queries no matter which vertex/leaf is satisfying, if any.
In contrast, classical backtracking is an \emph{online algorithm}, meaning that it can terminate the search early when a satisfying assignment is found.
This naturally depends on the traversal order of the tree, which is also specified by the algorithm (and perhaps the random sequence specifying any random choices), and thus the maximal speed-ups are only achieved for the classical worst cases when either no vertices are satisfying, or when the very last leaf to be traversed is the solution.

In~\cite{ambainis-kokainis}, the authors provide an extension to quantum backtracking, which allows one to estimate the size of the search tree. Moreover, it can also estimate the size of a search space corresponding to a partial traversal of a given classical backtracking algorithm, according to a certain order.

With this, it is possible to achieve quadratic speed ups not in terms of the overall tree, but rather the search tree limited to those vertices that a classical algorithm would traverse, before finding a satisfying assignment.
In this case, we have a near-full quadratic improvement, whenever this effective search tree, which depends on the search order, is large enough (superpolynomial).

\subsubsection{Grover vs quantum backtracking in tree search}
\label{sec:gvb}

One can employ Grover search-based techniques to explore trees of any density and shape. 
Yet in many cases using Grover can be significantly slower than by employing the quantum backtracking strategy, or even classical search.
While the query complexities of classical and quantum backtracking depend on the tree size, the efficiency of quantum search based on Grover is bounded by the maximal number of branches that occur in any path from root to the leaf of the tree. The branching number is also vital in space complexity analyses.
\begin{definition}[Maximal branching number]\label{def:br}
Given a tree $\mathcal{T}$, the maximal branching number~$br(\mathcal{T})$ is the maximal number of branchings on any path from root to a leaf. 
If a sub-tree $\mathcal{T}',$ is specified by its root vertex $v$, with $br(v)$ we denote  $br(\mathcal{T}')$.
\end{definition}

Here is a sketch of how Grover's search would be applied in tree search leading to the query complexity dependence on branching numbers.  For the moment, we assume that all the leaves of the tree are at distance $n$ from the root; this can be obtained by attaching sufficiently long line graphs to each leaf occurring earlier. This may cause a blow up of the tree by at most a factor of $n$, but this will not matter in the cases where trees are (much) lager, i.e.~exponential.

Our application of Grover's search over a rooted tree structure follows the following principle. We make use of an ``advice'' register, which tells the algorithm which child to take, when two options are possible, i.e.~select the $b$ parameter of $ch2$ when $chNo(v) = 2$.
Let $v_1, v_2, \ldots, v_n$ be a path from the root $v_1 = r$ to a leaf $v_n$.
For each guessed $v_i$ along the path, i.e.~where $chNo(v_i) = 2$, a subsequent unused bit of the advice string determine the choice. Finally, the $n^{th}$  vertex is checked by the search predicate $P$. For this process to be well defined, we clearly require as many bits in the advice register as the largest number of branches along any path.


We now show that the size of the advice can also be independent from the actual tree size.
To do so, we first discuss tree shapes for which Grover exhibits  extremal behavior.

Take a ``comb'' graph for instance, where one edge is added to every node of a line graph, except to the last node. This graph has $2n-1$ vertices, and $(n-1)$ branches whereas a full binary tree with $2^n$ vertices has the same maximal number of branches (this disparity persists even when we ``complete'' the comb graph by extending the single line to ensure every path is length $n$).
The Grover-based algorithm thus always introduces an effective tree which is exponential in the number of branchings.

Note this need not lead to exponential search times; e.g., in the example of the comb graph, if the leaf-child of the root is satisfying,
then half ($2^{n-1}=2^n/2$) 
of the strings in the advice register will result in finding the satisfying assignment,\footnote{If it holds that $P(v)=1$ for node $v$, then $P(v')=1$ for all descendants of $v'$ of $v$.}
leading to an overall constant query complexity. 

More generally, in the case of search with a single satisfying assignment $v$, the worst-case query complexity using Grover's approach, will always be $O(2^{\nicefrac b2})$, where $b$ is the number of branches on the path from the root to $v$, i.e, $b = br(r) - br(v)$,
where $r$ is the root of the tree.
This is because all the  vertices below $v$ will represent solutions, as explained in the previous paragraph.

Comparing Grover's search with backtracking in the sense of query complexity, we find the following: any binary tree of size $O(2^{\gamma n})$ must contain a path from the root to a leaf which has more than $\gamma n$ branches -- if the tree contains no more than $\gamma n$ branches along any path, i.e.~its branching number is $\gamma n$, then Grover's method will achieve a run-time of $O^\ast(2^{\gamma n/2})$ as well.

In other words, a limitation on the tree size -- which upper-bounds the classical search run-time -- directly limits the quantum backtracking query complexity, whereas it does not (as directly) limit Grover-based search. However,  as we will show in the case of PPSZ, when the tree size is estimated on the basis of ($\gamma n$) maximum branching numbers, this does provide a way to directly connect classical search with Grover-based method.

Unlike backtracking, Grover can also be used as an online algorithm, as discussed earlier in this section.
In the majority of this paper, we will be concerned with worst-case times, so the online methods of~\cite{ambainis-kokainis} will not be critical. Although, for practical speed-ups, they certainly are. We reiterate that all the results that we will present, can accommodate these tree size-estimation based methods of~\cite{ambainis-kokainis}.

\subsection{The hybrid divide-and-conquer method}
\label{sub:hybrid-dnc}

The hybrid divide-and-conquer method was introduced to investigate to which extent smaller quantum computers can help in solving larger instances of interesting problems. Here the emphasis is placed on problems which are typically tackled by a divide-and-conquer method. This choice is one of convenience.
 Any method which enables a smaller quantum computer to aid in a computation of a larger problem must somehow reduce the problem to a number of smaller problems. How this can be done is critical for the hybrid algorithm performance.
 
But in divide-and-conquer strategies, there exists an obvious solution. Divide-and-conquer strategies recursively break down an instance of a problem into smaller sub-instances, so at some point they become suitably small to be run on a quantum device of (almost) any size.\footnote{More precisely, the device must large enough to handle any size instance of a given problem, which is in not a trivial condition.}

In previous work on the hybrid divide-and-conquer method~\cite{dnq-eppstein,dnq-schoning} this approach was used for a de-randomized version of the algorithm of Sch\"{o}ning, and for an algorithm for finding Hamilton cycles in degree-3 graphs. In general, in these works (and the present work) the question of interest is to identify criteria when speed-ups, in the sense of provable asymptotic run-times of the (hybrid) algorithms, are possible. 


{
\subsubsection{Quantifying speed-ups of hybrid divide-and-conquer methods}
}
For the quantum part of the computation,
the complexity-theoretic analysis of meta-algorithms (more precisely, oracular algorithms),
 such as Grover and quantum backtracking, predominantly measures \emph{query complexity}. This is the number of calls to the e.g. Grover oracle, or the walk operator, respectively, i.e.~the black boxes that implement the predicate detection, and the search tree structure. 

As the present work is only concerned with exponentially sized trees, and sub-exponential time sub-routines, the complexity measure we use for the classical algorithm $\mathcal{A}_C(n)$ is the size of the tree (i.e, the classical query complexity) and for quantum algorithm the query complexity. 
In the hybrid cases, we will thus measure the totals of classical and quantum query complexities, treated on an equal footing.

We are thus interested in the (provable) relationships between quantities Time($\mathcal{A}_C(n)$) describing the run-time of the classical algorithm given instance size $n$, and 
Time($\mathcal{A}_H(n,\kappa)$),\footnote{As we will discuss later, in the cases of the quantum algorithms we will consider, the relevant notion of instance size may be different from what is relevant for the classical algorithm complexity.  However to compare the run times, we will always have to bound both hybrid and classical complexities in terms of same quantities. } describing the run-time of the hybrid algorithm, having access to a quantum computer of size $m= \kappa n$  with $\kappa = O(1)$.\footnote{Since we consider speed-ups in the sense of asymptotic run-times, considering smaller sizes, e.g. constant sized quantum computers, or log sized quantum computers makes little sense as these are efficiently simulatable.}
We focus on exact algorithms for NP-hard problems (which typically have exponential runtime).
\ \\
\begin{definition}[Hybrid speed-ups]
\label{def:speedups}
\ \\
We say that \emph{genuine} speed-ups (i.e.~polynomial speed-ups) are possible using the hybrid method, if there exists a hybrid algorithm such that
\EQ{
\textup{Time}(\mathcal{A}_H(n,\kappa)) = O(\textup{Time}(\mathcal{A}_C(n))^{1-\epsilon_\kappa} ),
}
for a constant $\epsilon_{\kappa}>0$.
If such an $\epsilon_{\kappa}$ exists for all $\kappa>0$, then we say the speed-up is \emph{threshold-free}.
\end{definition}

What originally sparked the interest in hybrid algorithms, is the fundamental question whether \emph{threshold-free} speed-ups are possible at all. This was answered in the positive in the previous works mentioned above~\cite{dnq-eppstein,dnq-schoning}.

In the above, we have assumed that the time complexity can be fully characterized in terms of the instance size $n$. In general, as discussed in Section~\ref{sec:gvb}, the complexities may be precisely established only given access to a number of parameters, such as, search tree size or even less obvious measures like the location of the tree leaf in the search tree.

\subsubsection{Limitations of existing hybrid divide-and-conquer methods}
\label{subsub:limitations}
The key impediment to threshold-free speed-ups is the space-complexity of the quantum algorithm for the problem.
To understand this, assume that with $\kappa' n$ we denote the size of the instance we can solve on a $\kappa n$ sized device (so $\kappa n = \textup{Space}(\mathcal{A}_Q(\kappa' n))$, where $\textup{Space}(\mathcal{A}_Q(n))$ denotes the space complexity of the quantum algorithm).
If the space complexity is super-linear, then $\kappa'$ itself becomes dependent on $n$, and in fact, decreasing in $n$. In other words, as the instance grows, the \emph{effective fraction of the problem} that we can delegate to the quantum device decreases. Then, in the limit, all work is done by the classical device, so no speed-ups are possible (for details see~\cite{dnq-eppstein,dnq-schoning}).

In~\cite{dnq-eppstein}, these observations also lead to a characterization of when genuine speed-ups are possible for recursive algorithms, whose run-times are evaluated using standard recurrence relations. 
Notably, none of the classical algorithms employed backtracking nor early pruning of the trees. 
These properties ensured that the search space could be expressed as a sufficiently dense trees, ensuring that no matter where we employ the (faster) quantum device, a substantial fraction of the overall work will be done by the quantum device, yielding a genuine speed-up. 

Consequently, the main technical research focus in these earlier works were to establish highly space-efficient variants of otherwise simple quantum algorithms (in essence, at most linear in $n$), which are all based on Grover's algorithm over appropriate search spaces.
Both the de-randomized algorithm of Sch\"{o}ning~\cite{dnq-schoning}, and Eppstein's algorithm for Hamilton cycles~\cite{dnq-eppstein} traverse search trees dense enough that Grover-based search (which is itself space-efficient) over appropriate search spaces, yields a polynomial improvement. (Appendix~\ref{app:eppstein} describes an improvement over the hybrid solution of~\cite{dnq-eppstein} based on the new framework developed in Section~\ref{sec:hybrid}.)

An additional limitation is that any speed-up obtained through this hybrid approach is always relative to a fixed classical algorithm, as established in the previous works. This also implies that even in the event that we obtain a hybrid speed-up for the best algorithm for a given problem, this speed-up, and specially, any chance of a threshold free speed-ups disappear whenever a new faster algorithm is devised. In other words, the speed-ups are algorithm-specific. 
This will always be true, unless lower bounds are proved for the problem that these classical algorithms attack, in which case we may talk about \emph{algorithm independent} speed-ups.
 
In the following two sections we study hybrid divide-and-conquer in the context of search trees, and provide a number of new algorithms using the hybrid method.

\section{Hybrid divide-and-conquer for backtracking algorithms}
\label{sec:hybrid}

In the present work, we consider the hybrid divide-and-conquer method for algorithms which operate by searching over a suitable tree. This framework naturally captures backtracking algorithms, and recursive algorithms as studied in~\cite{dnq-eppstein}.
Our new framework focuses on scenarios in which the search is over unbalanced trees, unlike the Grover-based methods utilized in~\cite{dnq-eppstein,dnq-schoning}.

This section investigates the structure and properties of hybrid divide and conquer algorithms from the search tree perspective. Most notably, it introduces the concept of tree search decomposition. These considerations influence the design of the new hybrid divide-and-conquer algorithms discussed later.

The outline of this section is as follows.
In Subsection~\ref{sub:new-hybrid}, we discuss how specifically the tree structure influences whether or not (provable) polynomial speed-ups can be obtained on an intuitive level. Then, in Subsection~\ref{sec:criteria}, we provide a theorem providing a general, albeit not very operative, characterization of when speed-ups are possible.
In Subsection~\ref{sect:spec-intro}, we connect more closely the properties of the quantum algorithms with the tree structure identifying assumptions which allow for significant simplifications. In Subsection~\ref{sec:special}, quantitative speed-ups are proved for constrained (but still rather generic) cases.
These special cases are then exemplified in Section~\ref{sec:concrete}, and Section~\ref{sec:discussion} addresses some of the scenarios were these simplifying conditions cannot be met.

\subsection{Search tree structure and potential for hybrid speed-ups}
\label{sub:new-hybrid}

In previous works, the importance of the space efficiency of quantum algorithms was put forward as the key factor determining
whether asymptotic (threshold-free) hybrid speed-ups can be achieved, as discussed in detail in~\cite{dnq-eppstein}.
Naturally, in the hybrid backtracking setting, we will inherit the same limitations, and some new ones, which we focus on next.

A crucial aspect of our hybrid framework is that we always assume that the size of the quantum computer is $\kappa n$, where $\kappa \in (0,1]$ and $n$ is a \emph{natural} problem instance size. 
Note that the instance size is not unambiguous (as discussed in Section~\ref{sec:discussion}). Nonetheless, we assume this is well-defined, and that the classical backtracking algorithm we consider generates search trees of height (at most) $n$.\footnote{In the next section's examples, with the exception of the enhanced hybrid algorithm for Hamilton cycles on cubic graphs discussed in Appendix~\ref{app:eppstein}, $n$ specifically designates the number of variables of a given Boolean formula (and not the formula length itself).}

Next, to understand how the tree structure may influence the overall algorithm performance, we introduce the search tree decomposition. 
Consider a search tree $\mathcal{T}$ generated by algorithm $\mathcal A$ on some instance of size $n$, where every vertex in the tree denotes a sub-problem which can (in principle) be delegated to a quantum computer, running  the algorithm $\mathcal A_{Q}$ in a hybrid scheme, which we will denote with subscript~$H$. Note, $\mathcal A_{Q}$ can be the quantum backtracking or Grover version of~$\mathcal A$, or a related algorithm for solving the same problem.


In general, $\kappa n$ does not provide enough space for $\mathcal A_{Q}$ to be run on the entire problem, i.e.~on the root of the tree, but it can be run on some of sub-instances represented by vertices deeper in the tree.
This of course depends on the space efficiency features of $\mathcal A_{Q}$ -- the \emph{effective size} of the quantum computer we have, but we will not focus on this for the moment. 

We merely assume that whether or not  $\mathcal A_{Q}$ can be run on an instance $v$ given an $\kappa n$-sized device is monotonic with respect to the tree structure; that is, if it can be run on $v$, it can be run on all descendants of $v$.\footnote{It is conceivable that a sub-instance, as defined by a classical backtracking algorithm, for some reason takes more quantum space than the overall problem, and our formalism can be adapted to treat this case. However, in the cases of all quantum algorithms which we consider here, this will not be the case, hence we focus on this more intuitive scenario.}

With this in mind, we can identify the collection of $\jcut$ \textit{cut-off vertices} $\{c_k\}_{k=1}^{\jcut}$. That is, all the vertices that can be run on the quantum device, whose parents cannot be run on the same device due to size considerations. In principle, $\jcut$ may be zero for some trees.
These vertices correspond to the largest instances in the search tree where we can use the quantum computer.

Now the \emph{search tree decomposition} is characterized by the set of sub-trees $\{\mathcal{T}_j\}_{j=0}^{\jcut}$, where the subtree $\mathcal{T}_j$ is the entire sub-tree rooted at the cut-off vertex $c_j$, and where $\mathcal{T}_0$  denotes the tree rooted at the root of $\mathcal{T}$, whose leaves are parents of the cut-points $\{c_j\}_j$. We refer to $\mathcal{T}_0$ as the ``top of the tree'', and it is traversed by the classical algorithm alone.
Note that by ``gluing'' all the subtrees of $\{\mathcal{T}_j\}_j$ as the corresponding leaves of $\mathcal{T}_0$, we obtain the full tree $\mathcal{T}$.

With $T_j$ we denote the size of the tree $\mathcal{T}_j$.
The set $\{\mathcal{T}_j\}_j$ we call the search tree decomposition at cut-off $c$, and note that it holds
\EQ{
 T = T_0 + \sum_{1 \leq j \leq J_c} T_j.
}

Next, we briefly illustrate in terms of the search tree decomposition, on an intuitive level, in which cases speed-ups can be neatly characterized and achieved and the cases where the speed-up fails.

The classical algorithm will (in the worst case) explore the entire search tree requiring $T_j$ steps, for each sub-tree $\mathcal{T}_j$.
Note that $T_j$ characterizes the upper bound on the query complexity of the classical algorithm. Let us for the moment assume that this roughly equal to the overall run-time of the classical algorithm (we will discuss shortly when this is a justified assumption).

Further, assume that the quantum algorithm achieves a pure quadratic speed-up in run-time over the classical algorithm. Then
the {hybrid} algorithm will take $T_H = O^*(T_0 + \sum_{1 \leq j \leq J_c} \sqrt{T_j})$ time (queries), where $O^*(f(x)) \defn O(f(x)\cdot \poly(x))$
to ignore polynomial factors.
To achieve a genuine speed up in the query complexity (from which we will be able to discuss total complexity), it must hold that $T_H$ {is upper bounded by} $T^{1-\epsilon}$ for some~$\epsilon$ (see Definition~\ref{def:speedups}), 
 where $T = \sum_{k=0}^{J_c} T_k$ is the total tree size.
 
Now this actual speed-up clearly depends on the cut-off points, and the structure of the original tree, as nothing a priori dictates the relative sizes of all the sub-trees. 
We first consider the case where the tree is a balanced, complete binary tree: The tree size is exactly $2^{n}-1$ where $n$ is the tree height. Further, assuming that the quantum algorithm can handle $\gamma n$-sized instances (in terms of this natural instance size) on the $\kappa n$-sized device, we get the following decomposition:
 $T_j = 2^{\lfloor \gamma n \rfloor},\ j>0 $, $T_0 = 2^{n - \lfloor \gamma n \rfloor}$. For simplicity, we shall ignore rounding, with which we obtain:
 
\begin{align}
T_H &=  T_0 + \jcut \sqrt{T_1}\\
&=    2^{n - \gamma n} -1 +   2^{n - \gamma n} (2^{\gamma n/2}-1) \\
&\approx 2^{(1-\gamma/2) n}
\end{align}
It is clear that the above example constitutes a genuine speed up with $\epsilon = \gamma/2$, with a full quadratic speed-up when $\gamma =1 $, i.e.~when we can run the entire tree on the quantum computer.

At this point it also becomes clear how the space efficiency of the quantum algorithm comes into play: If it is linear in the natural size $n$, then $\gamma$ is a fraction of $\kappa$, and we obtain threshold-free asymptotic speed-ups. However, if the space required is super-linear in $n$, $\gamma$ becomes a decaying function of $n$, in which case all speed-ups vanish. We shall briefly discuss these aspects shortly, and for more on the space complexity constraints see \cite{dnq-schoning,dnq-eppstein}. Here we first focus on the issues stemming from the tree structure alone.

We can equally imagine a tree where $\mathcal{T}_0$ is a full tree of size $2^{n/2}$, and $\jcut =1$ with another full tree $\mathcal{T}_1$. In this case the run-time (more precisely, query complexity) of the classical algorithm is $2\times 2^{n/2}= O(2^{n/2})$, but so is the quantum query complexity:  $2^{n/2} + 2^{n/4} =O(2^{n/2})$, so no speed-up is obtained. 
In what follows, we define $\jleaves$ to count all the leaves of $\mathcal{T}_0$. This is to take into account that not only do the individual subtrees need to be large, but also need to be many in number, compared to $\mathcal{T}_0$ itself, which can be bounded by the number of leaves.

In the next section, we make the above structural-only considerations fully formal.

\subsection{Criteria for speed-ups from tree decomposition}
\label{sec:criteria}

In the following, we assume to have access to a quantum computer of size $m = \kappa n$,  we consider a classical backtracking algorithm $\mathcal A$, which generates a search tree~$\mathcal{T}$. Through the study of a particular tree, one can extrapolate to a family of trees induced by the problem.


We imagine designing a hybrid algorithm based on the classical algorithm $\mathcal A$, and a quantum algorithm $\mathcal{A}_Q$ used when instances are sufficiently small. 

We now need to characterize space and query complexities of the classical and quantum algorithm in terms of the properties of the trees.
\paragraph{Query complexity}
While the query complexity of the classical algorithm $\mathcal A$ is exactly the size of the search tree of the problem instance,
the query complexity of the quantum algorithm may be more involved. In the case of quantum backtracking, it is a function of the tree size and height (variable number). The situation simplifies when the trees are large enough (super-polynomial in size), 
because the query complexity then becomes equal to the square root of the tree size
 up to polynomial factors.

In the case of Grover-based search, an upper bound can established by brute forcing all branching decisions along all tree paths of maximum length $n$.
Therefore, the query complexity is exponential in  the max-branching number, 
which in general has no simple relationship to the tree size.

Further, we highlight that the connection between the query complexities overall run-time will not aways be possible, except when: 
\begin{enumerate}
 \item the complexities of the subroutines realizing a query are assumed to be polynomial~in~$n$,
 \item the query complexity is exponential, so polynomial factors can be ignored.
\end{enumerate}
We embed these assumptions in our main result.

\paragraph{Space complexity}
The space complexity of $\mathcal{A}_Q$ determines the cut-off points in the search tree. 
The space complexity may not be a simple function of the tree height, as it depends on how the vertices are represented in memory (for instance as satisfying assignment or as the branching choices in the case of a low maximum branching number).\footnote{While Section~\ref{sec:quantum} discussed the duality between formula and satisfying assignment representations, we will never employ the formula representation in the quantum algorithms due to the limited available memory.}

To achieve clean polynomial speed-ups, as discussed previously,  two main factors must conspire:
\begin{itemize}
 \item the space complexity must yield a tree search decomposition where many of the subtrees which will be delegated to the quantum computer are large enough for a substantial advantage to be even in principle possible,
 \item and the quantum algorithm must actually realize such a polynomial advantage.
 \end{itemize}


One thing to highlight in the above is the observation that the sub-trees $\mathcal T_j$ must not only be large on average, they need to be numerous, relative to the total size $T_0$ of the top of the tree.
In order to take both the average size and the amount of sub-trees $\mathcal{T}_j$ into account, we do the following: Instead of considering the sub-trees $\{\mathcal{T}_j\}_{j=1}^{J_c}$, with $0 \leq J_c \leq J_\ell$, we instead consider there to be $J_\ell$ sub-trees, of which $J_\ell - J_c$ have size~0.


We can now identify certain sufficient conditions ensuring that an overall polynomial speed-up is achieved.

\begin{theorem}
\label{th:metatheorem}
Suppose we are given the algorithms $\mathcal{A}$ and $\mathcal{A}_Q$, and a $\kappa$ (the relative size of the quantum computer) inducing a search tree decomposition $\{\mathcal{T}_j\}_{j=0}^{\jleaves}$ as described above, for a problem family, such that:
\begin{enumerate}
\item\label{it:meta1} \textit{(subtrees are big on average)} The sizes of the induced sub-trees $\{{\mathcal T}_j\}_j$ are on average exponential in size,
so $\sum_{j=1}^{\jleaves}{T_j /\jleaves} \in \Theta^\ast(2^{\lambda n}) $ for some $\lambda>0$.
In particular, the overall tree is also exponential in size.
\item\label{it:meta2} \textit{(quantum algorithm is faster)}
The query complexity of $\mathcal{A}_Q$ on exponentially large subtrees of size $\Theta^\ast(2^{\gamma n})$ is polynomially better than $\mathcal A$'s, i.e.~$\Theta^\ast(2^{\gamma(1-\delta) n}),$ for some $\delta>0$. 
For convenience, 
we assume that in the quantum case the query complexity is given by some increasing concave function $\Phi(T')\leq T'$ (for a tree $\mathcal T'$), such that $\Phi(T')$ is essentially $T'$ for small (sub-exponential) trees, and for bigger trees $\Phi(T')$ is essentially no larger than $(T')^{1-\delta}$.\footnote{To ensure this, we can imagine the algorithm switch from a classical to a quantum strategy only when the quantum strategy becomes faster.}

\item\label{it:meta3} (queries are efficient) The time complexities of the subroutines realizing one query in both  $\mathcal{A}$ and $\mathcal{A}_Q$ are polynomial.\footnote{In fact, sub-exponential would suffice for our definitions, but reasoning is easier with polynomial restrictions}
\end{enumerate}
Then the hybrid algorithm achieves a genuine (polynomial) speed-up. If the conditions above hold for all $\kappa,$ then the algorithm achieves a threshold-free genuine speed-up.
\end{theorem}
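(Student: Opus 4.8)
The plan is to argue entirely at the level of query complexity and then lift the conclusion to running times using Condition~\ref{it:meta3}. The classical algorithm $\mathcal{A}$ visits every vertex of $\mathcal{T}$, so its query complexity is $T = T_0 + \sum_{j=1}^{J} T_j$; the hybrid algorithm traverses the top $\mathcal{T}_0$ classically and, at each cut-off vertex, delegates the corresponding sub-instance to $\mathcal{A}_Q$, which costs $\Phi(T_j)$ queries on the sub-tree $\mathcal{T}_j$, so its query complexity is $T_H = T_0 + \sum_{j=1}^{J}\Phi(T_j)$. The target is $T_H = O^\ast(T^{1-\epsilon})$ for some $\epsilon > 0$, which then yields the claim.

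First I would show that the top of the tree is cheap. Using the standing assumption of Section~\ref{sub:new-hybrid} that the backtracking tree has height at most $n$ and every internal vertex has one or two children, a subtree with $L$ leaves has $O(nL)$ vertices: there are at most $L-1$ branching vertices, while the degree-one vertices form $O(L)$ chains each of length at most $n$. Since every leaf of $\mathcal{T}_0$ is either a parent of a cut-off vertex or a leaf of $\mathcal{T}$ lying above the cut-off, the extended decomposition guarantees that $\mathcal{T}_0$ has at most $J$ leaves, hence $T_0 = O^\ast(J)$. Next comes the heart of the argument: bounding the quantum contribution by the \emph{average} subtree cost. Because $\Phi$ is increasing and concave, Jensen's inequality gives $\sum_{j=1}^{J}\Phi(T_j) \le J\,\Phi\!\big(\tfrac1J\sum_{j=1}^{J} T_j\big)$. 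By Condition~\ref{it:meta1} this average is $\Theta^\ast(2^{\lambda n})$, which is exponentially large because $\lambda > 0$, so Condition~\ref{it:meta2} yields $\Phi\big(\Theta^\ast(2^{\lambda n})\big) = O^\ast(2^{\lambda(1-\delta)n})$ and therefore $\sum_{j}\Phi(T_j) = O^\ast(J\,2^{\lambda(1-\delta)n})$; together with the previous step, $T_H = O^\ast(J\,2^{\lambda(1-\delta)n})$. Condition~\ref{it:meta1} also gives $\sum_{j} T_j = \Theta^\ast(J\,2^{\lambda n})$, which dominates $T_0 = O^\ast(J)$, so $T = \Theta^\ast(J\,2^{\lambda n})$; in particular $T$ is exponential in $n$.

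Finally I would fix $\epsilon$ and pass back to running time. After taking logarithms, dividing by $n$, and using the crude bound $J \le 2^{O(n)}$ (the tree has height at most $n$), the inequality $T_H = O^\ast(T^{1-\epsilon})$ reduces to $\epsilon(1+\lambda) \le \lambda\delta - o(1)$, so any $\epsilon$ with $0 < \epsilon < \lambda\delta/(1+\lambda)$ works for all large $n$; for instance $\epsilon = \lambda\delta/\big(2(1+\lambda)\big)$. By Condition~\ref{it:meta3}, both $\textup{Time}(\mathcal{A}_C(n))$ and $\textup{Time}(\mathcal{A}_H(n,\kappa))$ equal the respective query complexities up to polynomial factors; since $T$ is exponential, these polynomial overheads are absorbed into a marginally smaller exponent, giving $\textup{Time}(\mathcal{A}_H(n,\kappa)) = O\big(\textup{Time}(\mathcal{A}_C(n))^{1-\epsilon'}\big)$ for some $\epsilon' > 0$, i.e.~a genuine speed-up in the sense of Definition~\ref{def:speedups}. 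If Conditions~\ref{it:meta1}--\ref{it:meta3} hold for every $\kappa > 0$, the same construction delivers such an $\epsilon'_\kappa > 0$ for each $\kappa$, which is exactly a threshold-free genuine speed-up.

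The step I expect to be the main obstacle is controlling the interplay with $J$: the number of cut-off sub-trees can itself be exponential in $n$, so one cannot simply bound $T_H$ by $T_0$ plus (number of subtrees) times (largest subtree)$^{1-\delta}$. The averaging hypothesis of Condition~\ref{it:meta1} combined with the concavity of $\Phi$ is exactly what makes this go through, and $\epsilon$ must be chosen small enough relative to $\lambda\delta$ so that the residual $J^{\epsilon}$ factor cannot cancel the $2^{\lambda\delta n}$ gain. The only other delicate point is the query-to-time reduction, which is why Condition~\ref{it:meta3} and the exponential size of $T$ are invoked.
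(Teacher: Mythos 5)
Your proposal is correct and follows essentially the same route as the paper's proof: bounding $T_0$ by the number of leaves/subtrees via the binary-tree leaf-counting argument (the paper's Lemma~\ref{lemma:tree-size-decomposition}), controlling $\sum_j \Phi(T_j)$ by $J\,\Phi(\Avg_c)$ through concavity of $\Phi$ (the paper phrases this as a maximization at equal $T_j$, which is exactly Jensen), and then invoking Conditions~\ref{it:meta1}--\ref{it:meta3} to turn the $\Avg_c$ versus $\Phi(\Avg_c)$ gap into a polynomial speed-up. Your explicit derivation of an admissible $\epsilon < \lambda\delta/(1+\lambda)$ from $J \le 2^{O(n)}$ is a slightly more careful rendering of the final step that the paper leaves implicit, but it is the same argument.
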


The assumption of the existence of  the function $\Phi(T)$ which meaningfully bounds the quantum query complexity in terms of the tree size, is in apparent contradiction with our previous explanations that, in the case of Grover-based search, such trivial connections cannot always be made. 

Here, we want to establish claims of polynomial improvements relative to the classical method. Since the query complexities of the classical method do depend on tree sizes alone, and since we assume that the quantum algorithm is polynomially faster (so the quantum query complexity is a power of the classical complexity), this implies that we can only consider cases where indeed the quantum query complexity \emph{can} be meaningfully upper bounded by some function of the tree size. Consequently our theorem will only apply to the Grover case when the trees are sufficiently large (of size $2^{\nicefrac n2}$ and higher), where such a non-trivial bound can be established. 

The intuition behind the theorem statement is as follows. Considering exponentially large search trees together with the limitation on the run-times in Item 2 of subroutines to be efficient ensures that we can safely consider only query complexities to establish polynomial separations. 

Secondly, Item~1 also ensures that the quantum algorithm will need to process sufficiently large problems to achieve a speed-up. This prohibits, e.g.~the counterexample we gave earlier with only one tree $\mathcal{T}_1$ of size $2^{\nicefrac n2}$ -- the top tree has height $\nicefrac n2$, meaning $\jleaves=2^{\nicefrac n2}$, and thus also the average quantity $\sum_j{T_j} / \jleaves$ is in this case exactly~1. 

Finally, item 3 ensures that the quantum algorithm will be polynomially faster on the non-trivial subtrees.
We of course implicitly assume that both the classical and the quantum algorithm solve the same problem, so that their hybridization also solves the same problem.\\

\begin{proof}
Under the conditions of the theorem it will suffice to show that the classical query complexities given by $T = T_0 + \sum_{j} T_j$ and the hybrid query complexity $T_H$, which we determine shortly, are polynomially related.

We make use of the following Lemma:

\begin{lemma}
\label{lemma:tree-size-decomposition}
For any binary tree $\mathcal{T}$ of size $T$, the number of leaves $K$ is bounded as follows: 
\[
(T/n+1)/2 \leq K \leq (T+1)/2.
\]
\end{lemma}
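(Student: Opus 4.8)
The statement concerns a binary tree, which in this context means every internal vertex has exactly one or two children (forced or guessed nodes), and leaves have none. Let $I_1$ be the number of internal vertices with one child, $I_2$ the number with two children, and $K$ the number of leaves. Counting vertices gives $T = I_1 + I_2 + K$, and counting edges (each non-root vertex has exactly one parent edge, and the edge count is also $I_1 + 2 I_2$) gives $T - 1 = I_1 + 2I_2$. Subtracting, $K = I_2 + 1$, so $K \le I_2 + I_1 + 1$, but more to the point we immediately get the \emph{upper} bound: since $I_2 = K - 1$ and $T = I_1 + I_2 + K \ge I_2 + K = 2K - 1$, we have $K \le (T+1)/2$. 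This half is clean and needs no further assumptions.

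For the \emph{lower} bound I would use the hypothesis, implicit in the paper's setup, that every root-to-leaf path has length at most $n$ (the tree has height at most $n$). The chains of one-child (forced) vertices are what can make $T$ large relative to $K$, so I need to bound $I_1$ in terms of $K$ and $n$. The plan is: decompose the edge set into maximal ``forced chains'' — maximal paths all of whose internal vertices are one-child vertices. Each such chain terminates either at a leaf or at a two-child vertex, and each leaf/two-child vertex is the bottom endpoint of at most one such chain (or a bounded number). Since there are $K$ leaves and $I_2 = K-1$ two-child vertices, the number of chains is $O(K)$; and since each chain has length at most $n$ (being contained in a root-to-leaf path of length $\le n$), the total number of one-child vertices satisfies $I_1 \le n \cdot O(K)$. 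Plugging into $T = I_1 + I_2 + K \le n\cdot O(K) + 2K - 1$ and solving for $K$ yields $K \ge (T/n + 1)/2$ after checking the constants line up; one should be slightly careful to get exactly the factor $1/2$ and the ``$+1$'', which likely comes from pairing each forced chain precisely with its unique lower endpoint among the $K$ leaves plus $K-1$ branch vertices, giving $I_1 \le n(2K-1)$ and hence $T \le n(2K-1) + 2K - 1 = (2K-1)(n+1)$, which is a bit loose; tightening to $T \le n(2K-1)$ or similar recovers the claimed bound.

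The main obstacle is bookkeeping the forced chains so that the factor of $n$ and the constant $1/2$ come out exactly as stated rather than merely up to constants — in particular, deciding which endpoint each maximal forced chain is charged to and verifying no endpoint is charged twice. A secondary subtlety is the degenerate case $T=1$ (a single leaf, $K=1$): the bounds read $1/n < 1 \le 1$, so they hold, but it is worth noting the inequalities are not tight there. I would write the argument via the two counting identities for the upper bound, then the chain-decomposition for the lower bound, and close by remarking that the lower bound is the one actually used later (it is what guarantees that a large tree has many leaves, hence many subtrees $\mathcal{T}_j$ in the extended decomposition).
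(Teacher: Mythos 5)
Your proposal is correct in substance and, for the lower bound, rests on the same idea as the paper's proof, though packaged differently. The paper contracts every maximal run of one-child nodes into the full-binary-tree node below it, obtaining a full binary tree $\mathcal{T}'$ with the same $K$ leaves and $T'=2K-1$ vertices, and then uses $T'\leq T\leq n\cdot T'$; your decomposition into maximal forced chains, each charged to its unique lower endpoint among the $K$ leaves and $K-1$ branching vertices, is exactly this contraction written as a counting argument, so the charging scheme you worry about is unambiguous (each of the $2K-1$ full-binary-tree nodes receives at most one chain). For the upper bound your route via the two identities $T=I_1+I_2+K$ and $T-1=I_1+2I_2$, giving $K=I_2+1$ and hence $T\geq 2K-1$, is cleaner and exact, whereas the paper gets it as a byproduct of $T'\leq T$. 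On the constant you flag: your bound $T\leq(2K-1)(n+1)$ yields $K\geq(T/(n+1)+1)/2$ rather than $(T/n+1)/2$; the paper's $T\leq n\cdot T'$ quietly suffers the same off-by-one (a maximal one-child chain together with its terminus can contain $n+1$ vertices in a tree of height $n$, e.g.\ a bare path), so there is no genuinely tighter argument to find, and in the only place the lemma is invoked only the $\Theta(T/n)$ scaling of $K$ matters, so you should state the lower bound with whichever of $n$ or $n+1$ your bookkeeping produces and not chase the paper's constant.
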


\begin{proof}
Consider $T'$ to be the size of the binary tree $\mathcal{T}'$ obtained by replacing every sequence of one-child nodes by a single node in $\mathcal{T}$.  
Note that $\mathcal{T}'$ is now a full binary tree (with 0 or 2 children), yet the number of leaves of  $T'$ and $T$ are the same. 
By construction we see that 
$
T'\leq T \leq n \cdot T'
$.
Since full binary trees have $(T'+1)/2$ leaves, we have that $T$ has more or equal to $(T/n+1)/2$ leaves, and less than $(T+1)/2$ leaves.
\end{proof}

Let $\jleaves$ be the number of leaves of $\mathcal{T}_0$ (which includes roots of subtrees $\mathcal{T}_j$, as well as actual leaves of $\mathcal{T}$). Let $\Avg_c = \sum_j{T_j}/ \jleaves$ denote the average tree size, and by assumption $\Avg_c = \Theta(2^{\lambda n}) $, so that 
\[
T = T_0 + \jleaves \times \Avg_c
\]
Since $\jleaves$ equals the number of leaves of $\mathcal{T}_0$, by assumptions and Lemma~\ref{lemma:tree-size-decomposition} we have that 
\EQ{\hspace{-1.4em}
T_0 +  \Avg_c \frac{T_0/n+1}2 \leq T  \leq T_0 +  \Avg_c \frac{T_0+1}{2}
}
which can be simplified to, 
\[
{T_0\Avg_c}/{(2n)} \leq T 
\text{ and }
T \leq T_0 (1 + 2\Avg_c ).
\]
On the other hand, by similar reasoning, the hybrid query complexity $T_H$ is upper bounded by 
\EQ{
 T_H \leq T_0 (1+2\Avg_q), \label{eq:th-upper}} where $\Avg_q$ is given with $\Avg_q = \sum_j{\Phi(T_j)}/\jleaves$ where $\Phi(T_j)$ is the quantum query complexity on the $j^{th}$ sub-tree.

By considering the ratio of the lower bound on the classical runtime $T_0 \cdot \Avg_c/(2n) \leq T$, and an even weaker upper bound on the hybrid complexity 
\[
T_H \leq T_0 (2+\epsilon)\Avg_q,
\] 
\[
T/T_{H} > \Avg_c/\Avg_q (1/(2n(2+\epsilon))),
\] 
we see that polynomial improvements are guaranteed whenever $\Avg_c$ and $\Avg_q$ are polynomially related (since they are both exponentially sized by assumption, the prefactors linear in $n$ can be neglected). 

Since we have that $\Avg_c = \sum_j{T_j}/\jleaves$ and $\Avg_q = \sum_j{\Phi(T_j)}/\jleaves$, where $\Phi(x)$ is concave and increasing and $\Phi(T_j) \leq T_j$, to lower bound the speed-up characterized by $\Avg_c/\Avg_q$, we need to minimize this quantity, which is equivalent to maximizing
$f({T_j}) =  \sum_j{\Phi(T_j)}$, subject to constraints $\sum_j T_j = c,$ for some constant $c$, and $T_j\geq 0$.

By concavity of $\Phi$, this maximum is obtained when $T_i =T_j = \Avg_c$ for all $i,j$, hence in the worst case we have  
$\Avg_q = \sum_j \Phi(\Avg_c)/\jleaves = \Phi(\Avg_c)$. By assumptions, this means $\Avg_q \in \Theta(2^{\lambda(1-\delta) n})$ whereas $\Avg_c \in \Theta(2^{\lambda n})$, which is an overall polynomial separation, as stated.
This polynomial separation in the query complexity will also be observed in the final separation of the classical and hybrid runtimes because Condition 3 ensures that queries take only polynomial time.

Finally, the threshold free speedup is obtained if the above separation is obtained for any $\kappa$,
which means that the tree decomposition induced by $\kappa$ (and the space complexity of $\mathcal A_q$) yields polynomial separation between classical and hybrid runtimes.
\end{proof}

In the discussion above we only touch upon the sizes of search trees, without considering the possibility that the classical (online) algorithm may get lucky. Indeed, it may encounter a solution much earlier in the search tree, whereas the quantum algorithm always explores all the trees. In essence, the above considerations are for the worst cases for the classical algorithms (e.g.~when no solutions exist).
However, this is easily generalized.

In Section~\ref{sec:quantum}, we discussed a method to circumvent this issue~\cite{ambainis-kokainis}, by enabling the quantum algorithm to only explore the effective tree, which the classical algorithm would explore as well, before termination. By utilizing these algorithms, all the results above remain valid in all cases, except the concepts of search trees must be substituted with the concepts of ``effective search trees,'' which are the search trees the classical algorithm would actually traverse before hitting a solution.

Additionally, we observe that our framework can be generalized to allow for $p$-ary trees instead of binary trees.

The presented theorem has the advantage of being very general, but the major drawback is that it is non-quantitative and difficult to verify.
This can be improved upon in special cases.

\subsection{Space complexity, effective sizes, and tree decomposition}
\label{sect:spec-intro}
Three aspects complicate the task of determining the runtime of our hybrid algorithms: tree structure, time-complexity of the quantum algorithm and the space complexity of the quantum algorithm.
All three issues are individually non-trivial. The tree structure can be very difficult to characterize (as is the case for the DPLL algorithm, see Section~\ref{sec:discussion}), and the space and time complexities can depend on different features of the sub-instance. Yet, their interplay is what determines the overall algorithm.


We address these three issues separately, focusing on settings where the situation can be made simpler.
First and foremost, we will almost always assume that we deal with exponentially large trees and query complexities, and in all the cases we will consider, the runtimes involving a single query will be polynomial (Condition~\ref{it:meta3} of Theorem~\ref{th:metatheorem}). For this reason, we can ignore polynomial factors, which implies that the query complexities and overall run-times are equated.

Next, since we deal with tree search algorithms, we focus on quantum algorithms $\mathcal{A}_Q$, obtained by either performing Grover-based search, or quantum backtracking over the trees. For concreteness, we will imagine the search space to be one of partial assignments to a Boolean formula (so strings in $\{0,1, \ast\}$, with $\ast$ denoting an unspecified value), although all our considerations easily generalize.
We refer to the \concept{natural representation},
which is the vertex representation where to each vertex we associate the entire partial assignment (i.e.~$n$ symbols in $\{0,1,\ast\}$, so $\log_2(3) n$ bits). 

In this setting, the possible speed-ups and the space complexity can be more precisely characterized in the terms of the tree structure. In what follows, we focus on a concrete subtree representing a subproblem associated with the (restricted) formula $F$ that $\mathcal{A}_Q$ should solve.
This subtree $\mathcal{T}$ is of height $n$
and maximum branching number $br(\mathcal{T})$, comprising partial assignments of length~$n$.

\paragraph{Query complexity}
Recall from Section~\ref{sec:quantum}, that the query complexity of backtracking is essentially $\tilde{O}(\sqrt{T} n)$, and therefore depends strongly on the tree size and the tree height. So for large trees, we have an essentially quadratic improvement over the classical search algorithm. In the case of Grover's search algorithm, the query complexity is in $O(2^{\nicefrac{br}2})$, provided that the maximum branching number is $br(\mathcal T) \le n$, which is also a quadratic genuine speedup if the tree is full.\footnote{And, more generally a  polynomial speed up when the tree size is at least $\Omega^\ast(2^{br/2 + \delta}),$ for some $\delta>0$.}

We can already highlight the important feature that the natural instance size (number of variables) may be quite unrelated to the actual features of the sub-instance which dictate runtimes: tree size and branching number. 
For this reason, we introduce assumptions which allow us to relate the \textit{variable number} with the tree size, as well as the \textit{branching number}.

This difference between natural and effective problem sizes is even more important in the case of space complexity, where speed-ups may be impossible if the wrong measure is considered.

\paragraph{Space complexity} 
We can separate two sources of memory requirements.
The first is the specification of the search space, as for any methods of quantum search  we require a unique representation of every vertex in the tree.
Clearly, the {natural representation} of full partial assignments suffices, but often we can work with the specification of only the choices at every branching point.

\begin{figure}[t!]
\tikzset{main/.style={draw,circle}} 
\tikzset{leaf/.style={draw,minimum width=1.2em,minimum height=1.2em}} 
\tikzset{lddnode/.style={draw,minimum width=1.2em,minimum height=1.2em}}
\tikzset{node distance=10mm}

\begin{tikzpicture}[node distance=3mm and 3mm, thick, edgelabel/.style={fill=white,rounded corners=4pt,inner sep=1.5pt}]

	\node[] (root) [] {};
	\node[main] (r0) [node distance=3mm,right=of root,inner sep=1.5pt] {$x_1$};

	\begin{scope}[node distance=9mm and 9mm,inner sep=1.5pt]

	\node[main,inner sep=1.5pt] (w1) [right=of r0] {$x_2$};
	\node[] (w11) [below=of w1] {};
	\node[main,inner sep=1.5pt] (w2) [right=of w1] {$x_3$};
	\node[] (w21) [below=of w2] {};
	\node[main,inner sep=1.5pt] (w3) [right=of w2] {$x_4$};
	\node[main,inner sep=1.5pt] (w4) [right=of w3] {$x_5$};
	\node[] (w41) [below=of w4] {};
	\node[] (w5) [right=of w4] {};

    \node[draw] (x0) [above = 1.3cm of r0] {\small \phantom{111}};
    \node[] (x1) [right = of x0] {\small \phantom{111}};
	\node[] (x11) [below=of x1] {};
    \node[draw] (x2) [right = of x1] {\small 1\phantom{11}};
    \node[draw] (x3) [right = of x2] {\small 11\phantom{1}};
	\node[] (x31) [below=of x3] {};
    \node[draw] (x4) [right = of x3] {\small 111};
	\node[] (x41) [below=of x4] {};

    \draw[->] (x0) to node[edgelabel,above] {$ch2$} (x2);
    \draw[->,dotted] (x0) to node[edgelabel,,below left] {$ch2$}  (x11);
    \draw[->] (x2) to node[edgelabel,above] {$ch2$} (x3);
    \draw[->,dotted] (x2) to node[edgelabel,,below left] {$ch2$}   (x31);
    \draw[->] (x3) to node[edgelabel,above] {$ch2$} (x4);
    \draw[->,dotted] (x3) to node[edgelabel,,below left] {$ch2$}   (x41);

    \draw[->] (r0) to node[edgelabel,above] {1} (w1);
    \draw[->] (r0) to node[edgelabel,below left] {0} (w11);
	\draw[->,dashed] (w1) to node[edgelabel,above] {0} (w2);
    \draw[->] (w1) to node[edgelabel,below left] {0} (w21);
	\draw[->] (w2) to node[edgelabel,above] {1} (w3);
	\draw[->] (w3) to node[edgelabel,above] {1} (w4);
    \draw[->] (w3) to node[edgelabel,below left] {0} (w41);
	\draw[->,dashed] (w4) to node[edgelabel,above] {1} (w5);
	
	\end{scope}

	\begin{scope}[node distance=6mm and 7mm,inner sep=1.5pt]

    \node[draw] (z0) [above = 1.4cm of x0] {\small \phantom{1011}};
    \node[draw] (z1) [right = of z0] {\small 1\phantom{011}};
	\node[] (z11) [below=of z1] {};
    \node[draw] (z2) [right = of z1] {\small 10\phantom{11}};
	\node[] (z21) [below=of z2] {};
    \node[draw] (z3) [right = of z2] {\small 101\phantom{1}};
	\node[] (z31) [below=of z3] {};
    \node[draw] (z4) [right = of z3] {\small 1011};
	\node[] (z41) [below=of z4] {};

    \draw[->] (z0) to node[edgelabel,above] {$ch2$} (z1);
    \draw[->,dotted] (z0) to node[edgelabel,,below left] {$ch2$}  (z11);
    \draw[->] (z1) to node[edgelabel,above] {$ch1$} (z2);
    \draw[->] (z2) to node[edgelabel,above] {$ch2$} (z3);
    \draw[->,dotted] (z2) to node[edgelabel,,below left] {$ch2$}  (z31);
    \draw[->] (z3) to node[edgelabel,above] {$ch2$} (z4);
    \draw[->,dotted] (z3) to node[edgelabel,,below left] {$ch2$}  (z41);

	\end{scope}

	\node[node distance=2mm,above=of z0] {$F_{\phantom{x_1}}$};
	\node[node distance=2mm,above=of z1] {$F_{x_1}$};
    \node[node distance=2mm,above=of z2] {$F_{x_1\overline{x_2}}$};
    \node[node distance=2mm,above=of z3] {$F_{x_1\overline{x_2}x_3}$};
	\node[node distance=2mm,above=of z4,xshift=2mm] {$F_{x_1\overline{x_2}x_3x_4}$};
	
\end{tikzpicture}
\caption{A path in a search tree in natural representation (above) and in branching representation  (middle) for
$F=(\overline{x_1}\lor \overline{x_2}) \land (\overline{x_3} \lor \overline{x_4} \lor {x_5})$ and the corresponding path in the decision tree of a DPLL algorithm execution (below), where dashed lines are forced and solid lines guessed assignments.
The node 111 in branching representation represents the partial assignment $x_1,x_2,x_3,\overline{x_4}$.}
\label{fig:tree}
\end{figure}

This more efficient representation, which associates to each vertex $v$ the unique branching choices on the path from the root to $v$, we refer to as the \emph{branching representation} (see Fig.~\ref{fig:tree}). The branching representation requires no more than  $br(\mathcal{T}) \leq n$ trits or, more efficiently, $br(\mathcal{T}) + \log(n) \leq n$ bits. Technically, we need $\log(n)$ bits to fix at the depth at which our path terminates to specify a vertex (after the last branching choice there can still be a path of non-trivial length remaining to our target vertex).\footnote{Our focus is not on vertices per se, but on uniquely specified paths from vertex to root, along the path of which we look for contradictions and satisfying assignments, so that the $\log(n)$ specification is not needed.}
This is still larger than the information-theoretic limit $\log_2(T),$ which is achieved by some enumeration of the vertices; however this representation is difficult to work with locally, and difficult to manipulate space-efficiently.
This memory requirement is the only one which we cannot circumvent for obvious fundamental reasons.

We assume access to a function $chNo(v)$ which returns whether $v$ has one or two children, as it was introduced in Section~\ref{sec:quantum} along with the functions
$ch1(v)$ and $ch2(v,b)$ which return the children of forced and guessed nodes respectively. We assume that for each vertex we know the level it belongs to, so there is no need to check if a vertex is a leaf. 
Such functions take as input a vertex specified in the natural representation, as is the case in most algorithms. The construction of functions which take the branching representation on input will require additional work and space.

To understand the second source of memory consumption we need to consider in more detail what each search method entails.

We begin with Grover-based search.
In the natural representation, if the position of the leaves is unknown, we need to perform a brute force search over all possible partial assignments, leading to the query complexity of $\Omega(2^{\log_2(3) \nicefrac n2})$. This is prohibitively slow.\footnote{Furthermore in some cases it may also lead to invalid results, if there is no explicit mechanism to recognize legal vertices in the tree, and if the connectivity encodes properties of the problem.}

A more efficient approach relies on the branching choice representation. In this case, the search space is $2^{br}$,\footnote{Note, this does not enumerate all the vertices or leaves, but possible paths in trees with no more than $br$ branchings. This suffices to uniquely specify a leaf. There may be multiple specifications for a single leaf, if it occurs on a path with fewer branchings (in which case, the remaining choices are then simply ignored).} and all that is required  is a subroutine which checks whether a given sequence in this representation leads to a satisfying assignment.

In other words, we require a reversible implementation of the search predicate $P$ which is defined on branching choices. 

In the natural representation, a reversible version of $P$ for a node
$\vec x \in \{0,1,\ast\}^n$ can be implemented by a circuit computing the number of satisfied clauses in $F_{|\vec x}$, which requires the values of the actual variables. 
So, in the branching representation, to evaluate $P$, we must in some (implicit) way first reconstruct the values of the variables that occur in the partial assignment corresponding to the vertex fixed by the branching representation.
Such a modified $P$ which evaluates the branching representation can be run reversibly, as the Grover predicate to realize the search over all possible branch representation strings. We will refer to this string as the \emph{advice string}. Intuitively, it advises the search algorithm on its branching choices.

Now, to translate the branching representation to variable values, again intuitively, we must follow the path in the search tree, and for this, we will utilize the (reversible) implementations of the operations $ch1(v), ch2(v,b), chNo(v)$, to trace the path. This is easily done reversibly if we are allowed to store each partial assignment along the path. However, due to the size constraints imposed in the present work, realizing an efficient predicate is non-trivial task, which can nonetheless be achieved in specific cases (see Section~\ref{sec:concrete}).

In the case of backtracking-based algorithms, we also require sufficient space to represent each vertex uniquely. Aside from this, the space requirements stem from the implementation of the walk operator, and from the implementation of the quantum phase estimation subroutine, see Appendix~\ref{app:qpe}. The latter cost we prove can be done logarithmically in the natural representation size, and therefore can safely be ignored.\footnote{Note, a linear dependence would not immediately prohibit the application of a hybrid method, but it would cause a multiplicative decrease in the effective size of the instance we can handle, i.e.~our usable work space would effectively become a fraction of what it could be. }

We identified two subroutines as the bottleneck for a space-efficient realization of the quantum walk operator. The first is the construction of a unitary which takes a vertex specification on input (and an appropriate amount of ancillas initialized in a fiducial state), and produces (the same vertex, due to reversibility), and its children. The second subroutine is the same as in the Grover-based case: we need means of detecting whether a vertex satisfies the predicate $P$ in whichever representation it is given.

In principle, the subroutines which detect whether a vertex is satisfying in some representation may be very different than subroutines which generate children specifications. However, in every case we discuss in this paper, the detection of satisfying vertices given in the branching representation will be implemented by essentially sequentially going through the entire path, in an appropriate representation. 
For this reason, the methods that we present for Grover-based search in the branching representation can readily be used for backtracking-based search. 
We note that the sizes of the natural, and the branching representation constitute two main parameters, \emph{effective size measures},  associated with a problem instance, which determine the quantum space and time complexities. 

In the present work, we design time-efficient schemes which achieve \emph{linear space complexities} in the size of either the natural representation size or in terms of the branching representation sizes \emph{for all the routines discussed above}.
Also they can be applied both in the Grover-based and backtracking cases.
Additional sub-linear space contributions are effectively negligible: since we assume quantum computers that are proportional the instance size, so $\kappa n$, so we can simply sacrifice any arbitrary sized fraction $\epsilon n$ of $\kappa n$ (so decreasing $\kappa$ by an arbitrarily small $\epsilon$) for all sub-linear space requirements.

With a full understanding of what parameters of the sub-instances influence space and query complexities of the quantum algorithms we consider, we can now focus on the overall tree decomposition, and identify settings where it all can be made to provide simple criteria for quantifiable speed-ups.

\subsection{Speed-up criteria: special cases}
\label{sec:special}
In this section, we introduce several assumptions to simplify the expressions of the runtimes obtained through our hybrid approach.
One of the main assumptions is that the effective size of the quantum computer can be expressed as $\kappa' n'$, given a $\kappa n'$ sized device; in other words, that the space complexity of our algorithms is linear in $n'$, where $n'$ quantifies the original problem size measure. 
In the beginning of this section $n'$ will refer to the natural instance size (and thus the tree height), but later we will show how it can also quantify branching numbers.

Given a $\kappa' n'$ effective size quantum computer, the search tree decomposition will then produce a cut-off points at each vertex where the vertex effective size $n$ (and the branching number $br$) is below $\kappa' n'$.

To achieve polynomial speed-ups resulting sub-trees must be exponential (on average, relative to the number of leaves of the top tree $\mathcal{T}_0$), i.e.~in $\Theta^\ast(2^{\lambda n}),$ where speed-ups are achieved using backtracking for any $\lambda$, and using Grover if $\lambda > \nicefrac 12,$ or if $\lambda > br/2$
when the branching representation is used.

To make this more concrete we can consider a setting  where exponential subtree sizes are guaranteed and easily computable, i.e.~when the subtrees are uniform at a scale determined by $\kappa' n$. 

For the weakest case, where the space complexity of the quantum algorithm we consider is governed by $n$ (in the natural representation), we have the following setting.

We will say that a fully balanced\footnote{All paths from the root to a leaf are of length $n$.} tree $\mathcal{T}$ is \emph{uniformly dense with density larger than $\lambda$ at scale $\eta n$}, if all the sub-trees $\mathcal T'$ of height higher than $\eta n$ (i.e.~all trees with a root at a vertex $v$ of $\mathcal{T}'$ at a level higher than $n-\eta n$, which contain all the descendants of $v$ at distance no more than $\eta n$ from $v$) satisfy $T' \in \Theta(2^{\lambda \eta n})$.

In this case, if $\eta$ is matching the effective size of the  quantum computer ($\eta = \kappa'$), we get a polynomial speed-up for all densities $\lambda$ for backtracking, and whenever $\lambda>\nicefrac 12$ for Grover-based search, assuming access to a quantum computer with effective size $\kappa' n$ (with $\kappa'$ proportional to $\kappa$). The assumption that the trees are exponentially sized makes the proof of this statement trivial.

The definition of such strictly uniform trees can be further relaxed, by allowing that just a $1/\poly(n)$ fraction of the sub-trees beginning at level $n-\eta n$ is exponentially sized with the exponent $\lambda$, and by somewhat freeing the size of the top tree $\mathcal{T}_0$ (we essentially only need that this tree is not too large), and still obtain a polynomial improvement.
Again if $\eta = \kappa'$, we get polynomial speed-ups when $\kappa'$ is proportional to $\kappa$.

Since the effective size is $\kappa' n$, for uniform trees we obtain a search tree decomposition, where we cut at level (instance size given by height) $n - \kappa' n$. The obtained trees are of size height $\kappa ' n$, and by assumption, $1/\poly(n)$ of the trees are exponentially sized (i.e.~are in $\Theta^\ast(2^{\kappa' \lambda n})$).
But then the worst case average size is given with $2^{\lambda \kappa'n} /\poly(n)$ which is still exponential. 

With this we satisfy all the assumptions of the Theorem~\ref{th:metatheorem} and conclude polynomial improvements. But in this case we can be more precise about the achieved improvement.
From the proof of Theorem~\ref{th:metatheorem}, Eq. (\ref{eq:th-upper}) the hybrid query complexity can be upper bounded with 
$T_0 (1+2\Avg_q)$  where $\Avg_q$ is given with $\Avg_q = \sum_j \Phi(T_j)/J_\ell$.

Here $\Phi(T_j)$ is the quantum query complexity on the $j^{th}$ sub-tree. Note that  we assume that we can meaningfully bound the quantum query complexity as some function of the tree size.
 If we are using backtracking since the trees are exponentially large we have that $\Phi(T_j) = \Theta^\ast((T_j)^{\nicefrac 12})$.
 In the case of Grover's search, we will have meaningful statements of this type when the tree sizes are exponential in depth (so $T_j = \Theta(2^{\lambda n'})$), with exponent $\lambda>\nicefrac 12$.

By the same proof, $\Avg_q$ is maximized when $T_i = \sum_j T_j / J_\ell,$ for all $i$, in which case 
 \[
 \Avg_q  = \Theta(2^{\lambda \kappa'n/2} /\poly(n)),
 \] 
 which is dominated by $\Theta(2^{\lambda \kappa'n/2})$. 
\EQ{
 T_H
 =O^\ast(
T_0 (2^{ \nicefrac{\lambda\kappa'}2 n})\\ 
 =O^\ast(
2^{ (  \nicefrac{\kappa'}2  +  (1-\kappa')) \lambda n }
 ) = O^\ast(
2^{ ( 1 - \nicefrac{\kappa'}2 ) \lambda n } \label{Eq:runtime}
 )} 
 which is a polynomial improvement over the classical strategy which obtains
  \EQ{
  T_H =\Omega^\ast(
2^{  \lambda n }
 )} (recall, the asterisk denotes we omit polynomial terms).

As noted earlier, in the above analysis, if we use Grover's search, then 
\[
\Phi(T_j) = \min(2^{\lambda \kappa' n} ,2^{ \nicefrac{\kappa' n}2} ),
\]
in which case we only obtain a speed-up if $\lambda> \nicefrac 12$.

However, in Subsection~\ref{sub:ppsz} we describe a setting where,  although the tree is quite uniform, it is not sufficiently uniform relative to the natural measure -- the tree height. However, it is uniform relative to the branching number measure.
We can easily adapt the uniformity definition to this case.

We will say that a tree $\mathcal{T}$ is \emph{uniformly dense at branching level $\eta n$}, if all the sub-trees $\mathcal{T}'$ for which we have $br(\mathcal{T}') \geq \eta n$  (i.e. all trees with a root at a vertex $v$ of $\mathcal{T}$ which have at least $\eta n$ branches on any path) satisfy $T' \in \Theta(2^{\lambda \eta n})$.

Now, if we have access to algorithms whose space complexity is linear in the branching size measure, given a $\kappa' n$ effective quantum computer size -- now, relative to the branching size, meaning we can handle instances with $\kappa'n$ branchings --, if  the trees are uniformly dense at branching level $\eta n$, with $\eta n \leq \kappa' n$, we can run the quantum subroutines on exponentially sized subtrees and a  similar analysis holds.
 But this time, Grover's approach yields speed-ups whenever $\nicefrac{br}2 < \lambda$.
 
Although these results seem very similar, in general the latter setting allows us to start search much earlier as $br\leq n$, and indeed, it can be much smaller. Furthermore, in some cases, the distribution of branch cuts is not (guaranteed to be) uniform with respect to the natural size $n$, which prevents naive algorithms to be successful.

{In the next sections, we provide examples of both cases:
\begin{itemize}
 \item an example where the effective size $n$ plays a role,
 		in the context of $k$-SAT formulas for large $k$ and uniform trees, with speed-ups obtained via Grover-based search ($\lambda > \nicefrac 12$) and backtracking (Section~\ref{sub:seth});
 \item an example where the number of maximal branches $br$ plays a role, with speed-ups that depend only on the branching number measure (Section~\ref{sub:ppsz}).
\end{itemize}
}
We provide in Appendix~\ref{app:eppstein} an example where backtracking provably provides better hybrid performances than Grover 
for the Hamiltonian cycle problem, building on prior work~\cite{dnq-eppstein}.

\section{Hybrid speed-ups for tree search in satisfiability problems}
\label{sec:concrete}
In this section, we provide examples of when various types of speed-ups are attainable using the hybrid tree-search-based framework we introduced in previous sections.

\subsection{Algorithm-independent improvements under the Strong Exponential Time Hypothesis}
\label{sub:seth}

The simplest and most ideal setting for hybrid divide and conquer approaches is when the search trees of any SAT algorithm are essentially everywhere maximally dense. The Strong Exponential Time Hypothesis (SETH) implies this.
Under this hypothesis, we can provide the simplest hybrid algorithm which still beats the best possible classical algorithm for $k$-SAT (where $k$ is large).

\newcommand{\NN}{\mathbb{N}}
\newcommand{\unit}{[0,1]}
Let us write $\gamma_k$ for the smallest $\gamma_k \in \unit$ such that there exists a $k$-SAT randomized algorithm of complexity $O(2^{\gamma_k n})$. 
SETH stipulates that the sequence of all the $\gamma_k$'s is increasing, and $\lim_{k \to \infty} \gamma_k = 1$. Then $\gamma_k$ can be made arbitrarily close to $1$. In other words, this also means the best possible classical algorithm is close to brute-force search, which is itself a divide-and-conquer algorithm, for large $k$.

We can apply the hybrid approach to brute-force search as classical algorithm, and Grover's search as quantum subroutine, with access to a $\kappa n$-qubits quantum computer.
As we show in the Appendix~\ref{app:efficient-grover}, we can implement Grover-based, brute-force search for SAT solving over $n$-variable formulas using $n+O(1)$ space, meaning that, asymptotically the effective size of the problem we can handle is $\kappa' n$ with $\kappa = \kappa'$.
 The result is a hybrid algorithm of time complexity $O^\ast(2^{(1-\kappa/2) n})$ for $k$-SAT for every $k \in \NN$.

But then by SETH, for every $\kappa>0$ there exists a $k$ s.t. $\gamma_k > 1-\kappa/2$, which implies a polynomial speed-up over any classical algorithm.
In summary we have the following result.
\begin{theorem}
Under the Strong Exponential Time Hypothesis, for every classical algorithm for $k$-SAT and for every $\kappa > 0$ (such that we are given access to a $\kappa n$-qubits quantum computer), there exists a $k$ such that we obtain a speed-up for the hybrid divide-and-conquer algorithm based on classical brute-force search and Grover's algorithm. In other words, the hybrid divide-and-conquer approach can offer an algorithm-independent speed-up under SETH.
\end{theorem}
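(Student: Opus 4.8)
The plan is to assemble the theorem from three ingredients already in place: (i) that classical brute-force search for $k$-SAT is a divide-and-conquer/tree-search algorithm whose search tree is maximally dense, so it is an instance of the ``uniformly dense'' special case of Section~\ref{sec:special}; (ii) the $n+O(1)$-qubit Grover implementation of the CNF predicate from Appendix~\ref{app:efficient-grover}, which makes the effective quantum size $\kappa' n$ with $\kappa'=\kappa$ up to a negligible sublinear sacrifice; and (iii) the definition of $\gamma_k$ together with SETH, which is what upgrades an algorithm-specific speed-up to an algorithm-independent one.

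First I would describe the classical algorithm explicitly: brute-force search that branches on $x_1,\dots,x_n$ in a fixed order, producing the complete binary tree of depth $n$ on all partial assignments, of size $2^{n+1}-1$ and maximum branching number $n$; every subtree of height $h$ has exactly $2^{h}-1$ vertices, so this tree is uniformly dense with density $\lambda=1$ at every scale $\eta n$, and its worst-case run time is $\Theta^\ast(2^n)$. Then I would instantiate the hybrid scheme: using the $n+O(1)$-qubit reversible Grover predicate for $F$, a $\kappa n$-qubit device can Grover-search all assignments of any sub-formula on $\kappa' n$ free variables with $\kappa'=\kappa$ (absorbing the $O(1)$ and other sublinear overheads into an arbitrarily small decrease of $\kappa$). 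Cutting the tree at level $n-\kappa' n$ gives $2^{(1-\kappa')n}$ leaf subtrees of size $2^{\kappa' n}$ each; the classical part explores the $\Theta^\ast(2^{(1-\kappa')n})$-sized top tree and the quantum part solves each subtree in $\Theta^\ast(2^{\kappa' n/2})$ queries (the condition $\lambda=1>\nicefrac12$ is precisely what makes Grover, not merely backtracking, already suffice). Plugging $\lambda=1$, $\kappa'=\kappa$ into the special-case bound~(\ref{Eq:runtime})---or simply invoking Theorem~\ref{th:metatheorem}, whose hypotheses the density assumption supplies---yields hybrid run time $O^\ast(2^{(1-\kappa/2)n})$.

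Finally I would close the loop with SETH. By the definition of $\gamma_k$, no randomized $k$-SAT algorithm runs in time $O(2^{(\gamma_k-\epsilon)n})$ for any $\epsilon>0$, so the worst-case run time of \emph{every} classical $k$-SAT algorithm $\mathcal{A}_C$ is $\Omega^\ast(2^{\gamma_k n})$; SETH gives $\gamma_k\to 1$, so for any fixed $\kappa>0$ we may pick $k$ with $\gamma_k>1-\kappa/2$, and then $\textup{Time}(\mathcal{A}_H(n,\kappa))=O^\ast(2^{(1-\kappa/2)n})=O(\textup{Time}(\mathcal{A}_C(n))^{1-\epsilon_\kappa})$ with $\epsilon_\kappa=1-(1-\kappa/2)/\gamma_k>0$, a genuine speed-up per Definition~\ref{def:speedups}. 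Since the $\Omega^\ast(2^{\gamma_k n})$ bound is a lower bound on the \emph{problem} (granted SETH) rather than on a particular algorithm, the speed-up is algorithm-independent, which is the conceptual point anticipated in Section~\ref{subsub:limitations}.

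I expect the only places needing real care to be twofold. The first is the $n+O(1)$-qubit reversible implementation of CNF evaluation used as the Grover oracle (deferred to Appendix~\ref{app:efficient-grover}): it is important that the space overhead be sublinear so that $\kappa'=\kappa$ rather than some strictly smaller fraction, which is what keeps the resulting speed-up available for \emph{all} $k$. The second is the infimum-versus-minimum subtlety in the definition of $\gamma_k$, together with the fact that we compare worst-case complexities, so that the phrase ``beats every classical algorithm'' is a legitimate consequence of SETH and not an overstatement; everything else is direct substitution into results already proved.
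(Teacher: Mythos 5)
Your proposal is correct and follows essentially the same route as the paper: brute-force search as the classical backbone, the $n+O(1)$-qubit Grover oracle from Appendix~\ref{app:efficient-grover} giving $\kappa'=\kappa$, the resulting $O^\ast(2^{(1-\kappa/2)n})$ hybrid bound, and SETH supplying a $k$ with $\gamma_k>1-\kappa/2$ to make the speed-up algorithm-independent. You merely spell out the uniformly-dense tree decomposition and the $\epsilon_\kappa$ computation more explicitly than the paper's inline argument does, which is a harmless (indeed helpful) elaboration rather than a different proof.
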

To connect to the previous discussions on uniform trees, SETH guarantees that the trees of any tree-search based algorithm for $k$-SAT will, become arbitrarily dense (close to $\lambda=1$), at every constant scale $\kappa n$, for large enough $k$.

\subsection{Threshold-free speed-ups in PPSZ tree search for special formulas}
\label{sub:ppsz}

In this section we describe a setting in which, under some assumptions on the variable order, hybrid, threshold free speed-ups are possible for PPSZ tree search, which, as we mentioned, is at the core of the best-known classical exact SAT solvers.




\subsubsection{Characterizing the search trees}\label{secsec:biw}
Recall that, towards realizing hybrid tree search, we already introduced dncPPSZ,
a tree search version of the original Monte Carlo algorithm
(see Algorithm~\ref{algorithm:dnc-ppsz} in Section~\ref{sec:background}).
It finds a satisfying assignment in a constant number of repetitions
if one exists (see \autoref{prop:ppsz-is-dnq}).
As input, the dncPPSZ subroutine takes an $k$-CNF formula $F$ and an order $\pi$,
then sequentially either resolves the next variable by $s$-implication,
yielding a forced tree node, or it branches on that given variable, yielding a guessed tree node. 
As detailed in the Appendix~\ref{proof:ppsz-is-dnq}, 
there exists a constant 
$\varepsilon > 0$
such that, if satisfying assignments exist, then
$E_{\pi \in S_n}[ \text{dncPPSZ}(F, \pi) = 1 ]$ is non-nil and constant  (and thus a finite amount of repetitions of dncPPSZ suffices to find the satisfying assignment).
In other words, with constant probability dncPPSZ will find a path from the root to a satisfying assignment which contains no more than $b = (\gamma_k + \varepsilon) n$ branches
(guessed variables).
By pruning tree paths beyond $b$ branches,
we obtain a runtime of $O^\ast(2^{(\gamma_k + \varepsilon) n})$ and
a search tree with maximal branching number $b$ (see Def.~\ref{def:br}).



Since, our objective is to provide a hybrid algorithm which achieves a better provable upper bound, threshold-free, the first obstacle is obtaining a sufficiently dense tree shape 
(see Section~\ref{sec:criteria}). However, since no such algorithm is known,
we indeed may assume the trees are of size $\Theta^\ast(2^{(\gamma_k + \varepsilon) n})$ (at least in the worst cases).
This all suggests that (non-hybrid) speedups are obtainable with quantum backtracking,
        and with a Grover-based method as well;
recall that Grover achieves the same upper-bound performance as backtracking,
if the bound on the tree size is given by the exponential of maximal branching number, as discussed in Section~\ref{sec:gvb}.
Any results with Grover characterize what we can expect from a quantum tree search algorithm.

Next, in hybrid setting we need to keep track of space complexity as well,
and the depth $\kappa'$ at which to invoke the quantum subroutine.
Since PPSZ is a simple tree search algorithm over partial assignments,
the natural instance size is the number of variables which have not yet been set. 
It is relatively easy to construct quantum Grover-based and backtracking algorithms
which achieve a linear space complexity in this quantity.
In this case,  the cut-off point in the search tree decomposition happens at vertices corresponding to some number of variables not yet resolved.
However this will not suffice for threshold-free improvements. 

The problem is the following: while the number of branches is $(\gamma_k + \varepsilon) n$, along a path from the root to an assignment, there is no guarantee on \emph{where} along the path they occur. Indeed, since the formula simplifies as variables are set, it is more likely branches occur earlier on, and it is possible all branches are used up first, leaving a formula with $n-(\gamma_k +\varepsilon)n$ variables, which is trivial from this point.
So to achieve speed-ups in this scenario, our quantum device must be able to handle instances of size at least $\kappa' n > (1-\gamma_k- \varepsilon) n$,
hence the approach is not threshold-free.


Our solution is to construct algorithms that work in the branching representation,
discussed in Section~\ref{sec:gvb}, and achieve
linear space-efficiency in the remaining number of branches.
Since branches directly dictate the (exponential) tree size, starting the algorithm at a point with some-fraction-of-$n$ branches remaining will guarantee all the properties we highlighted in Theorem~\ref{th:metatheorem}, and more specifically, render the PPSZ case a uniform tree case relative to the branching cuts, as described in Section \ref{sec:special}.
However, coming up with reversible implementations of PPSZ traversal, whose space efficiency depends on the branching numbers (and time-efficiency is sub-exponential) is more complicated.

At this point let us be more precise. We are looking for reversible algorithms (circuit) which compute the children for any vertex of the PPSZ search tree in the branching representation, and which can decide whether a vertex specified in the branching representation is satisfying or a contradiction. That is, implementations of the functions $ch1,ch2, chNo, P$,
as this suffices to implement the walk operator (see  Section~\ref{sec:quantum}).

What complicates the realization of such subroutines is, as we discussed previously in Section~\ref{sect:spec-intro},  branching choices, i.e. the values of guessed variables alone, do not map trivially to individual variable values. For example, in the branching representation, a node's label  $111$, i.e.~the first three guessed variables have all been set to $1$, should first be converted into a partial assignment to the variables, before we can compute its children.\footnote{For instance, in the formula $(\overline{x_1} \lor x_2) \land
(\overline{x_2} \lor \overline{x_3} \lor \overline{x_4} \lor {x_5})$
(and order $x_1 < \dots < x_5$), the node with branching representation $111$
represents the assignment $x_1, \overline{x_2}, x_3 , x_4, \overline{x_5}$, because
$x_2$ and $x_5$ will be ($s>1$)-implied.}
To know which variable is the first guessed variable, we must first compute the $s$-implications. An obvious solution is to compute all implications and store the corresponding partial assignment, but this violates the objective of using less memory than what the natural representation allows.
Since, as mentioned, Grover-based search will already achieve speed-ups over a classical strategy, we tackle the above problem in this context.

First, we simplify the problem for the PPSZ setting
by using eager evaluation of forced variables.
This compresses all the line paths in the tree,
resulting in a tree with only binary nodes.
Algorithm~\ref{algorithm:dnc-ppsz2} illustrates this using the tree node
functions. The node $v$ consists of an advice label, e.g., $111$
(including a counter indicating its length).
Implementing $ch2$ is trivial, e.g., $ch2(111, b) = 111b$.
However, this complicates the search function $P$
since it now has to eagerly evaluate forced variables,
but it anyway needs to `decode' the advice (which is a similar process).

\begin{figure}
\hspace{-1em}
\begin{algorithm}[H]
  \caption{dncPPSZ($v$) using our functions, where
                $\sizeof v$ is the length of node~$v$'s advice label.}
  \label{algorithm:dnc-ppsz2}
  \hspace{-2em}
   \begin{algorithmic}\vspace{-.3em}
   \IfThen{$P(v) = 1$}{\Return{1}}       
   \IfThen{$P(v) = 0$ \textbf{or} $\sizeof{v} > (\gamma_k+\varepsilon) n $}{\Return{0}}   
   \State \Return{dncPPSZ($ch2(v,0)$)$\vee$dncPPSZ($ch2(v,1)$)}
   \end{algorithmic}
\end{algorithm}
\end{figure}

\defmath\sadv{S_{\mathit{adv}}}

Next, we specify the problem that $P$ has to solve.
We call it \emph{s-implication with advice} (SIA),
which is intuitively and implicitly defined as follows (Appendix \ref{app:sia} provides a precise definition). Given a fixed formula $F$ and variable order $\pi$,
an algorithm solving SIA takes on input an advice string of size \sadv, specifying which branching choices will be made at the guessed variables. The algorithm should compute the path realized in an $s$-implication-resolution based process in the tree specified by $F$ by
going over the variables in the specified order, setting them to either
the forced value if $s$-implied, or to the next unused branching choice
specified in the advice register if not. It should then output:
$\ket{00}$, if more branches are encountered than the advice length;
$\ket{1b}$, if the path makes the formula (un)satisfied at some point,
such that $F = b$. 

The desired space-efficient implementation of SIA will utilize a \sadv-sized advice register for the choices, and ideally no-more than $o(n)$ ancillas (although, low-prefactor linear scaling is also acceptable as explained in Section~\ref{sect:spec-intro}). While these criteria are easily met in an irreversible computation, critically, it must be realized reversibly under these conditions, as in this case we achieve Grover-based search, by searching over the advice register.

Obtaining $o(n)$ space is not trivial for the problems we consider. In particular, it can be proven that the routines required to implement PPSZ (e.g.~repeated $s$-implication, unit resolution) are 
\P-complete under log-space reductions (as proven for unit resolution in~\cite{jones-laaser}).
It is still an open problem whether $o(n)$ space, poly-time algorithms exists for \P-complete problems~\cite[Section 5.4]{greenlaw1995limits}, but lower bounds on pebbling approaches~\cite{lengauer1982asymptotically} give a pessimistic outlook.
Therefore, an approach that focuses on identifying classes of formula which admit genuine hybrid speedups seems justified.

Accordingly, we focus on formulas where, for a given variable order, the formula has \emph{bounded index width} (biw) as defined in Definition~\ref{def:biw}.
In Appendix~\ref{app:sia}, we provide an $o(n)$-space reversible algorithm for SIA,
which works for this class of formulas.
Specifically, we provide an algorithm which has polynomial runtime,
and space complexity $S_{adv} + S_{w}$, where $S_{adv}$ is the advice-string register, and 
\EQ{
S_w = O(w \cdot \log(\nicefrac nw))
\label{eq:Sw}
}
for any $k$-CNF formula of biw $w$.
{We do so by providing a frugal, reversible implementation of SIA, and then by splitting this (deterministic) computation into blocks that require only
a partial assignment of $w$ variables,
applying Bennett's reversible pebbling strategy on those blocks~\cite{bennett}.

\begin{definition}\label{def:biw}
For a given Boolean formula $F$ and a variable ordering $\vec x$, we defined the index-width $\iw(F)$ of a formula to be the largest difference between two indices of variables in a clause of the formula $F$, 
i.e.
\[
\iw(F) = \max_{C \in F} \max_{x_i,x_j \in C} |i-j|.
\]
A formula $F$ has bounded index width (biw) $w$ if $\iw(F) \leq w$.
\end{definition}

Before giving the complexity theoretic analysis of the hybrid method obtained by using the SIA algorithm above,  we highlight a few facts.

First, the property of bounded index width is an order-dependent property. Randomizing the variable order breaks this, and consequently our result does not directly imply speed-ups for PPSZ-proper, just for the tree-search part.

Luckily, an inspection of the correctness proof of PPSZ in \cite{scheder21} reveals that not all variable orders need to be considered in the case of biw formulae.
This is because the biw property ensures that the critical clause tree, central to the proof, only reasons over variables up to distance $w \cdot \log(s)$ from the variable for which $s$-implication is computed. The critical clause tree can thus not distinguish between permutations with more than $w \cdot \log(s)$ displacement.
Permutations with no more than $k$ displacement can be efficiently obtained by $k$-sorting\footnote{A sequence $[a_1, \dots, a_n]$ is $k$-sorted iff $\forall i,j$ with $1 \leq i \leq j \leq n$, $i \leq j - k$, it holds that $a_i \leq a_j$.} a random permutation~\cite{ksorted}.
A $k$-sorted permutation of the variable order partially preserves the biw: if a formula $F$ has biw $w$ for the initial variable order, then a $k$-sorted permutation of the variable order yields a biw $w' \leq w + 2k$.

Second, bounded index width formulas have specialized SAT-solving algorithms with best run-time $O(2^w \poly(n))$, to our knowledge~\cite{index-width}. We will discuss the consequences shortly but for the moment we just focus on beating known PPSZ tree search bounds in these special cases.  
Next we continue with  complexity theoretic analysis of the hybrid algorithm with SIA.

Consider first the setting where the index width is sub-linear, $w \in o(n)$. In this case, the space complexity is sub-linear (barring the advice), which means that given a $\kappa n$-sized quantum computer, we can turn to the quantum strategy the moment $\kappa' n $ guesses remain with $\kappa' > \kappa + \epsilon$, for every $\epsilon > 0$ (we reserve $\epsilon n$ memory to hold the $o(n)$ ancillas).

From this point on, we instantiate the discussion regarding uniform trees with respect to branching cuts discussed in section \ref{sec:special}.

We can assume that the sub-trees are exponentially sized, of size $\Theta^\ast (2^{\kappa' n})$ (as this quantity is used as the upper bound on the time complexity of the classical algorithm, which we are trying to outperform), and we obtain a full quadratic sped-up  run-time of $O^\ast(2^{\kappa'n/2})$ on the same subtrees. 

The top part of the tree has size $O^\ast(2^{(\gamma_k-\kappa' ) n})$ (as tree sizes are dictated by branching choices), and by our previous analyses, this yields a hybrid run-time of $O^\ast(2^{(\gamma_k - \kappa'/2)n })$ (when $\kappa'<\gamma_k,$ otherwise we obtain a full quadratic speed-up). All in all, in terms of $\kappa$, we obtain $O^\ast(2^{(\gamma_k - (\alpha \kappa-\epsilon)/2)}),$  were $\alpha = \kappa'/\kappa$ is the coefficient from the space efficiency of SIA, and $\epsilon$ is an arbitrary small constant used to handle $o(n)$-sized ancillary registers. 

We note that interesting examples of bounded index-width formulas (with connections to statistical physics) arise when one consider restrictions of the $3$-SAT problem.
One example of such problem is Lattice SAT ($3$-SAT on a lattice, with $\sqrt{n} \in o(n)$ bounded index width), which is formally defined and proved to be NP-complete in Appendix~\ref{app:latticesat}.

At this point, we return to the fact that for bounded index width formulas specialized SAT-solving algorithms exist with best run-time of $O(2^w \poly(n))$~\cite{index-width}. In the case that $w \in o(n)$, we can decide the very satisfiability of the given formula in subexponential time.

In other words, in the PPSZ process, it is much more efficient (in terms of upper bounds) to actually solve SAT the moment we encounter a formula with a sub-linear index width, than to continue with PPSZ search.
Switching from the tree-search process to a specialized solver is of course no longer PPSZ, but a new algorithm whose properties are uncharacterized (but not worse than PPSZ). Consequently, technically, our previous results still entail an improvement over the basic PPSZ tree search.
 However, it is of course interesting to see if settings can be identified where switching to a bounded-index-width specialized algorithm actually constitutes a bad choice, and our hybrid strategy is best in general. We provide this in the next section.

\subsubsection{When hybrid quantum PPSZ search improves over known classical algorithms}
\label{sub:hybrid-vs-classical}

To defeat bounded-index-width specialized algorithms, we consider index widths which are linear in $n$.
As shown in Appendix~\ref{app:sia}, it is possible to implement $s$-implication with advice (SIA) reversibly using total space $S_{adv} + S_{w}$, where $S_{w} =  O(w \cdot \log(\nicefrac nw))$ (the ancillary space needed for reversible SIA), and where $S_{adv}$ is the size of advice itself. For all $w$, the runtime of this subroutine is polynomial in $n$, and the runtime of the overall quantum algorithm, which solves the subtree by exploring the advice string space, is $O(2^{S_{adv}/2} \times \poly(n))$.

In contrast, a classical algorithm can decide SAT on formulas of index width $w$ in time ${O}(2^{w} \poly(n))$~\cite{index-width}. It is unlikely that one can do much better than this: in the case $w=n$, achieving a bound better than $O^\ast(2^{c n}),$ for some constant $c>0$, would violate the exponential time hypothesis (ETH). 

Here we assume this also holds for index widths which are fractions of $n$.\footnote{Furthermore, under the strong ETH, c approaches 1 for large clause sizes ($k$ in $k$-SAT).} It will be convenient to fix $w = \zeta   n$ (note $\zeta$ is a constant). 
In this case, we can identify a regime in which the quantum search over the advice string is still faster than solving SAT on the formula. This is the case whenever $S_{adv}/2$ is less than $w$, as these are the exponents of the exponential part of the run time of the respective algorithms. However, we must still ensure that the quantum algorithm can be run at all. So, a part of the overall memory available must be split between the advice string, and the memory required to solve SIA with advice.

We set $S_{adv}  = \beta \kappa n$ for $\beta \in (0,1)$, and the remainder of the memory of $(1-\beta)\kappa n$ qubits is spent as ancillary memory $S_w$ for the SIA subroutine. 
Note that when setting $\zeta = \nicefrac w n$ as a constant, we have that $S_w \notin o(n)$, as $w$ is then linear in $n$. However, in what follows we show that we can always find a pair $(\beta, \zeta)$ such that at least $S_w < n$.

In summary we have that 
\EQ{
\log(1/\zeta)\zeta b \leq (1-\beta)\kappa \label{EQ:1}
}
since to process $w =  \zeta  n$-index width formula we need $\log(1/\zeta) \zeta b  n$ bits (for some constant $b$; Eq.~(\ref{eq:Sw})), and since we allocated $(1-\beta)\kappa n$ for this purpose.
Moreover, the exponents of the run-times of the quantum and classical algorithms are  $\beta \kappa n/2$ (Grover speed-up) and $\zeta c n$, respectively, so it must hold that 
\EQ{
\beta \kappa < 2 c \zeta. \label{EQ:2}
}

For our purposes, it suffices to show that for every $\kappa$,$b$,$c$, there exist a $(\beta,\zeta)$ pair for which both conditions hold.

First, we fix a $\beta$, to some (small) value $\beta'$. Next, find $\zeta$ satisfying Eq. (\ref{EQ:1}); note such a $\zeta$ exists as $\log(1/\zeta) \zeta$ decreases in $\zeta$, when $\zeta$ is small enough, converging to zero. Note, if Eq.~(\ref{EQ:1}) holds for $\beta =\beta'$ it also holds for any smaller $\beta$.
Then choose $\beta \leq \beta'$ such that Eq.~(\ref{EQ:2}) is satisfied, which can be done by choosing $\beta$ small enough.
Since these $(\beta,\zeta)$ exist for any $b$, the space overhead of $S_w$ can be made an arbitrarily small (constant) fraction of $n$.

This guarantees the existence of regions in the  space set by the advice size (controlled by $\beta$) and index width (controlled by $\zeta$) with polynomial speed-ups (for a given $\kappa$). However, there are no guarantees that the PPSZ process is guaranteed to generate formulas which will fall in this region, as discussed shortly.

\subsubsection{Grover vs quantum backtracking}
We note that, instead of running Grover's search over an advice string, the same algorithms can be utilized to perform quantum backtracking over the same trees, as we briefly announced previously.
The backtracking tree corresponds to restricted formulas as usual, where
nodes have two children (applying $s$-implication eagerly
to collapse forced nodes with one child).  

Determining children corresponds to the step $\text{SIAB}$ (see Appendix~\ref{app:siab}), and the evaluation of the leaves (final satisfiability) will actually involve running the entire scheme developed for the Grover-based approach.
The advantage here is that we will only explore the actual tree, but this does not provide a better theoretical speed up as the classical bounds assume a full tree.

However, the downside is that the walk operator utilizes two copies of the search space, and search-space-sized ancillary register, so that the available space is reduced by a factor of 4 (see the construction of the walk operator in Appendix~\ref{sub:implement-walk}). This nonetheless allows for regions of threshold-free polynomial speed-ups ($\kappa$ would be replaced with $\kappa/4$ in the above analysis). This would lead to smaller improvements in the upper bounds (i.e.~ignoring the speed-ups from exploring smaller trees), but may overall be more efficient in practice. We note that this pre-factor of 4 can probably be further improved.

\subsubsection{Assumption on the variable order}

We now discuss two clashing requirements from previous subsections.
First, to allow for a space-frugal reversible implementation of $s$-implication, in Subsection~\ref{secsec:biw} we consider formulae with bounded index width $w$, and permutations of these formulae with biw $w' = w + 2w \cdot \log(s)$.
Second, to obtain a speedup over classical algorithms for biw formulae, in Subsection~\ref{sub:hybrid-vs-classical} we require that $w$ grows linearly in $n$.
This gives rise to an issue with regard to the space requirement: with $w$ linear in $n$ and $s$ growing very slowly in $n$, we have that the ancillary space requirement (Eq.~\ref{eq:Sw}) for the permuted variable orders grows as $O(w \cdot \log(s))$, which grows faster than $O(n)$.

The permuted variable orders are used in the PPSZ proof to guarantee an upper bound of $\gamma_k n$ on the expected number of guessed variables during a PPSZ run~\cite{scheder21}.
Limiting the $k$-sorted permutations to some smaller $k$, e.g. $k = w$ would solve the space requirement, but fails to retain the guarantee on the expected number of guesses.

We therefore add the assumption that, for a formula $F$ with biw $w$, at least $1/\poly(n)$ of its $w$-sorted permutations are ``good'' variable orders. With a good variable order we mean a variable order for which the expected number of guessed variables during the PPSZ run is at most $\gamma_k n$.

\subsubsection{Putting it together}
Under the assumption made on the variable order in the previous subsection, we can put together the hybrid PPSZ algorithm as follows below. The result is summarized in \autoref{th:ppsz-speedups}. 

In a hybrid run of PPSZ-proper, which calls PPSZ tree search on order-randomized instances, the algorithm 
would keep track of the current restricted formula. 
Then, it would switch to the quantum subroutine whenever the constraints for speed-ups, depending on the advice size (which is equal to the number of guesses remaining before $\gamma_k n$ guesses have been used up), and index width can be satisfied. Unfortunately, it is not the case that all the subtrees (for all initial formulas) will necessarily end up in the  regions satisfying the criteria, while the subtrees are still exponentially sized.\footnote{The formulas will become eventually small enough but if number of guesses decreases earlier than the index width, we may end up with trivial trees; in this case, interestingly, PPSZ will be faster than the upper bounds on index-width-specialized SAT solver.} However, in those cases, we are also not dealing with a hard instance, in the sense of the exponential time hypothesis.

When the trees are large enough, we will obtain a polynomial speed-up on that subtree, but this must occur for a constant fraction of the trees, to achieve an overall polynomial speed-up. This is true even if the initial formula is of bounded-index width, which is convenient as index width can only decrease, as index width, and number of guesses remaining can decrease at different rates.




\begin{theorem}
\label{th:ppsz-speedups}
For a class of Boolean formulas $F$, the hybrid PPSZ algorithm achieves a polynomial speed-up if the following conditions are met:
\begin{enumerate}
\item $F$ is a hard instance (in the sense of the exponential time hypothosis).
\item $F$ has bounded index width $w$, and $w$ grows linearly in $n$.
\item Out of all $w$-sorted permutations of the variables of $F$, at least $1/\poly(n)$ have the property that the expected number of guessed variables during the PPSZ run (given that permutation) is at most $\gamma_k n$.
\end{enumerate}
\end{theorem}

Similar results may be obtainable for planar formulas. This is interesting as planarity does not depend on the variable order, so will not be obstructed by the
randomized order introduced by PPSZ.
However for planar formulas, we have sub-exponential algorithms with runtime $O({2^{\sqrt{n}}})$, so PPSZ is not the best choice to begin with.


Finally, note that, outside of the context of our hybrid divide-and-conquer approach, our results imply that there is a quantum implementation of PPSZ via quantum backtracking which achieves a quadratic speed-up over classical implementations of PPSZ.
\begin{theorem}
The existence of a quantum implementation of SIA (Appendix~\ref{app:sia}) together with Algorithm~\ref{algorithm:dnc-ppsz2} implies a quantum backtracking implementation of PPSZ which achieves a quadratic speed-up over classical PPSZ.
\end{theorem}

\section{Discussion}
\label{sec:discussion}

The previous results constitute settings where we could obtain speed-ups for well characterized cases. In this discussion section, we consider the applicability of the hybrid method to the DPLL algorithm, and briefly discuss the consequences of polynomial time cut-offs, and alternative scalings, and finally, the limitations of hybrid methods.

\subsection{Potential for DPLL speed-ups}
\label{sub:potential}

In~\cite{qbacktracking}, Montanaro has demonstrated how quantum backtracking can be used to speed-up basic DPLL algorithms, which utilize just unit rule and pure literal rule resolution methods. This suffices for polynomial speed-ups whenever the search trees are exponentially large, and the satisfying solution does not exist, or appears late in the search order in the classical algorithm.\footnote{More precisely, we require that the effective explored trees, as discussed in Section~\ref{sec:hybrid} are exponentially sized.} 

However, to achieve improvements in hybrid settings, with a $\kappa n$-sized quantum computer, we have additional criteria on the structure of the tree and the algorithm as discussed in detail previously. This makes matters more complicated for DPLL. In general, there is less theory for DPLL we could apply to these questions in comparison to the PPSZ method. 
Nonetheless, we can at least resolve some of the technical concerns regarding the subroutines. 

In Appendix~\ref{app:backtracking}, we provide space-efficient implementations of quantum backtracking for pure literal rule and the unit rule, which suffice to obtain essentially linear space complexity with respect to the natural instance size (number of free variables).
In this case, any set of formulas which generate dense restricted formulas at depth $\kappa' n$ (where $\kappa' n$ is the effective quantum computer size) can be sped-up in a hybrid scheme. At present, we can only state that this is guaranteed (under SETH)  $k$-CNF formulas (for high $k$), as explained in Section~\ref{sub:seth}.
However, as characterizations of the lower bounds of DPLL trees improve, it is possible we obtain provable speed-ups for interesting $\kappa$ ratios, if not threshold free.

Using the same methods we provided for $s$-implication with advice for bounded-index width formulas, we can also provide equally space efficient algorithms for a
space-efficient 
pure literal rule with advice. 

As stated earlier, we can implement all these algorithms in the quantum backtracking setting, achieving speed-ups in terms of the tree size, whereas branching number just determines the space complexity.
In this case, we could employ the quantum algorithm earlier, depending not on the natural instance size, but rather, on the number of branches like in PPSZ. However, the problem is that in the case of DPLL we have no meaningful upper bounds on the needed advice size. This can cause false negatives: if the quantum algorithm is utilized too early, we will run out of advice even if there exists a path to a satisfying assignment. 

Since running the search still constitutes an exponential effort (in the advice string size), we cannot simply run the algorithm at each vertex of the tree. Nonetheless, this approach may offer a path to viable heuristic for the classes of formulas where we have a reasonable upper bound on the number of branches, or additional information on the tree structure -- since DPLL is predominantly a heuristic method, such a result is fitting. 

We end our discussion on the hybrid method for DPLL by considering an additional constraint: real world considerations, specifically that all the run-times are polynomially bounded.

\subsubsection{DPLL in the poly-domain}
\label{sub:dpll-practice}

In practice, DPLL is often used as a heuristic, on formulas for which it can find a satisfying assignment with high probability in polynomial time.

In this subsection, we consider the possible consequences of obtaining small polynomial improvements over classical DPLL with polynomial cut-offs using a hybrid, or fully quantum method. The problem is: in the poly-world, poly-overheads, which we ignored in all previous considerations, matter.

For some $c > 0$, we define a polynomial cutoff for DPLL to be a polynomial size limit $n^c$ for the subtree explored by DPLL for an arbitrary formula. On such a subtree, classical DPLL will take $\mathcal{C} = \text{poly}_1(n) \cdot n^c$ time to terminate, while our hybrid DPLL will take $\mathcal{Q}= \text{poly}_2(n) \cdot n^{\alpha c}$, where $\alpha \in [\frac{1}{2},1)$ depends on the shape of the subtree defined by the first $n^c$ vertices explored by DPLL, and the size of the quantum computer that we are given access to, and $\text{poly}_1(n)$ and $\text{poly}_2(n)$ are  the run-times of individual subroutines in involved in one query of the  classical and quantum algorithm, respectively. 

Note that in this setting, the size $T'$ of the subtree explored by the algorithm is much smaller than the size of the whole search tree $T$, and therefore, we need to exploit a variant of quantum backtracking, whose inner working is explained in Appendix~\ref{sec:quantum-tree-size-estimation}.

The classical runtime is greater than the hybrid divide-and-conquer runtime whenever 
$\mathcal{C} > \mathcal{Q}$, i.e. 
\[
\frac{\text{poly}_2(n)}{\text{poly}_1(n)} < n^{(1-\alpha)c},
\]

so that our hybrid divide-and-conquer approach improves on the classical runtime whenever 
$c > \frac{\beta}{1-\alpha}$,
where $\beta$ is such that $\frac{\text{poly}_2(n)}{\text{poly}_1(n)}=n^\beta$.

It would be interesting to estimate how the ratio $\alpha \in [\frac{1}{2},1)$ evolves depending on $\kappa$, for various ordering of the vertices of the search tree (depth-first search and breadth-first search in particular).

A careful reader may notice that  numerous problematic assumptions have to be taken into account to achieve, arguably, very small improvements.
We point out that this is all a consequence of our setting chosen to enable us to provide \emph{clean statements about asymptotic speed-ups}. 
In particular, it is for these reasons that we assume that the quantum computer scales with some quantity which can be used to characterize run times and space efficiencies. This is the ``natural instance size''. 

However, one can easily switch to a less demanding model.
Let $c(n)$ be the space complexity of the fastest known quantum algorithm for the problem class under considerations.
If we assume that the quantum computer is of the size $\kappa \cdot c(n),$ which is still smaller than what we need to run the basic algorithm hence interesting, stronger results may be possible, with the expense being a less clean, and more conditional, analysis.

Note that, in the real world, a quantum computer will offer an advantage in real-compute time, whenever it is used in a hybrid setting and the quantum device achieves a real-compute-time speed-up (not as a scaling statement, but in the units of seconds), on the particular sub-instance. Real-world analyses of speed-ups, for fixed instance sizes must thus take into account real-world parameters.

Although in the present work, we are interested in the more theoretical questions, we believe that further improvements are still possible in this much more stringent, and more general model.  
This leads us to the next section, which discusses the general limitations on the speed-ups obtainable by the the hybrid method.

\subsection{Limitations of the hybrid approach and the
framework of networked smaller quantum devices}
\label{sub:limitations}

In the approaches we have explored, the best improvement one can obtain given access to a fractionally smaller quantum device is polynomial.
This is not just a consequence of the fact that we use quantum backtracking or Grover's search as the backbone of the speed-up (which themselves only allow a quadratic improvement), as one may think.
It is also a consequence of the hybrid divide-and-conquer setting; since the idea is to speed up the explorations of exponentially large trees by delegating subtrees of fractional height (say $\kappa' n$) to a quantum device, by construction, we still rely on the classical algorithm to explore the ``top'' (we denoted this tree $\mathcal{T}_0$ previously) of the tree. This will generically take exponential time in the fraction $(1-\kappa')n $ which is still exponential in $n$. 

In more detail, in the hybrid divide and conquer approach, the total run-time essentially attains the form  
$t_{hybrid} \in O^\ast(2^{\tau_h n})$, with $\tau_h = ((1-\kappa') \tau_c + \kappa'\tau_q )$ in the case of uniform trees where each (large enough) tree of height $n'$ is of size $2^{\tau_c n'}$.
We assume here that $\tau_q$ dictates the relative speed of the quantum algorithm (e.g. the quadratic speed-up of backtracking implies $\tau_q = \tau_c/2$.
Even if the quantum query complexity and runtime was exactly zero (or, exponentially faster than the classical method), what remains is $O^\ast(2^{\tau_c(1-\kappa') n} )). $ This constitutes a just polynomial speed up over $t_{classical} = O^\ast(2^{\tau_c n})$, which is the classical runtime on the entire tree.

 This we summarize in the following lemma given without further proof.
\begin{lemma}
The speed-up attainable by the hybrid divide and conquer approach with a $\kappa'$-effective size quantum computer is at best polynomial. The ``speed-up'' is subquadratic or quadratic at best, i.e.~it holds that $ t_{hybrid} (n) \in O((t_{classical}(n)))^{1-\alpha}$, where the degree of speed-up $\alpha$ is bounded $\alpha \leq \kappa'$ (if the quantum algorithm has polynomial run-time on the subtrees), and by $\alpha = \nicefrac 12$ (i.e.~quadratic improvement) otherwise (if a Grover-type speed-up).
\end{lemma}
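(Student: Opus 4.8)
The plan is to turn the heuristic estimate displayed immediately above the lemma into two unconditional lower bounds on $t_{hybrid}$ that hold for \emph{every} hybrid schedule built on the search tree decomposition of Section~\ref{sec:hybrid}. First I would fix the decomposition $\{\mathcal{T}_j\}_{j=0}^{J}$ induced by a $\kappa'$-effective-size device: by definition the cut-off vertices are the largest sub-instances the quantum algorithm can handle, so the top tree $\mathcal{T}_0$ — the part of $\mathcal{T}$ that the classical algorithm necessarily traverses alone — has height at least $(1-\kappa')n$. In the uniform-tree regime of Section~\ref{sec:special}, where every sufficiently tall subtree of height $n'$ has size $\Theta^\ast(2^{\gamma_c n'})$, this gives $T_0 = \Theta^\ast(2^{\gamma_c(1-\kappa')n})$ while $t_{classical} = \Theta^\ast(T) = \Theta^\ast(2^{\gamma_c n})$.

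The first key step is simply that $t_{hybrid} = \Omega^\ast(T_0)$, independently of the speed of the quantum subroutine: even granting the quantum part zero cost, the classical traversal of $\mathcal{T}_0$ must be paid in full. Taking logarithms, and using condition~\ref{it:meta3} of Theorem~\ref{th:metatheorem} to absorb polynomial per-query overheads, one gets $\log t_{hybrid} \geq \gamma_c(1-\kappa')n - O(\log n) = (1-\kappa'-o(1))\log t_{classical}$, which is exactly $t_{hybrid} \in \Omega\!\big(t_{classical}^{\,1-\kappa'-o(1)}\big)$; hence no schedule can reach a speed-up exponent $\alpha > \kappa'$. This bound is saturated only in the (hypothetical) limit of a quantum subroutine running in time polynomial in $n$ on each subtree, where $t_{hybrid} = \Theta^\ast(T_0)$; the realistic quadratic-speed-up case worked out in Eq.~(\ref{Eq:runtime}) instead lands at $\alpha = \kappa'/2$, comfortably inside it.

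For the Grover case I would sharpen the argument using that Grover on a subtree $\mathcal{T}_j$ costs $\Theta^\ast(2^{br(\mathcal{T}_j)/2})$ and that any binary tree of size $T_j$ contains a root-to-leaf path with more than $\log_2(T_j/n)$ branchings (the bound used repeatedly in Section~\ref{sec:gvb}), so the quantum cost on $\mathcal{T}_j$ is $\Omega^\ast(\sqrt{T_j})$; on sub-exponential subtrees the cost is anyway $T_j \geq \sqrt{T_j}$. Hence $t_{hybrid} = \Omega^\ast\!\big(T_0 + \sum_{j\geq 1}\sqrt{T_j}\big)$, and by subadditivity of $\sqrt{\cdot}$ we get $\sqrt{T_0} + \sum_{j\geq 1}\sqrt{T_j} \geq \sqrt{T_0 + \sum_{j\geq 1} T_j} = \sqrt{T}$, so $t_{hybrid} = \Omega^\ast(\sqrt{T}) = \Omega^\ast\!\big(\sqrt{t_{classical}}\big)$, i.e.~$\alpha \leq \tfrac12$; the value $\tfrac12$ is approached on a fully delegated full tree, where $t_{hybrid} = \Theta^\ast(2^{n/2})$.

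The main obstacle I expect is not the uniform case — there the argument is essentially the displayed computation — but identifying the correct hypothesis for the first bound: for irregular decompositions $t_{hybrid} = \Omega^\ast(T_0)$ still holds, but relating $T_0$ to $t_{classical} = \Theta^\ast(T)$ needs a genuine genericity assumption (the uniform-density-at-scale or uniform-density-at-branching-level conditions of Section~\ref{sec:special}), since a priori $T_0$ may be a vanishing fraction of $T$ — the earlier counterexample with $\mathcal{T}_0$ a single subtree of size $2^{n/2}$ shows exactly this. I would therefore state the lemma relative to the uniform-tree model made explicit just before it, while stressing that the Grover bound $\alpha \leq \tfrac12$ is entirely unconditional, using only subadditivity of $\sqrt{\cdot}$ and the at-most-quadratic nature of the quantum primitive and never the shape of $\mathcal{T}_0$. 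A minor point to record is that, since the lemma concerns worst-case inputs (no solution, or the last-traversed leaf satisfying), the online nature of classical backtracking does not help, and the same reasoning transfers verbatim to the effective explored trees of~\cite{ambainis-kokainis}.
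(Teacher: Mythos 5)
Your proposal is correct and follows essentially the same route as the paper, which states the lemma without further proof on the strength of the computation immediately preceding it: in the uniform-tree regime the classically traversed top tree $\mathcal{T}_0$ costs $\Omega^\ast(2^{\gamma_c(1-\kappa')n})$ even if the quantum subroutine were free, which against $t_{classical}=\Theta^\ast(2^{\gamma_c n})$ gives $\alpha\leq\kappa'$, while the at-most-quadratic nature of the quantum primitive caps $\alpha$ at $\nicefrac 12$. Your subadditivity argument $T_0+\sum_{j\geq 1}\sqrt{T_j}\geq\sqrt{T}$ for the quadratic bound, and your explicit observation that the $\alpha\leq\kappa'$ bound needs the uniform-density hypothesis to relate $T_0$ to $T$ (whereas the $\alpha\leq\nicefrac 12$ bound is unconditional), are small but genuine sharpenings of what the paper leaves implicit.
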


The limitation of the above approach is that the quantum device gets used late in the game. Conceivably we can imagine settings where the quantum computer takes a more active role in the top of tree, or something similar. 
Indeed, beyond standard backtracking settings, better speed-ups are also possible, under mild, yet unavoidable assumptions.

In what follows we will assume that the evaluation of an arbitrary quantum circuit of size $\poly(n)$ on a classical computer takes exponential time. For concreteness we assume that solving BQP-complete promise problems, e.g. the problem of, for a given quantum circuit realizing some unitary $U$, determining whether the measurement outcome probability of one output qubit is $0$ (of the register in the state $U \ket{0}^{n}$) is either larger than $2/3$ or below $1/3$ (under the promise that it is one of the two), requires $\Omega^\ast(2^{\tau n})$ classical computing steps (ignoring polynomial terms).\footnote{Unless we assume that quantum computations cannot be simulated in polynomial time, no better than polynomial improvements can be proved.} This is a circuit output problem.

In this case it is trivial to construct pathological examples of computational problems where exponential speed-ups can be attained (given a quantum computer of size $\kappa' n$) by arbitrarily choosing what the natural notion of the instance size should be.
For instance, consider the circuit output problem, where the quantum circuit is special: no gates act on $(1-\kappa') n$ wires. In this case obviously a quantum computer of size  $\kappa' n$ allows for a polynomial time solution, whereas the classical computer, by assumption requires $\Omega^\ast(2^{\kappa' \tau n})$ steps, which is an exponential separation for any $\kappa'$.

While this example is obviously pathological, one can easily imagine a more complicated yet related computation, where the $n$ input bits are processed first by an involved classical computation which produces a specification of a quantum circuit on $\kappa'n$ wires, and the output of the overall computation is the output of that circuit.

This is an example of a broad spectrum of scenarios, where a (fewer-qubit) quantum computation is called as a subroutine of an over-arching classical computation.

One class of such computations are the hybrid approaches we investigate in this work. Another involves settings where classical computations are broken down, to distill the computationally hardest part, which is then delegated to a quantum machine, see e.g.~\cite{Benedetti_2018}.

It is also clear that, to obtain the best speed-ups, the quantum computer should be used at wherever possible in the computation, as is discussed in~\cite{2004.00026}.

In what follows we consider a broad framework, the main purpose of which is to connect our work to less-than-obviously related works in quantum computing and quantum machine learning, such as the ones we exemplified above.

\paragraph{QuNets} We wish to define a hybrid computational model which captures some of the facets of the limitations that quantum computers face in the near-term, namely size.
We imagine access to a $k$-qubit quantum device, and want to consider all computations that can be run, when such a device is controlled, and augmented by, a (large) classical computer. This classical computer can pre-and-post process data, in between possibly many calls to the quantum device. In our model we will describe everything sequentially, although it will be a natural question how this can be parallelized when many $k$-sized quantum machines, which can communicate only classically, are available.

\newcommand{\QuNet}{\text{QuNet}}
We name such a model a $\QuNet$, and with $\QuNet(n,k)$ (with other qualifiers, described shortly) we denote the set of functions such a hybrid computation system can realize, given $n$-bit (classical) inputs and a $k$-qubit device. The number of output bits is specified when needed. It is worth noting that related models have been introduced and studied under various names by other authors, both explicitly and implicitly.\footnote{For instance, the standard diagrammatic representation of circuits we can work with``double'', classical wires, which can be classically processed, constitutes one such model~\cite{nielsen-chuang,PhysRevLett.76.3228}.}
Specific to our setting, however, is the focus placed on the  limitations of the qubit numbers $k$, relative to the instance size $n$.

It makes sense to distinguish two types of $\QuNet$s: adaptive ($\text{a-QuNet}$) and non-adaptive ($\QuNet$).
There are many ways to formalize both models, here we provide one approach; we define the latter first.

\newcommand{\cirqF}{\text{cirqF}}
Let $\cirqF_k(\vec{c})$ denote the family of randomized functions, in which each randomized function takes a bitstring $\vec{c}$ as input and specifies the realization of a quantum circuit for some unitary $U$ over $k$ qubits. That is, the output of such a function is some $k$-sized bitstring $\vec{o}$, occuring  with probability $|\langle \vec{o} | U | 0^{k}\rangle |^2$, i.e.~the function outputs what the quantum circuit would output.

 $\QuNet(n,k)$ is then a (randomized) Boolean circuit, with $n$ input wires, an arbitrary number of ancilla wires, where a standard Boolean gate-set is augmented with the  
gate set $\text{cirqF}_k(\vec{c})$, which take $|\vec{c}|$ input classical wires,\footnote{There are many ways how a bitstring may specify a circuit, and how the circuit depth is encoded in the bitstring, but this is not relevant for us. All that matters is that some encoding exists.} and output $k$ wires.

Note a $\QuNet(n,k)$ captures the two previous examples where an exponential speed up between a fully classical model $(\QuNet(n,0))$ and the genuine hybrid $\QuNet(n,\kappa' n)$ is provable, assuming quantum computers are not efficiently simulatable.

In the above model, the quantum computation is used essentially as a ``black box''. But in principle, more interaction is possible, once partial measurements are allowed. 
Here the classical computation can request that some of the quantum wires be measured, and the rest of the circuit may depend on the outcome.

The $\text{a-QuNet}$ captures this additional freedom. It is easiest to characterize in a hybrid classical-quantum reversible circuit model common in quantum computing (where single wires are quantum, double classical)~\cite{nielsen-chuang}.
We consider a (quantum-like) circuit of $n$ classical input wires, $m$ other ancillary classical wires pre-set to zero, and a quantum register of $k$ quantum wires pre-set in the state $\ket{0}$. We allow three types of gates: fully classical reversible gates (e.g. Toffoli and $X$ -- negation -- will suffice); standard quantum gates, acting only on the $k$ qubit wires; and CQ gates and QC gates. 
The CQ gates are classically controlled quantum gates, i.e.~a quantum gate is applied depending on the state of a classical wire; QC gates are measurements: a number of quantum wires is measured in the computational basis, and the outcome is xored with the value of some target classical wires, matching in number. After measurement, we assume the state of that particular quantum wire is again re-set to $\ket{0}$. 

It is not difficult to see that  $\text{a-QuNet}$s contain $\QuNet$s:  the measurements are done on all wires, and where we note that any $\cirqF_k(\vec{c})$ gate can be implemented by using CQ gates. In terms of which functions they can realize, the two models are clearly equivalent; in fact, if complexity is not taken into account, the classical computation can simulate the entire quantum computation so indeed $\QuNet(n,0)$ is already universal.
However in terms of efficiency and in other scenarios these two models can differ.
For instance, $\text{a-QuNet}$ captures error correction and  fault-tolerant quantum computation protocols, and also the measurement-based quantum computing paradigm~\cite{PhysRevLett.86.5188}, where classical feedback from quantum measurements is extremely beneficial or assumed by construction. 

Further it makes sense to limit the sizes of classical and quantum computations in both models, which allows for a more fine-grained comparison. With $\QuNet_{x,y}(n,k)$ we denote the function family that can be realized using no more than $x$ gates (including the quantum functions), and where the quantum circuits used use no more than $y$ gates (note that in general $y \in O(|\vec{c}|)$). In the adaptive model we can simply count the classical gates vs the QC and CQ gates.
Instead of particular values $x$ and $y$ can denote function families, e.g.  $\poly$ or $\text{exp}$, so $x=\poly$ is a short hand for $x = O(\poly(n))$.\footnote{Note that with these limitations the number of classical ancillas we need to allow is also bounded by $x+y$.}

With complexity-theoretical considerations, the relationship between  $\text{a-QuNet}(n,k)$ and $\QuNet(n,k)$ is not entirely clear, in that, in general, classical adaptivity may reduce the number of quantum gates needed -- it is well known that any classical control can be raised to fully quantum (with no classical feed-back), but at the expense of more quantum wires. For instance in the case of an algorithm using QPE to some precision $\epsilon$, in many cases it is known that one can perform most computations using $1$ ancillary wire adaptively\footnote{All that is required is that the ancillary qubit is measured, and reset for the next step in QPE.} (instead of $\log(1/\epsilon)$ ancillas which can be measured at once). In turn it is not clear the same can be done non-adaptively, where at each step the entire register must be measured, 
without at least $\polylog(1/\epsilon)$ multiplicative additional computational costs (see~\cite{OBrien2019} for state-of-art approaches to single qubit QPE).

More generally, to our knowledge, not much is known about the costs of rewriting an adaptive circuit with partial measurements as a circuit with complete measurements without introducing ancillary qubits; but efficient methods for this could simplify the execution of quantum algorithms on small machines that are also limited in coherence times. This topic goes beyond the scope of the present paper. 

We finalize this section by highlighting the connections between our hybrid model and other related lines of investigation, in the context of $\QuNet$s.

One example is our hybrid divide and conquer method, where the corresponding $\QuNet(n,k)$ has a near-trivial repeating structure as all quantum computations are of the same type, tackling smaller problem instances generated by a classical pre-processing step (the `top of the tree').
In particular, previous results in the hybrid approach are examples of $\QuNet_{\text{exp},\text{exp}}(n,k)$ which solve various NP-hard problems exactly, faster than their classical counterparts.
In the context of quantum annealers, all schemes developed for the purpose of fitting a larger computation on a fixed sized device (e.g. see~\cite{AJAGEKAR2020106630}) fit in this paradigm, and they ostensibly `get more' out of the device, however mostly in a heuristic setting where little can be proved. These are examples of $\QuNet_{\poly,\poly}(n,k)$ which tackle various NP-hard and quantum chemistry problems heuristically.
A relatively recent paper that also focuses on getting the most out of a smaller device utilizes data reduction, see~\cite{2004.00026}.
In all these examples, the computational problem is from a classical domain, and the approach is to `quantize' subroutines.

In an opposite vein, in~\cite{peng,bravyi}, the authors present hybrid computations which compute the output of a large quantum circuit (given on input), calling a smaller quantum device. These are examples of a $\QuNet_{\text{exp},\poly}(n,k)$ which solves the problem of simulating quantum computations. In this case, only the number of calls to the quantum device, and the classical processing is exponential, whereas the quantum circuits are polynomially sized.
It is clear that constructing $\QuNet$s with small $k$ for some hard problem is appealing for near-term quantum devices. However, it is not clear what are the families of functions and the complexity classes which can be captured by this model. 

Similarly the relationship between QuNets and classical and quantum parallel complexity classes which also care about splitting of the computation on multiple units (however, not caring about the required space) also remains to be clarified.
Coming up with a structure and a $\QuNet$ algorithm for a target problem may be difficult.

In the domain of variational methods, specifically applied to machine learning this problem could be circumvented. Any such network where the quantum circuits are externally parametrized is a valid Ansatz for a parametrized approach to supervised learning, or to generative modeling.
In particular, such networks generalize neural networks. Individual neurons are replaced by a circuit, a subset of parameters of which depend on the input values, and the remainder is free. The free parameters play the role of tunable weights in an artificial neuron. 
Ways to construct meaningful $\QuNet$s for machine learning purposes of this type are a matter of ongoing research.
\ \\

\acknowledgements
VD thanks Tom O'Brien for discussions regarding adaptive and non-adaptive QuNets.
This research was supported by the Dutch Research Council (NWO/OCW), as part of the Quantum Software Consortium program (project number 024.003.037), 
and has been sponsored by the European Union's Horizon 2020 research and innovation program under the NEASQC project, grant agreement No 951821.

\bibliographystyle{quantum}

\bibliography{dnq-backtrack}


\appendix

\clearpage
\newpage 

\section{Proof of Proposition~\ref{prop:ppsz-is-dnq}}
\label{proof:ppsz-is-dnq}

In this section, we prove that dncPPSZ (see Algorithm~\ref{algorithm:dnc-ppsz}) is correct and has a runtime of $2^{(\gamma_k+\varepsilon)n}$, i.e.,
it equals the best-known SAT algorithm as recent analysis of
PPSZ confirms~\cite{scheder21}
(it surpassed Biased PPSZ~\cite{biased-ppsz} again).
We focus here on unique-$k$-SAT, in which the input formula has a unique satisfying assignment 
if satisfiable. (The results can be generalized to general $k$-SAT based on~\cite{scheder21}.)
Let $F$ be a $k$-SAT formula on $n$ variables, $\vec u$ be the unique satisfying assignment of $F$ (provided $F$ is satisfiable).

Proving the runtime of dncPPSZ is easy, as shown by \autoref{th:dncPPSZ-runtime}, which we mainly introduce to define the best-known value of $\gamma_k$.
}

\begin{theorem}\label{th:dncPPSZ-runtime}
Algorithm~\ref{algorithm:dnc-ppsz} has a runtime of $2^{(\gamma_k+\varepsilon)n}$
with $\gamma_3 \approx 0.386229$~\cite{scheder21}.
\end{theorem}
\begin{proof}
The algorithm only explores a tree truncated to depth $(\gamma_k+\varepsilon)n$ 
for a given (fixed) permutation $\pi$.
\end{proof}

We now show that dncPPSZ is a correct Monte-Carlo algorithm, i.e., it finds the unique satisfying assignment with constant probability  if it exists
(hence, if repeated a constant number of times, it will decide unique satisfiability).
To this end, we first reiterate some results from the PPSZ algorithm studied in~\cite{scheder21}.

In contrast to the dncPPSZ algorithm, which tries all assignments with backtracking, recent versions of PPSZ~\cite{scheder21} sample assignments using an \concept{advice string} $\vec a$ and the \concept{Decode} function shown in Algorithm~\ref{algorithm:decode}.
According to a random variable order $\pi$, the function only uses a bit of the advice string $\vec a$ when the next variable is guessed, i.e., when the heuristic fails to establish its truth value (see the \texttt{else} branch).
It also does not terminate evaluation early, as in backtracking, i.e., it naively continues evaluation even when the formula is already (un)satisfied.
This suffices because the only point of the Decode function is to study the expected number of guesses under $s$-implication and to demonstrate how the number of guesses dictates the expectation of finding the satisfying assignment.

{
\setlength{\intextsep}{0em}
\begin{figure}
\hspace{-1em}
\begin{algorithm}[H]
  \caption{Decode($F$, $\vec a$, $\pi$) where the value $s$ of $s$-implication is a large constant or grows very slowly in $n$~\cite{scheder21}. The function returns false or the unique satisfying assignment, as a CNF formula, if the advice $\vec a$ and order $\pi$ `decode' to it.}
  \label{algorithm:decode}
   \begin{algorithmic}
   \State \textbf{if} $\pi = \vect{}$ \textbf{then return} $F = \emptyset$ \Comment{SAT?}
   \State $x \gets \pi[0] $ \Comment{first var in $\pi$}  
   	\If{$F \models_s (x = b)$ \textbf{for} $b \in \set{0,1}$}\Comment{forced:}
	      \State $c := b$
	\Else\Comment{guessed:}
	      \State $c := a[0]$
	      \State $\vec a := \vec a[1\dots]$ \Comment{postfix of array}
   \EndIf
    \State \Return{Decode($F_{|x = c}$,  $\vec a$, $ \pi[1\dots]$) $\land\,\, (x = c)$  } 
   \end{algorithmic}
\end{algorithm}
\end{figure}
}

\autoref{th:decode} shows that Decode uses $\gamma_k n$ guesses to find the unique satisfying assignment of a $k$-SAT formula $F$. By Corollary~\ref{cor:decode}, it can be found with high probability with a sub-exponential number of (random) calls to Decode.
The PPSZ algorithm is therefore defined as the loop that calls Decode sufficiently many times.

\begin{theorem}[Adapted from \cite{scheder21} Th.4]
\label{th:decode}
Let formula $F$, advice $\vec a$ and order $\pi$ be such that Decode($F, \vec a, \pi) = \vec u$ 
 (Algorithm~\ref{algorithm:decode}).
Then the expected number of guesses in Decode
equals  $\gamma_k n$.
\end{theorem}


\begin{corollary}[\cite{scheder21} Ob.5]
\label{cor:decode}
The unique satisfying assignment is found with high probability with $\staro(2^{(\gamma_k+\varepsilon)n})$ calls to Decode with random assignment $\vec a$ and order $\pi$.
\end{corollary}

Instead of trying random advices, the dncPPSZ algorithm searches all advices of length $(\gamma_k {+} \varepsilon)n$.
Since it implements the Decode function along each search branch, merely adding  backtracking when a formula becomes (un)satisfiable on a partial assignment, it also inherits the probability of finding the unique satisfying assignment from  PPSZ.
\autoref{th:ppsz} shows this. (It implies \autoref{prop:ppsz-is-dnq}.)

\begin{theorem}
\label{th:ppsz}
Let $F$ be a $k$-SAT formula defined on $n$ variables and $\pi$ a permutation in $S_n$.
Then the probability that dncPPSZ$_\pi$ returns a satisfying assignment is at least 
$1 - \frac1{1+\nicefrac{\varepsilon}{\gamma_k}}$.
\end{theorem}
\begin{proof}
Let $\mathit{g}(\pi)$ denote the number of guesses Decode performs for $\vec a$ with  
Decode($F, \vec a, \pi) = \vec u$.\\  By applying Markov's inequality, we obtain:

\noindent
\begin{align*}
\text{Pr}_\pi\left[\mathit{g}(\pi) > (\gamma_k+\varepsilon)n \right] &\leq
\text{Pr}_\pi\left[\mathit{g}(\pi) \geq (\gamma_k+\varepsilon)n \right] \\&\leq
\frac{\mathbb{E}_\pi[\mathit{g}(\pi)]}{(\gamma_k+\varepsilon)n }
    =\frac{\gamma_k}{\gamma_k + \varepsilon},
\end{align*}
which is an upper bound on the probability that dncPPSZ$_{\pi}$ fails to find a satisfying assignment, given it exists.
Subsequently, $1 - \frac{\gamma_k}{\gamma_k + \varepsilon}$ is then a lower bound on the probability that dncPPSZ$_{\pi}$ succeeds.
\end{proof}

\section{Space-efficient quantum subroutines for tree search algorithms}
\label{app:efficient-qalgs}

In this section, we implement quantum subroutines for tree search algorithms. We implement a space-efficient Grover-based search for $k$-SAT, and then explain how to efficiently implement Quantum Phase Estimation (QPE) for realizing quantum backtracking.

\subsection{Space-efficient formula evaluation oracles for $k$-SAT}
\label{app:efficient-grover}


\begin{figure}[b!]
\centering
\mbox{
 \Qcircuit @C=.7em @R=.7em {
  \lstick{\ket{x}} & \qw & \qw & \ctrl{3} \qw & \qw &\qw \\
  \lstick{\ket{y}} & \qw & \qw & \ctrl{2} \qw & \qw &\qw \\
  \lstick{\ket{z}} & \qw & \gate{X} & \ctrl{1} \qw & \gate{X} &\qw \\
  \lstick{\ket{0}} & \qw & \gate{X} & \targ  & \qw &\qw &\rstick{\ket{C(x)}}
 }}
\caption{Na\"ive implementation of the evaluation of clause $C \equiv (\neg x \vee \neg y \vee z) \equiv \neg (x \wedge y \wedge \neg z)$}
\label{fig:toffoli-clause}
\end{figure}

\begin{figure*}
\centering
\mbox{
 \Qcircuit @C=1em @R=.7em {
   \lstick{\ket{0}_\text{anc}} & /^p \qw & \qw
  & \multigate{2}{\mathcal{O}_f} & \qw & \qw & \qw & \qw & \cdots \\
  \lstick{\ket{0}} & /^n \qw & \gate{H^{\otimes n}} 
  & \ghost{\mathcal{O}_f} & \gate{H^{\otimes n}} & \gate{2 \ket{0^n}\bra{0^n} - I_n}         & \gate{H^{\otimes n}} & \qw & \cdots \\
  \lstick{\ket{1}} & \qw     & \gate{H}             
  & \ghost{\mathcal{O}_f}        & \qw                  & \qw                                       & \qw                  & \qw & \cdots \\
 }
 }
\caption{\label{fig:grover} Grover's algorithm: the oracle $\mathcal O_f$ xor's the result on the phase wire (bottom).}
\end{figure*}

To implement Grover search of Figure~\ref{fig:grover} for $k$-SAT, the oracle $\mathcal O_f$ should evaluate a formula $F$ on an assignment of $n$ variables.
A naive oracle implementation uses $n+m+1$ qubits for a formula of $m$ clauses, including $p \equiv m+1$ ancilla qubits to evaluate each clause and their conjunction. This oracle implementation evaluates each clause and stores the result in a dedicated ancilla qubit. Figure~\ref{fig:toffoli-clause} shows an example which uses a $3$-controlled Toffoli gate.
After all clauses are evaluated, an $m$-controlled Toffoli gate can computed the final result from the dedicated clause qubits.
The reversal is trivial.

More efficient oracle implementations are possible: previous work~\cite{swenne} has shown that multi-controlled incrementers can be used to reduce the ancilla count $p$ from $m+1$ to $\lceil \log(m) \rceil +1$. Furthermore, using Fixed-Point Amplitude Amplification~\cite{fpaa}, the ancilla count can be further reduced to $2$~\cite{swenne}.

In the present work, we exploit another space-efficient quantum algorithm for formula evaluation, which only uses $1$ ancilla qubit, relying on prior work \cite{one-qubit-evaluation} which defines an one-qubit program which computes any Boolean function exactly: this program takes $x$ as input and the output of the measurement is $f(x)$. This one-qubit model, which alternates single-qubit unitaries and controlled gates, can be seen as a quantum variant of Barrington's theorem \cite{barrington}. Note that $k$-CNF formulas (which are CNF formulas) correspond to depth-2 (AC) circuits, so the theorem (a statement on NC$_1$) applies to $k$-SAT input.

We can use the one-qubit model to implement the oracle $\mathcal{O}_f$, using one ancilla qubit to apply single-qubit unitaries and CNOT gates, controlled over the qubits of the input.
We refer the interested reader to \cite[Sec.~4]{one-qubit-evaluation} for the details of the implementations of the single-qubit unitaries of the one-qubit model.

Using the one-qubit model for formula evaluation to implement the oracle, we obtain a space-efficient implementation of Grover which uses $1$ qubit for each variable of the formula, $1$ qubit for the phase-flip oracle, and $1$ qubit for formula evaluation, for a total space complexity of $n+2$, as shown in Figure~\ref{fig:grover} with $p=1$. 

\subsection{Space-efficient quantum phase estimation}
\label{app:qpe}

The quantum backtracking framework which is used in this paper relies on the Quantum Phase Estimation (QPE) algorithm, a quantum algorithm which outputs  with high probability a $t$-bit estimate of the phase of its inputs: a unitary $U$ and an eigenvector $\ket{\psi}$. The canonical implementation of QPE uses $t$ qubits to produce its $t$-bit estimate. In this section, we study a more space efficient implementation of a specific version of QPE, sufficient for quantum backtracking, which only determines whether the phase is $0$.

\begin{proposition}
There is an implementation of QPE which checks whether the phase of the eigenvector (of a given unitary $U$ operating on $m$ qubits) is equal to zero with a $t$-bit precision using no more than $p+1$ ancilla qubits, where $p=\lceil\log(t)\rceil$ qubits are allocated for a counter from $0$ to $t$.
\end{proposition}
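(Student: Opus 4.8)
The plan is to exploit the fact that quantum backtracking never needs the full $t$-bit phase estimate: it only needs to \emph{decide} whether the eigenphase of the input eigenvector of the walk operator $W$ equals $0$ (no marked vertex) or is at least the guaranteed spectral gap $\Delta=\Theta(1/t)$ with $t=\Theta(\sqrt{Tn})$ (a marked vertex exists). Resolving the phase to additive precision $\Theta(1/t)$ is enough for this decision, and standard QPE at that precision uses a phase register of only $p=\lceil\log t\rceil$ qubits holding a binary counter of up to $\sim t$ applications of $W$ — as opposed to the unary, $\Theta(t)$-qubit counter of the direct implementation. Concretely I would (i) put the $p$-qubit counter into the uniform superposition $\tfrac{1}{\sqrt{2^{p}}}\sum_{k=0}^{2^{p}-1}\ket{k}$ with Hadamards; (ii) realize $\ket{k}\ket{\psi}\mapsto\ket{k}\,W^{k}\ket{\psi}$ by writing $k=\sum_{j<p}k_{j}2^{j}$ and applying controlled-$W^{2^{j}}$ from counter qubit $j$, at a cost of $\sum_{j<p}2^{j}=2^{p}-1<2t$ applications of $W$ and using the single remaining ancilla as the one clean workspace qubit needed for the controlled reflections that make up $W$; (iii) apply the inverse QFT over $\mathbb{Z}_{2^{p}}$, which needs no further ancilla (it is $p$ Hadamards and controlled phase rotations acting on the counter itself); and (iv) measure the counter, reporting ``phase $=0$'' iff the outcome is $0^{p}$. (Since only the zero-test is used, one may just as well replace the inverse QFT in (iii) by $H^{\otimes p}$, which yields the identical overlap with $\ket{0^{p}}$; I would present the QFT version since the statement is phrased for QPE.)

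For correctness I would use the exact structure of $W$ in the backtracking construction. When there is no marked vertex, the input coincides with a $+1$-eigenvector of $W$, so after step (ii) the counter is exactly $\tfrac{1}{\sqrt{2^{p}}}\sum_{k}\ket{k}$, step (iii) maps it exactly to $\ket{0^{p}}$, and the test accepts with probability $1$ — the procedure has no false positives. When a marked vertex exists, the backtracking analysis guarantees that the overlap of the input with the span of eigenvectors of eigenphase less than $\Delta$ is bounded away from $1$ by a constant, while every component of eigenphase $\phi\ge\Delta$ contributes to outcome $0^{p}$ only through the usual QPE weight
\[
\frac{1}{2^{2p}}\,\frac{\sin^{2}(\pi 2^{p}\phi)}{\sin^{2}(\pi\phi)}\ \le\ \frac{1}{\bigl(2\cdot 2^{p}\phi\bigr)^{2}}\ \le\ \frac{1}{4\,(2^{p}\Delta)^{2}},
\]
using $\sin(\pi\phi)\ge 2\phi$ on $[0,\tfrac12]$ (without loss of generality, as the weight is invariant under $\phi\mapsto 1-\phi$). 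Choosing the constant in $t=\Theta(\sqrt{Tn})$ large enough makes $(2^{p}\Delta)^{-2}$ as small a constant as we like, so the overall acceptance probability in the marked case is bounded away from $1$ by a constant; running the procedure $R=O(\log(1/\delta))$ times independently and reporting ``marked vertex exists'' iff some run returns a nonzero outcome then gives one-sided error at most $\delta$, at a total cost of $O(t\log(1/\delta))=O(\sqrt{Tn}\log(1/\delta))$ applications of $W$, matching the query complexity of quantum backtracking.

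I expect the main obstacle to be the precision/confidence bookkeeping: one must fix $t$ — equivalently $p=\lceil\log t\rceil$ — against the guaranteed gap $\Delta$ of $W$ so that $2^{p}\Delta$ is a large enough constant while $\phi$ stays in the range $[0,\tfrac12]$ where the elementary inequality $\sin(\pi\phi)\ge2\phi$ is available; both can be arranged, but they need to be stated carefully, and the non-integrality of $t$ is precisely why the ceiling appears in $p=\lceil\log t\rceil$. The remaining points are routine but should be made explicit in the proof: controlling $W^{2^{j}}$ from a single counter qubit needs only $O(1)$ ancillas (one clean workspace qubit suffices for the controlled reflections), the inverse QFT introduces none, and the final $0^{p}$-test can either be read off directly or written into that same workspace qubit as a one-bit flag; hence beyond the $p$-qubit counter exactly one extra ancilla is needed, giving the claimed $p+1$. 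Finally I would note that trading the $\Theta(t)$-qubit unary counter for the $p=\lceil\log t\rceil$-qubit binary one costs only a factor below $2$ in the number of $W$-queries ($2^{p}-1<2t$ in place of $t$), so this space optimization leaves the overall query complexity of quantum backtracking unchanged.
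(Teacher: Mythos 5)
Your construction does not prove the proposition: it is the standard QPE circuit with the variables relabelled, and it misses the space saving the statement is actually about. In the proposition, ``$t$-bit precision'' means a genuine $t$-bit phase estimate (resolution $2^{-t}$), realized by the $t$ controlled powers $U^{2^{0}},\dots,U^{2^{t-1}}$; for quantum backtracking $t=O(\log T)=O(n)$, so the canonical implementation stores the estimate in $t=O(n)$ phase qubits, and the point of the proposition is to compress \emph{that} register down to $\lceil\log t\rceil+1=O(\log n)$ qubits. You instead set $t=\Theta(\sqrt{Tn})$ equal to the inverse spectral gap and build an ordinary $p$-qubit phase register with $p=\lceil\log t\rceil$; but $\lceil\log\Theta(\sqrt{Tn})\rceil=O(\log T)=O(n)$ is exactly the register size of the canonical implementation you are supposed to be improving on. The ``unary $\Theta(t)$-qubit counter of the direct implementation'' you compare against does not exist --- no implementation of QPE uses $\Theta(1/\epsilon)$ phase qubits --- so your claimed saving is measured against a strawman. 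Moreover, a $q$-qubit phase register in the standard layout can only resolve phases to $2^{-q}$, so no choice of parameters within your scheme can deliver $t$-bit precision with only $\lceil\log t\rceil+1$ ancillas; the approach cannot be patched.

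The missing idea is the sequential reuse of a \emph{single} ancilla across the $t$ rounds of QPE together with a monotone counter. The paper's circuit runs, for $j=0,\dots,t-1$: a Hadamard on the one ancilla, controlled-$U^{2^{j}}$, a Hadamard again, then an increment of a $\lceil\log t\rceil$-qubit counter conditioned on the ancilla being $1$. Because the counter is only ever incremented, any branch that leaves the all-zero counter state can never interfere back into it, so the amplitude on the (counter $=0$) state at the end is exactly $\frac{1}{2^{t}}\prod_{j=0}^{t-1}\bigl(1+e^{2i\pi 2^{j}\theta}\bigr)$ --- identical to the amplitude of the all-zeros outcome of the full $t$-qubit QPE circuit --- and the acceptance probability of the zero test is preserved exactly while the ancilla count drops from $t$ to $\lceil\log t\rceil+1$. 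Your error analysis of standard QPE (the $\sin^{2}$ weight bound and the amplification by repetition) is fine as far as it goes, but it analyzes the wrong circuit.
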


\begin{figure*}
\centering
\mbox{
 \Qcircuit @C=.7em @R=.7em {
  \lstick{\ket{0}_c^{\otimes p}} & /^p \qw & \qw & \qw & \qw & \qw & \qw & \qw & \qw & \gate{inc} & \qw & \gate{inc} & \qw & & \ket{\iota}\\
  \lstick{\ket{0}}  \gategroup{2}{2}{5}{8}{.9em}{--}  & \gate{H}  & \ctrl{3} & \qw  & \cdots & & \qw                & \gate{H} & \qw & \ctrl{-1} & \qw & \qw & \qw & & \ket{b_1}\\
  \lstick{\vdots\ \ } & \vdots   &     &          &            &     &        & &                    &  &        & \vdots &     \\
  \lstick{\ket{0}}    & \gate{H} & \qw      & \qw & \cdots & & \ctrl{1}           & \gate{H}	& \qw  & \qw & \qw & \ctrl{-3} & \qw & & \ket{b_t} \\
  \lstick{\ket{\psi}} & /^m \qw & \gate{U} & \qw & \cdots & & \gate{U^{2^{t-1}}} & \qw & \qw & \qw & \qw & \qw & \qw & \qw \\
 \lstick{\ \ }  &  &     &          &            &     &        & &                    &  &        & &    
 }
 }
\caption{\label{fig:t-bit-incrementer} The specialized QPE circuit, which which also counts the bits of the phase that are equal to 1. The circuit in the dotted rectangle determines whether the $t$-bit estimate of $\theta$ is null (as in canonical QPE).}
\end{figure*}

\begin{figure}
\centering
\mbox{
 \Qcircuit @C=.7em @R=.7em {
  \lstick{\ket{0}_c^{\otimes p}}  & /^p \qw & \qw & \qw & \qw & \gate{inc} & \qw & \qw & \cdots &  & \ket{\iota}\\
  \lstick{\ket{0}_a}    & \gate{H}  & \ctrl{1} & \gate{H} & \qw & \ctrl{-1} & \qw &  \qw & \cdots &  & \ket{a} \\
  \lstick{\ket{\psi}} & /^m \qw & \gate{U^{2^0}} & \qw & \qw & \qw & \qw & \qw &  \cdots & 
 }
 }
\caption{\label{fig:log-incrementer} Space efficient circuit for determining whether the $t$-bit  estimate of $\theta$ is null. The circuit is repeated for $U^{2^1}, \dots, U^{2^{t-1}}$}
\end{figure}

\begin{proof}
Consider the original QPE circuit represented in Figure~\ref{fig:t-bit-incrementer}, which uses $t$ ancilla qubits to obtain a $t$-bit estimate of $\theta$ such that $U\ket{\psi}=e^{2i\pi\theta}\ket{\psi}$, given $\ket{\psi}$ as input. 

From the initial state $\ket{0}^{\otimes t}\ket{\psi}$, we apply the Hadamard gate $H$ to each qubit of the first register, obtaining the state $\ket{+}^{\otimes t}\ket{\psi}$. We then apply unitaries $U,\ldots,U^{2^{t-1}}$, each controlled over one qubit of the first register, obtaining the state
\[
\ket{\psi'} = \frac{1}{\sqrt{2^t}} 
\bigotimes_{j} \left( \ket{0} + e^{2i\pi 2^j\theta} \ket{1}\right)
\]
To this state, we apply $H^{\otimes t}$, obtaining the final state.
\[
\ket{\psi_\text{f}} = \frac{1}{2^t} 
\bigotimes_j \left( (1 + e^{2i\pi 2^j\theta})\ket{0} + (1-e^{2i\pi 2^j\theta}) \ket{1}\right)
\]

Let us write $p_0$ for the probability that the $t$-bit estimate that in the final state $\ket{\psi_\text{f}}$ of this circuit is $b_1\ldots b_t=0\ldots0$, so that $p_0=|\alpha_0|^2$ where 
\[
\alpha_0 = \frac{1}{2^t} \prod_j \left(1+e^{2i\pi 2^j\theta}\right)
\]

Now, consider the circuit defined in Figure~\ref{fig:log-incrementer}. It is a more space-efficient implementation of QPE which only detects whether the phase is equal to $0$. It can be implemented using $\lceil\log(t)\rceil$ ancilla qubits, which add up to the number of qubits on which $U$ is defined.

This implementation stems from the following observation. Consider the QPE circuit which detects whether the phase is equal to $0$, to which we add an ancilla register which counts up to $t$ (requireing $p = \lceil \log(t) \rceil$ qubits) and gates which increment the counter whenever one of the bits of the $t$-bit estimate is not equal to $0$, as depicted in Figure~\ref{fig:t-bit-incrementer}. Since the counter is only increased (and never decreased) throughout the computation, the value of the counter is only non-zero in the final state if a $1$ appeared in one of the bits. In other words, the counter can only have a null value when the all zero state is the $t$-bit estimate.

This observation leads us to the implementation of a circuit which, for each $j$ ($0 \leq j < t$), applies the Hadamard gate $H$ on an ancilla qubit $a$, followed by the unitary $U^{2^j}$ controlled on $a$, the Hadamard gate $H$ on $a$, and an incrementation of the counter if the value of $a$ is non-zero, as pictured in Figure~\ref{fig:log-incrementer}. 

For any $j$, assume that the counter is in the all zero state and $a$ is also $\ket{0}$. From our previous observation, this means that the counter has not been increased yet. Write $\ket{\psi_j}$ and $\ket{\psi'_j}$ respectively for the overall states before and after the controlled incrementation.
\begin{align*}
    \ket{\psi_0} &= \ket{0}_c^{\otimes p}\frac{1}{2}\left( (1 + e^{2i\pi \theta})\ket{0} + (1-e^{2i\pi \theta})\ket{1} \right)\\
    \ket{\psi'_0}&= \frac{1}{2}\ket{0}_c^{\otimes p}(1 + e^{2i\pi\theta})\ket{0}\\
    &+ \frac{1}{2}(1-e^{2i\pi \theta})\ket{0\ldots01}_c \ket{1}
\end{align*}

The state $\ket{\psi'_0}$ contains a `non-zero branch', which carries over throughout the next step of the computation, so that
\begingroup
\allowdisplaybreaks
\begin{align*}
    \ket{\psi_1} &= \frac{1}{2}(1 + e^{2i\pi\theta})\ket{0}_c^{\otimes p}\\
    &\otimes \frac{1}{2}\left( (1 + e^{4i\pi \theta})\ket{0} + (1-e^{4i\pi \theta})\ket{1} \right)\\
    &+\text{ non-zero branch}\\
    \ket{\psi'_1}&= \frac{1}{4}(1 + e^{2i\pi\theta})(1 + e^{4i\pi\theta})\ket{0}_c^{\otimes p}\ket{0}\\
    &+\text{ non-zero branch}\\
    &\vdots \\
    \ket{\psi'_t} &= \frac{1}{2^t} \prod_{j=0}^{t-1}(1+e^{2\pi i 2^j\theta})\ket{0}^{\otimes p}_c\ket{0}\\
    &+\text{ non-zero branch}
\end{align*}
\endgroup
where $\ket{\psi'_t}$ is the final state $\ket{\psi'_\text{f}}$ of the circuit.

The probability $p'_0$ of obtaining a zero value in the final state of the counter is $p'_0=|\alpha'_0|$ where
\[
\alpha'_0 = \frac{1}{2^t} \prod_j \left(1+e^{2i\pi 2^j\theta}\right) = \alpha_0
\]
Therefore, $p_0=p'_0$, which means that the probability of observing the all zero $t$-bit estimate in the first circuit and the probability of observing the value $0$ in the counter of the second circuit are exactly the same. By linearity this observation holds for all input states, implying that we have come up with a more efficient implementation of QPE which detects whether the phase is equal to $0$ (up to a $t$-bit estimate) while using only $\lceil\log(t)\rceil$ qubits.
\end{proof}

{Note that this construction is important to the present work, as the quantum backtracking framework's use of QPE requires a $O(\log(T))$-bit precision, where $T$ is the size of the search tree considered. When $T$ is exponential in $n$, the standard implementation of QPE yields a linear overhead which does not prevent us from using the hybrid approach, but directly weakens the computational efficiency of hybrid schemes. Bringing down QPE's overhead from $O(\log(T))$ to $O(\log(\log(T)))$ implies almost no loss between real and effective quantum computer size for many problems.}

\section{Quantum backtracking for DPLL-like algorithms}
\label{app:backtracking}

In this section, we develop a space-efficient implementation of the quantum backtracking framework~\cite{qbacktracking} for DPLL-like algorithms. It is an essential element for the hybrid approach developed throughout this paper:
{Whenever the classical computation hands of a sub-problem corresponding to a (restricted) formula $F$ to the quantum computer, it generates the space-efficient quantum circuit from $F$ as described here.}

As DPLL provides no guarantee on the maximal number of guesses which have to be done to reach a satisfying assignment, we represent tree vertices with the full partial assignment they correspond to, rather than the branching choices (guesses) as we do for PPSZ in Appendix~\ref{app:sia}.

\subsection{The quantum backtracking framework}
\label{sub:framework}

We present the quantum backtracking framework developed in \cite{qbacktracking}, closely following their exposition.

The search tree underlying a classical $k$-SAT-solving backtracking algorithm is formalised as a rooted tree $\mathcal T$ of depth $n$ and with $T$ vertices $r,1,\ldots,T-1$. 
Each vertex is labeled by a partial assignment, and a marked vertex is a vertex labeled by a satisfying assignment.
We work under the promise that the formula given as input is not trivially satisfied, and therefore the root is promised not to be marked.

We write $\ell(x)$ for the distance from the root $r$ to a vertex $x$, and assume that $\ell(x)$ can be determined for each vertex $x$ (even without full knowledge of the structure of the tree). The examples developed in this paper make this consideration trivial, as vertices are labeled by list of variable assignments, with the special symbol $\ast$ marking unassigned variables, and therefore the distance from the root can be calculated from the number of assigned variables.

We write $A$ (resp. $B$) for the set of vertices an even (resp. odd) distance from the root, with $r \in A$. We write $x \rightarrow y$ to mean that $y$ is a child of $x$ in the tree. For each $x$, let $d_x$ be the degree of $x$ as a vertex in an undirected graph. 
So for every vertex $x$ which isn't the root, we have $d_x = |\{y \mid x \rightarrow y\}| + 1$ and $d_r = |\{y \mid r\rightarrow y\}|$.

We define the quantum walk\footnote{Note that this notion of quantum walk does not involve a separate ``coin'' space.} as a set of diffusion operators $D_x$ on the Hilbert space $\mathcal{H}$ spanned by $\{\ket{r}\} \cup \{ \ket{x} :  x \in \{1,\dots,T-1\} \}$, where $D_x$ acts on the subspace $\mathcal{H}_x$ spanned by $\{\ket{x}\} \cup \{ \ket{y}: x \rightarrow y \}$. We take $\ket{r}$ to be its initial state.

Such diffusion operators $D_x$ are defined as the identity if $x$ is marked, and as follows otherwise:
\begin{itemize}
\item for $x \neq r$, $D_x = I - 2 \dm{\psi_x}$, where
\[ \ket{\psi_x} = \frac{1}{\sqrt{d_x}} \left( \ket{x} + \sum_{y, x \rightarrow y} \ket{y} \right). \]
\item $D_r = I - 2 \dm{\psi_r}$, where
\[ \ket{\psi_r} = \frac{1}{\sqrt{1+ d_r n}} \left( \ket{r} + \sqrt{n} \sum_{y, r \rightarrow y} \ket{y}\right). \] 
\end{itemize}
Note that when $x$ is an unmarked leaf (in the context of $k$-SAT, a vertex which corresponds to a contradiction), it has no neighbors and therefore the reflectors are about the state itself.

We define two Szegedy-style walk operators as follows:
\[
R_A = \bigoplus_{x \in A} D_x
\text{ and }
R_B = \dm{r} + \bigoplus_{x \in B} D_x.
\]
 
Assume you have access to $R_A$ and $R_B$, and consider the following algorithm.

\begin{enumerate}
\item[] {\textbf{Detecting a marked vertex}}\\
{\bf Input:} Operators $R_A$, $R_B$, a failure probability $\delta$, upper bounds on the depth $n$ and the number of vertices $T$. Let $\beta, \gamma > 0$ constants given in Ashely's paper.
\begin{enumerate}
\item Repeat the following subroutine $K = \lceil \gamma \log (1/\delta) \rceil$ times:
\begin{enumerate}
\item Apply phase estimation to the operator $R_B R_A$ with precision $\beta/\sqrt{Tn}$, on the \textit{initial} state $\ket{r}$.
\item If the eigenvalue is 1, accept; otherwise, reject.
\end{enumerate}
\item If the number of acceptances is at least $3K/8$, return ``marked vertex exists''; otherwise, return ``no marked vertex''.
\end{enumerate}
\end{enumerate}

In essence, this algorithm detects whether the tree $T$ has a marked vertex using $O(\sqrt{Tn} \log(1/\delta))$ queries to $R_A$, $R_B$. Detection is enough: to find a satisfying assignment, it suffices to traverse the tree and ask which subtree leads to a satisfying assignment at each branching. The remainder of this appendix is dedicated to a space-efficient implementation of quantum backtracking for DPLL-like algorithms.

\subsection{Encoding sets of variables}
\label{sub:efficient-encoding}

We first describe how partial assignments are encoded and define subroutines which allow us to manipulate assignments. 
As a partial assignment for a set Vars of variables can be seen as a function $\text{Vars} \to \{0,1,\ast\}$, a vertex $x$ can be uniquely labeled in this $n$-trit system, and can hence be stored in a quantum state $\ket{\vec x}$ of $\log_2(3)n$ qubits.

Since we have the restricted formula available whenever the classical algorithm constructs a quantum circuit, we could also hard code the wires according to the clauses, but to ease drawing of the circuits, we use a unitary instead.
Whether a given variable $x_i$ is in a clause $C_j$ can be easily checked using a Toffoli gate and Pauli-$X$ gates to determine whether a given index $i$ is in the clause or not. 
Such unitaries form a family of unitaries
\[
\text{Check}_{j,i}:\ket{\vec x}\ket{0}\ket{0}\mapsto\ket{\vec  x}\ket{b}\ket{s},
\] 
where $b=1, s=0$ ($s=1$) if the literal $x_i$ ($\overline{x_i}$) appears in $C_j$.

 
To realize the search predicate $P(\vec{x})$, i.e, whether $\vec x$ is a leaf of the search tree, 
we define the unitary $V_\text{leaf}$. This is a unitary such that $V_\text{leaf}\ket{\vec x}\ket{0} = \ket{\vec x}\ket{b}$, where $b$ is a Boolean value equal to $1$ if and only if $\vec x$ corresponds to a trivial formula (whenever $P(\vec x)$ should be either 0 or 1, as the formula $F_{|\vec x}$ is (un)sat). 
The find solutions, we make use of the unitary $V_\text{marked}$ such that $V_\text{marked}\ket{\vec x}\ket{0}=\ket{\vec x}\ket{b}$, where $b$ is a Boolean value equal to $1$ if and only if $\vec x$ is a satisfying assignment. Those unitaries are hardcoded based on the restricted formula, i.e, by creating a reversible circuit for $F_{|\vec x}$.

\subsection{Implementing the walk operator}
\label{sub:implement-walk}

We describe step by step an implementation of the walk operator $W=R_BR_A$, for DPLL-like backtracking algorithms. Let $\vec x$ be a partial assignment of the formula (i.e. a vertex in our search tree).
We provide a quantum reversible implementation of the routines $ch1(\vec x), ch2(\vec x, b)$ and $chNo(\vec x)$ specified in Section~\ref{sec:quantum}.

We define reversible routines which implement both the reduction rule and the branching heuristic (see Section~\ref{sec:quantum}), in order to check whether the unit rule or the pure literal rule can be applied to one of the unassigned variables in assignment $\vec x$ according to a fixed (static) variable ordering. In other words, the first unassigned variable $x$ for which there is a clause $C \in F_{|\vec x}$ that is unit, i.e., $C = \set{x}$ or $C= \set{\bar x}$, is forced accordingly. The same is done for the pure literal rule.
We provide implementations of the unit rule in Appendix~\ref{sub:implement-unit} and of the pure literal rule in Appendix~\ref{sub:implement-pure}.

We implement $ch1(\vec x), ch2(\vec x, b)$ and $chNo(\vec x)$ as unitaries
$V_1$ and $V_2$ which respectively compute the first child $\vec x_1$ (assuming $\vec x$ is a non-leaf), and the second child $\vec x_2$, assuming $\vec x$ is not forced (guessed) and non-leaf. In other words, in the 2 children case, $V_i$ implements $\vec x \rightarrow ch2(\vec x, i)$.
The specification adds $V_0$ for computing the identity function for leaf $\ket{\vec x}$
i.e., $\vec x_0 \defn \vec x$:
\[
V_i: \ket{\vec x}\ket{\vec 0} \mapsto \ket{\vec x}\ket{\vec x_i}
\]

\noindent
We implement a unitary $V_A(c)$ which, given a vertex $\vec x$ outputs the superposition of the vertex $\vec x$ and its $c$ children. Its specification is:
\begin{align*}
V_A(c): \ket{\vec x} \ket{\vec 0} \ket{0}
&\xmapsto{H} \frac{1}{\sqrt{c}} \sum_{0 \leq i \leq c} \ket{\vec x}\ket{\vec 0}\ket{i}\\
&\xmapsto{\text{ctrl-}{V_i}}   \frac{1}{\sqrt{c}} \ket{\vec x} \sum_{0 \leq i \leq c} \ket{\vec x_i}\ket{i}\\
&\xmapsto{V_C(c)} \frac{1}{\sqrt{c}} \ket{\vec x} \sum_{0 \leq i \leq c} \ket{\vec x_i}\ket{0}
\end{align*}

In other words, we apply each $\text{ctrl-}V_i$ controlled over a qutrit (the `index register') which determines which $V_i$ to apply, with $i=0$ being the operation which copies the parent vertex. The result is an entangled state between the index register and specification of the children, rather than a superposition over children. 
The operation $V_C(c)$ disentangles the state, using the fact that we can prepare, and hence, unprepare the $i$-th child. For each child index $i$, we erase the index register as follows; we uncompute the $i$-th child by inverting $V_i$. Now, in the child register, in the branch with the $i$-th child we have the ``all-zero'' state. Since all children differ, this is the only branch with the all-zero state. Then, conditioned on the child-specifying register being in the all-zero state, we subtract $i$, from the child-index register. Finally, we recompute the $i$-th child, which restores the children specifications in all branches. This erases the which-child information for each child, leaving the index register in the $\ket{0}$ state for all children.


Now, the operator $R_A$ can be implemented 
as $U_A\left(\text{Id} - \ket{0}\bra{0}\right)U^\dagger_A$, where $U_A = \bigoplus_{\vec x \in A} U_A^{\vec x}$ is the unitary which computes the state $\ket{\varphi_x}$, i.e., 
\[
U_A\ket{\vec x}\ket{0}=\ket{\vec x}\ket{\varphi_{\vec x}}
\] 


Recall that the implementation of the diffusion operators depends on the parity of their distance from the root. We implement each $D_x$ for $x\in A$ as a unitary $U_A^{x}$ by checking whether $\vec x$ has zero, one or two children. This is done by checking the number $c$ of children, and controlled on the result bit of this operation, applying (or not) the corresponding operation $V_A(c)$ which generates the superposition over the child(ren) and original vertex. 

The operator $R_B$ is implemented in a similar fashion to $R_A$, assuming that we have access to a unitary $V_\text{root}$ which checks whether $\vec x = r$. Observing that the root $r$ is associated to the all-undetermined satisfying assignment, i.e. where $\mathbf r = \ast^n$. Such a unitary can easily be implemented with a counter to $n$ and incrementation controlled on each variable being equal to $\ast$.

In the generic quantum backtracking algorithm, a depth counter is maintained to determine the parity of the depth at which the vertex $\vec x$ is. We forgo of such a register by observing that each variable assignment takes us one level deeper into the tree, and therefore the depth at which $\vec x$ is at is given is defined by the number of variables which have already been assigned a value. Therefore, to check the parity of the depth, it suffices to count every time $x_i \neq \ast$.

\subsubsection*{Cost analysis of the implementation}
\label{sub:cost-analysis-walk}

The following theorem shows that our implementation of the walk operator, as part of a space-efficient implementation of quantum backtracking, uses only a near-linear amount of qubits. Note that the routine implemented in this section have a polynomial time complexity.

\begin{theorem}
\label{thm:implement-walk}
There is a polynomial-time implementation of the walk operator of the quantum backtracking framework for DPLL-like algorithms which uses at most $4n+w$ qubits, with $w \in O(\log(n))$.
\end{theorem}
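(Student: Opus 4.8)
The plan is to account for the qubits used by each component of the implementation described in Appendices~\ref{sub:efficient-encoding}--\ref{sub:implement-walk}, and observe that the registers can be reused across the sequential steps so that the peak qubit count is bounded. First I would fix the representation: a vertex is a partial assignment in $\{0,1,\star\}^n$, stored in $\ket{\vec x}$ using $\lceil\log_2 3\rceil n = 2n$ qubits (or $\log_2(3)n$ qubits if a tighter trit packing is used — in either case $\Theta(n)$, and the statement's ``$4n$'' absorbs the constant). The walk operator $W = R_B R_A$ acts on the Hilbert space spanned by $\{\ket{x}\}$, so the ``main'' registers are: one vertex register for $\vec x$ (the current node), and one vertex register for the child/superposition-of-children output $\ket{\vec x_i}$, as used in the specification of $V_A(c)$ and $U_A$. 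That is $2 \times 2n = 4n$ qubits for the two vertex-sized registers, which is the dominant term.

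Next I would go through the ancillary overhead and show it is only $O(\log n)$. The index register selecting which child operation $V_i$ to apply is a single qutrit, i.e.\ $O(1)$ qubits. Checking clause membership (the $\text{Check}_{j,i}$ and $\text{Assigned}_{j,i}$ unitaries), evaluating whether a variable is unit or pure, and evaluating the search predicates $V_{\text{leaf}}$, $V_{\text{marked}}$, $V_{\text{root}}$: each of these is a formula-evaluation-type circuit which, by the space-efficient constructions of Appendix~\ref{app:efficient-grover} (one-qubit/logarithmic formula evaluation), needs only $O(1)$ or $O(\log n)$ ancillas, and these ancillas are returned to $\ket{0}$ after use and hence reused. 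Determining the parity of the depth of $\vec x$ and checking $\vec x = r$ both use a counter up to $n$, costing $\lceil\log_2 n\rceil = O(\log n)$ qubits; again these are uncomputed and reused. The QPE overarching routine contributes, by the Proposition in Appendix~\ref{app:qpe}, only $O(\log\log T) = O(\log n)$ qubits when $T$ is exponential in $n$ (and even the naive $O(\log T) = O(n)$ bound would not break near-linearity, but we use the improved one). Collecting all of these: $w = O(1) + O(\log n) + O(\log n) = O(\log n)$, so the peak usage is $4n + w$ with $w \in O(\log n)$.

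The key point making this work — and the step I expect to require the most care — is the \emph{reuse} argument: one must verify that the disentangling operation $V_C(c)$ genuinely returns the child-index register to $\ket{0}$ (this is exactly the ``uncompute $i$-th child, subtract $i$ conditioned on all-zero, recompute $i$-th child'' trick spelled out in Appendix~\ref{sub:implement-walk}, and it relies on all children being distinct, which holds since each child fixes a different previously-unassigned variable), and that every scratch ancilla used inside $\text{Check}$, the reduction-rule evaluation, the leaf/marked/root predicates, and the depth-parity counter is uncomputed before the next subroutine needs it. Given that, the temporary registers never coexist, so the maximum over the computation is the $4n$ for the two vertex registers plus a single shared $O(\log n)$ scratch space. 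The polynomial-time claim is immediate since each constituent unitary ($\text{Check}$, reduction rule, predicate evaluation, incrementers, $V_A(c)$, and their controlled versions summed over $O(\mathrm{poly}(n))$ clauses/variables) is polynomial-size, and $W$ is a fixed-length composition of these. I would close by noting this $4n+w$ count is what feeds into the space-complexity/cut-off analysis of Section~\ref{sec:hybrid}, giving effective quantum computer size $\kappa' n$ with $\kappa'$ proportional to $\kappa$.
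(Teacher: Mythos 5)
Your proposal is correct and follows essentially the same accounting as the paper: the dominant $4n$ term comes from two vertex-sized registers at $\lceil\log_2 3\rceil n = 2n$ qubits each (the current node and the child register used by $V_A$ and $V_C$), and all remaining scratch space (child-index qutrit, clause/counter ancillas, depth-parity check) is uncomputed and reused, contributing only $w \in O(\log n)$. The only cosmetic difference is that you also fold the phase-estimation counter into $w$, whereas the theorem and the paper's proof concern the walk operator alone; this does not affect the bound.
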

\begin{proof}
Having access to the unitary $R_A$ and $R_B$, we implement the walk operator $R_BR_A$ with an ancilla qubit which checks whether the vertex $\vec x$ that we are considering is odd or even. Defining $\cost{U}$ to be the space (in terms of qubits) required by our implementation of a unitary $U$, we obtain that 
\[\cost{R_BR_A} \leq \text{max}(\cost{R_A},\cost{R_B}) +1.\] 
The operators $R_A$ and $R_B$ have very similar implementations, and under our implementation:
\begin{align*}
\cost{R_A} &\leq \cost{U_A} +1 \\
	&\leq \lceil \log_2(3) \rceil \cdot n + \cost{V_A}\\
	&+ \cost{V_C} + O(1)
\end{align*}
where $\lceil \log_2(3) \rceil \cdot n$ corresponds to the space required to store a vertex, the implementation of $V_A$ and $V_C$ requires $\lceil \log_2(3)\rceil \cdot n + O(\log(n))$ additional ancillas (see Section~\ref{sub:implement-unit}), adding up to an overall $4n+O(\log(n))$ space complexity for $R_A$ (as $\lceil 2\log_2(3)\rceil = 4$).
\end{proof}

\subsection{Implementing the unit clause rule}
\label{sub:implement-unit}

In order to determine the next vertices in the search tree according to the unit rule, one needs to determine whether there exists a unit clause (i.e. a clause $C=\{l\}$ with only one literal $l$), given a partial assignment of the formula that we are considering. Whenever a unit clause is found, no branching occurs and the literal is simply set to true.

In order to implement this process, we need two operations:
\begin{itemize}
 \item An operation $V^{(i)}_\text{unit}$ which checks whether the $i$-th clause is a unit clause.
 \item An operation $V_\text{next}$ which outputs the next partial assignment.
\end{itemize}

\begin{figure*}
\centering
\mbox{
 \Qcircuit @C=1em @R=.7em {
 & & & & & & & & & \mbox{~~~~~\textbf{U}}\\
  \lstick{\ket{\vec x}} & /^n \qw & \multigate{2}{V_\text{unit}^{(1)}} & \qw & \qw & \cdots & & \qw & \multigate{2}{V_\text{unit}^L} & \qw & \qw & \qw & \multigate{3}{U^\dagger} & \qw\\
  \lstick{\ket{0}_1} & \qw  & \ghost{V_\text{unit}^{(1)}}  & \ctrl{2} & \qw & \cdots & & \qw & \ghost{V_\text{unit}^{(L)}} & \ctrl{2} & \ctrl{3} & \qw & \ghost{U^\dagger} & \qw\\
  \lstick{\ket{0}_2} & \qw & \ghost{V_\text{unit}^1} & \qw & \qw & \cdots & & \qw & \ghost{V_\text{unit}^{(L)}} & \qw & \qw  & \ctrl{3} & \ghost{U^\dagger} & \qw\\
  \lstick{\ket{0}} & \qw & \ctrlo{-1} & \gate{inc} & \qw & \cdots & & \qw & \ctrlo{-1} & \gate{inc} \gategroup{1}{2}{5}{10}{.7em}{--} & \qw &\qw & \qw & \qw \\
  \lstick{\ket{0}_3} & \qw & \qw & \qw & \qw & \qw & \qw & \qw & \qw & \qw & \targ & \qw & \qw & \qw\\
  \lstick{\ket{0}_4} & \qw & \qw & \qw & \qw & \qw & \qw &\qw &\qw &\qw &\qw &\targ &\qw &\qw
 }
 }
\caption{\label{fig:unit-clause} $V_\text{unit}$ checks whether there is a unit clause}
\end{figure*}

For each clause $C_i$, we implement the unitary 
\[
V^{(i)}_\text{unit}:\ket{\vec x}\ket{0}_1\ket{0}_2\mapsto\ket{\vec x}\ket{j}_1\ket{s}_2
\]
which checks whether the $i$-th clause $C_i$ is a unit clause under the partial assignment $\vec x$.
It does so by checking whether the clause $C_i$ is still `alive'
(not already satisfied under the partial assignment $\vec x$), and then checking whether it is a unit clause, to finally output whether it is a unit clause ($j \neq 0$), and if yes, the index $j$ and polarity $s$ of the forced variable. We let the unitary $\mathcal{C}_i$ determine whether $C_i$ is alive by assignment $\vec x$ (by hardcoding the clause). And the unitary $\text{IsUnit}_i$ checks whether $k-1$ variables of the clause $C_i$ have been assigned a value: it is implemented with a counter from $0$ till $k$, with an incrementation controlled on variables having a value different from $\ast$; and it outputs an index $j$ and a sign $s$ if $C_i$ is a unit rule, $0$ otherwise. Then the operation $V^{(i)}_\text{unit}$, which checks whether the $i$-th clause $C_i$ is a unit clause (and if yes output the index and sign of its variable), is implemented in Figure~\ref{fig:unit-clause-i} ($\log(k)$-qubit counter ancilla is omitted).

In order to apply the unit rule, we apply $V^{(i)}_\text{unit}$ to each clause $C_i$ in the formula studied, and stop whenever we find a unit clause (i.e. whenever $V^{(i)}_\text{unit}$ outputs $\ket{\vec x}\ket{j}\ket{1}$), see Figure~\ref{fig:unit-clause} for the implementation of the unitary
\[
V_\text{unit}:\ket{\vec x}\ket{0}_3\ket{0}_4 \mapsto \ket{\vec x}\ket{j}_3\ket{s}_4
\]

Note that, to go through all the $L$ clauses of a formula, we need a clause counter which is implemented using $\lceil\log(L)\rceil \in O(\log(n))$ ancilla bits, since a k-SAT formula has at most ${2n\choose k} \in O(n^k)$ clauses.

If a unit clause is found, the current vertex only has one child, given by $V_1$, which is obtained by applying the following unitary $V_\text{next}$ where $b,j$ are the outputs of $V^{(i)}_\text{unit}$:
\[
V_\text{next}: \ket{\vec x}\ket{0}\ket{j}\ket{b} \mapsto \ket{\vec x}\ket{\vec x'}\ket{j}\ket{b}
\]
Such a unitary is implemented by copying $\vec x$ to the output register, while setting the $j$-th index to the value $b$.

If the unit rule is not applied, we can determine the $j$ as next unassigned variable in partial assignment $\vec x$
(the first $\ast$), in a similar fashion as above.
Then, we obtain $V_1$ (resp. $V_2$) by applying $V_\text{next}$ with and
 $\ket{b}=\ket{0}$ (resp. $\ket{b}=1$), so that given $\ket{\vec x}\ket{0}\ket{j}\ket{0}$, $V_1$ (resp. $V_2$) outputs $\ket{\vec x}\ket{\vec x[x_j=0]}\ket{j}\ket{0}$ (resp. $\ket{\vec x}\ket{\vec x[x_j=1]}\ket{j}\ket{1}$).


\subsection{Implementing the pure literal rule}
\label{sub:implement-pure}

The pure literal rule eliminates variables $x_i$ which only appear as the literal $x_i$ or only appear as the literal $\overline{x_i}$. In which case, the variable is set to the value which makes the literal true, eliminating all the clauses which contains it in the process. 

Classically, it is a convenient way to reduce the number of clauses manipulated by the backtracking algorithm considered. However, the quantum backtracking implementation that we present in this paper is not concerned with formula rewriting. 

We implement a unitary 
\[
V_\text{pure}:\ket{\vec x}\ket{0}_1\ket{0}_2\mapsto 
\ket{\vec x}\ket{j}_1\ket{s}_2
\]
which checks whether there is a pure literal, and if yes, outputs  want the index $j$ of its variable and its polarity $s$. For completeness, $(j,s)=(0,0)$ if no pure literal is found. 

\begin{figure}
\centering
\mbox{
 \Qcircuit @C=1em @R=.7em {
  \lstick{\ket{0}_1} & \qw & /^{\log(n)} \qw & \qw & \qw & \multigate{2}{\text{IsUnit}_i}  & \qw & \qw & \qw & \qw & \qw \\
  \lstick{\ket{0}_2} & \qw & \qw & \qw & \qw & \ghost{\text{IsUnit}_i}  & \qw & \qw & \qw & \qw & \qw \\
  \lstick{\ket{\vec x}} & \qw & /^n \qw & \multigate{1}{\mathcal{C}_i} & \qw & \ghost{\text{IsUnit}_i} & \qw & \qw & \qw & \multigate{1}{\mathcal{C}_i^\dagger} & \qw\\
  \lstick{\ket{0}} & \qw & \qw  & \ghost{\mathcal{C}_i}  & \qw & \ctrl{-1} & \qw & \qw & \qw & \ghost{\mathcal{C}_i^\dagger} & \qw\\
 }
 }
\caption{\label{fig:unit-clause-i} $V_\text{unit}^{(i)}$ checks whether $C_i$ is a unit clause}
\end{figure}

\begin{figure*}
\centering
\mbox{
 \Qcircuit @C=1em @R=.7em {
  \lstick{\ket{0}_4}  & \qw & \qw & \qw & \multigate{2}{\text{Check}_{j,i}}  & \qw & \ctrl{3} & \ctrlo{3} & \qw  & \qw \\
  \lstick{\ket{0}_3} & \qw & \qw & \qw & \ghost{\text{Check}_{j,i}}  & \qw & \ctrl{3} & \ctrl{4} & \qw & \qw \\
  \lstick{\ket{\vec x}} & /^n \qw & \multigate{1}{\mathcal{C}_j} & \qw & \ghost{\text{Check}_{j,i}} & \qw & \qw & \qw & \multigate{1}{\mathcal{C}_j^\dagger} & \qw\\
  \lstick{\ket{0}} & \qw  & \ghost{\mathcal{C}_j}  & \qw & \ctrl{-1} & \qw & \qw & \qw & \ghost{\mathcal{C}_j^\dagger} & \qw\\
  \lstick{\ket{c_+}} & \qw & \qw & \qw & \qw & \qw & \gate{inc} & \qw & \qw & \qw  \\
  \lstick{\ket{c_-}} & \qw & \qw & \qw & \qw & \qw & \qw & \gate{inc} & \qw & \qw \\
 }
 }
\caption{\label{fig:pure-sub} $V_\text{pure}^{(j,i)}$ checks whether the variable $x_j$ appears in $C_i$ and if yes, increases the counter corresponding to its polarity}
\end{figure*}

The implementation of the unitary $V_\text{pure}$ is quite similar to the implementation of the unit rule. For each variable $x_j$, we only check whether it is a pure literal if no pure literal was found beforehand. In order to check whether a given variable $x_j$ is part of a pure literal, we implement a unitary 
\[
V_\text{pure}^{(i)}:\ket{\vec x}\ket{0}_1\ket{0}_2 \mapsto \ket{\vec x}\ket{j}_1\ket{s}_2
\]
which implements the following steps:
\begin{enumerate}
 \item count the number of `alive' clauses in which literals $x_i$ and $\bar{x}_i$ respectively appear with two clause counters, by applying for each clause $C_i$ the circuit $V_\text{pure}^{(j,i)}$ described in Figure~\ref{fig:pure-sub} and uncomputing the ancillas in register $3$ and $4$;
 \item use a Toffoli gate to determine whether only one of the counters is equal to $0$, and if yes, we copy the index and sign of $x_i$ in the output registers. 
\end{enumerate}
Then, the unitary $V_\text{pure}$ simply applies $V_\text{pure}^{(i)}$ for each variable $x_i$, provided that no pure literal rule was found so far (this information is tracked with an ancilla variable counter of size $\lceil\log(n)\rceil$); and then uncompute the ancillas by applying the inverse of the circuit so far (as we did for the unit rule, see Figure~\ref{fig:unit-clause}). 

This procedure is implemented using $O(\log(n))$ ancillas. As in the implementation of the unit rule (see~Appendix~\ref{sub:implement-unit}), we use $V_\text{next}$ to compute the next partial assignment. Note that, we can implement in the same way a reduction rule which first applies the unit rule, and then applies the pure litteral rule if no unit clause is found.

\subsection{Quantum tree size estimation}
\label{sec:quantum-tree-size-estimation}
One drawback of Montanaro's quantum backtracking is that the runtime depends on the estimate of the size of the search tree (which is a parameter of the algorithm), and not on the size of the subtree that the classical backtracking algorithm explores. 

The efficiency of a classical backtracking algorithm relies on its ability to explore the most promising branches first, which means than in practice, the algorithm may find a marked vertex after exploring $T'$ vertices, where $T' \ll T$. Using a quantum tree size estimation subroutine to estimate the size of the subtree explored by the classical backtracking algorithm, 
the original quantum backtracking algorithm can be improved for the $T' \ll T$ case (see Th.~\ref{th:ambainis}).

\begin{theorem}[\cite{ambainis-kokainis}]
\label{th:ambainis}
Consider a classical backtracking algorithm $\mathcal A$ which generates a search tree $\mathcal T$. There is a quantum algorithm which outputs $1$ with high probability if $\mathcal T$ contains a marked vertex and $0$ if it doesn’t, with query complexity $O(n^{\frac{3}{2}}\sqrt{T'})$         
where $T'$ is the number of vertices actually explored by $\mathcal{A}$.
\end{theorem}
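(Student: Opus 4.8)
The plan is to bootstrap Montanaro's detection routine (Section~\ref{sec:quantum}, Appendix~\ref{sub:framework}) using the quantum tree-size estimation subroutine of~\cite{ambainis-kokainis}. Recall the obstacle: the detection algorithm of Appendix~\ref{sub:framework} makes $O(\sqrt{Tn}\log(1/\delta))$ queries, where $T$ is an \emph{a priori} bound on the size of the \emph{whole} search tree $\mathcal{T}$; here we instead want the query complexity to scale with $T'$, the number of vertices the classical algorithm $\mathcal{A}$ actually visits before halting. Since $\mathcal A$ halts as soon as it reaches a marked vertex (or exhausts the tree), $T'$ may be exponentially smaller than $T$, and we do not know it in advance.

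First I would make precise the notion of the ``explored subtree'': fixing the traversal order used by $\mathcal A$, and a budget $B \in \N$, let $\mathcal{T}_B$ be the subtree consisting of the first $B$ vertices that $\mathcal A$ would visit. The key point is that membership in $\mathcal{T}_B$ is \emph{not} a local property of a vertex — deciding whether $\mathcal A$ reaches a vertex $v$ within its first $B$ steps requires knowing the sizes of the subtrees hanging off the root-to-$v$ path that lie earlier in the traversal order. This is exactly what the tree-size estimation routine of~\cite{ambainis-kokainis} supplies: a quantum subroutine that, for a vertex $v$, returns a multiplicative estimate of the size of the subtree rooted at $v$ using a number of queries scaling as $\sqrt{|\mathcal{T}_v|}$ times a polynomial in $n$. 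Composing these estimates along root-to-$v$ paths yields a reversible, bounded-error predicate ``$v \in \mathcal{T}_B$'', from which the diffusion operators $R_A, R_B$ restricted to $\mathcal{T}_B$ can be implemented exactly as in Appendix~\ref{sub:implement-walk}.

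With this in hand the algorithm is a doubling search on the budget: for $B = 1, 2, 4, 8, \dots$ run (i) the detection routine of Appendix~\ref{sub:framework} on $\mathcal{T}_B$, and (ii) a test, again via tree-size estimation, of whether the explored tree already has size at most $B$; halt and output $1$ as soon as (i) reports a marked vertex, and halt and output $0$ as soon as (ii) succeeds. Correctness follows because: if $\mathcal{T}$ has a marked vertex then $\mathcal A$ reaches it after at most $T'$ steps, so it lies in $\mathcal{T}_B$ for every $B \ge T'$, and detection then reports it; conversely $\mathcal{T}_B \subseteq \mathcal{T}$, so detection never reports a marked vertex when none exists; and the size test guarantees termination with $B = \Theta(T')$. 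For the query complexity, the $B$-th stage costs $\tilde{O}(n^{3/2}\sqrt{B})$, the tree-size estimation contributing the extra factor of $n$ over the $\sqrt{n}$ of bare detection; since $B$ doubles, the geometric sum is dominated by its last term at $B = \Theta(T')$, giving $\tilde{O}(n^{3/2}\sqrt{T'})$, and boosting each of the $O(\log T')$ stages to failure probability $O(1/\log T')$ only adds logarithmic factors.

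The main obstacle is the error analysis of the composition: the ``$v \in \mathcal{T}_B$'' predicate is only approximate (the subtree-size estimates are multiplicative and each carries a small failure probability), so one must verify that feeding such a noisy oracle into the Szegedy-walk detection routine still yields a correct spectral-gap dichotomy — equivalently, that a slightly perturbed or ill-defined tree boundary does not destroy the $\Omega(1/\sqrt{Tn})$ eigenphase separation between the marked and unmarked cases. This is the technical heart of~\cite{ambainis-kokainis}, and I would invoke their analysis rather than re-derive it; the remaining ingredients — the doubling schedule, the geometric sum, and the union bound over stages — are routine.
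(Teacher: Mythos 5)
Your proposal matches the paper's treatment: Theorem~\ref{th:ambainis} is imported from~\cite{ambainis-kokainis}, and the paper likewise only sketches the doubling search over the prefix subtrees containing the first $2^i$ classically explored vertices, running the detection routine on each and deferring the error analysis of the noisy tree-boundary predicate to that reference. Your account is somewhat more explicit about where the extra factor of $n$ (beyond the $\sqrt{n}$ of bare detection) and the membership predicate ``$v \in \mathcal{T}_B$'' come from, but the structure and the appeal to~\cite{ambainis-kokainis} for the technical core are the same.
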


The overall algorithm generates subtrees which contain the first $2^i$ vertices explored by the classical backtracking algorithm, increasing $i$ until a marked vertex is found, or the whole search tree is considered. It is on each subtree containing the first $2^i$ vertices that we run the quantum backtracking algorithm.

Note that quantum backtracking with tree size estimation is only considered when $T' \ll T$. Because this algorithm is less performant than the original when $T'$ is close to $T$, one can just switch to Montanaro's algorithm whenever the complexity of the generation of the path exceeds the complexity of the original quantum backtracking.

The main component of this variant of Montanaro's quantum backtracking  are quantum backtracking itself and a quantum tree size estimation algorithm (Algorithm~1 in~\cite{ambainis-kokainis}), 
{which both run QPE as a subroutine to detect whether the phase of a given unitary is equal to $0$ (and therefore we can still apply the logarithmic space construction of Appendix~\ref{app:qpe}).}   
Therefore it is sufficient to prove that quantum backtracking can be implemented efficiently in order to benefit from the speedup provided by Theorem~\ref{th:ambainis} in the hybrid framework.

\section{Eppstein's algorithm for the cubic Hamiltonian cycle problem}
\label{app:eppstein}

In \cite{dnq-eppstein}, a hybrid divide-and-conquer algorithm for the Eppstein's algorithm for the cubic Hamiltonian cycle problem was provided based on Grover's search methods. 

In the algorithm presented in \cite{dnq-eppstein}, the quadratic improvement was achieved over the possible number of branching choices $(n/2)$ whereas the size of the overall search tree is upper bounded by $O(2^{n/3})$ (see~\cite[Appendix~A]{dnq-eppstein} for detailed explanations\footnote{This $n/2$ bound on the number of branching choices follows from the fact that the backtracking algorithm presented in \cite{dnq-eppstein} generates \textit{full} binary trees of depth at most $s/2$ (see~\cite[Proposition~10]{dnq-eppstein}), where $s$ is the effective problem size which is such that $s \leq n$.}), where $n$ represents the number of edges in the graph.

{Consequently, Grover's approach yields a polynomial improvement in (it has a $O^\ast(2^{n/4})$ time complexity), which is not a near-quadratic speed-up over classical $O^\ast(2^{n/3})$ time implementations of Eppstein algorithm (such quantum algorithm has an $O^\ast(2^{n/6})$ time complexity).}
This issue has been, outside of context of hybrid methods clarified and resolved in \cite{moylett-linden-montanaro} using quantum backtracking.

In this section, we succinctly show how the quantum backtracking method for this problem can be applied in our hybrid context, achieving a near-quadratic speed up in the sub-tree, without any relevant loss off space efficiency (which translates to cut-points, and hence overall efficiency of the hybrid method) caused by relying on backtracking rather than Grover's search.

We first provide the following background for the benefit of the reader.
The forced cubic Hamiltonian cycle problem (FCHC) is a \NP-complete problem which asks whether a cubic graph $G=(V,E)$ (i.e., with degree 3) has a Hamiltonian cycle which contains at least all the edges in a given subset $F \subseteq E$.

Given a FCHC instance $(G,F)$ as input, there is a classical divide-and-conquer algorithm $\mathcal E$ which solves the FCHC problem in time $O^\ast(2^{n/3})$ (a bound which constitutes an upper bound on the tree size) by selecting an unforced edge $e$ and branching over the two subinstances $(G,F \cup \{e\})$ and $(G \setminus \{e\}, F)$ (see Algorithm~2 in \cite{dnq-eppstein}). 

On a high-level, $\mathcal E$ is a backtracking algorithm which, on a given instance, does the following:
\begin{enumerate}
 \item apply several reductions (in order to simplify the problem)
 \item check whether one of the terminal conditions of the problem has been fulfilled (i.e. checks whether we can directly answer true or false)
 \item choose the next edge to branch upon.
\end{enumerate}

One can see $\mathcal E$ as an algorithm which explores binary search trees, whose root is labelled by the instance given as input, and each node (labelled by $(G,F)$) is either a leaf, or has two children (labelled by $(G,F \cup \{e\})$ and $(G \setminus \{e\}, F)$).

Now, for an implementation of quantum backtracking for algorithm $\mathcal E$ in the context of the hybrid approach, it is sufficient to implement space efficiently routines which checks whether the instance at any given node is a leaf (a routine denoted $V_\text{leaf}$), and otherwise what its children are ($V_A$), as explained in Appendix~\ref{sub:implement-walk}.

In \cite[Section~V.C]{dnq-eppstein}, reversible subroutines are introduced to re-construct the instance (labelling a node) from $\vec v$ (using $O(s \log(n/s) + s + \log(n))$ ancillas and $O(\poly(n))$ gates \cite[Corollary~2]{dnq-eppstein}), and test whether it satisfies a terminal condition (using $O(\log(n))$ ancillas and $O(\poly(n))$ gates \cite[Corollary~3]{dnq-eppstein}), where $s$ is the effective problem sized defined by $s=n-|F|-|C(G,F)|$ and $C(G,F)$ is the set of $4$-cycles which are disconnected from $F$.  

Now, we take those subroutines as an implementation $V_\text{leaf}$, the routine of quantum backtracking which checks whether the current node is a leaf. The algorithm $\mathcal E$ always branches over two choices (adding or removing an edge). In the implementation of quantum backtracking for $\mathcal E$, one can construct an unitary $V_A$ which given the label $\vec v=v_1\ldots v_{i}$ of the current node, determines the label of its children $\vec w$ and $\vec w'$, which are respectively $v_1\ldots v_i 0$ and $v_1\ldots v_i 1$ (see Appendix~\ref{sub:implement-walk}).

\newcommand\cmmnt{\hfill$\triangleright$~}

\section{Reversible Simulation of SIA}
\label{app:sia}

\noindent
This appendix implements the circuit SIA of Section~\ref{sub:ppsz} reversibly for formulas of bounded-index width (biw; see Definition~\ref{def:biw}) in polynomial time and in a space-efficient way (as a function of $w$ and $n$).
We assume the input is a formula $F$ defined on $n$ variables with biw $w$.

For space efficiency, as discussed in Section~\ref{sub:ppsz}, we will not manipulate the formula $F$, but instead work on partial assignments in the branching representation,
which we call the \concept{advice}
(see Section~\ref{sec:gvb}).
The downside of this approach is of course that for a given node label,
our circuits need to reconstruct the corresponding partial assignment.
To avoid using $n \log(3)$ qubits for reconstructing the partial assignment,
we exploit the bounded index width to split the computation into blocks of $w$ variables.
We then use Bennett's pebbling strategy~\cite{bennett} to compose these blocks into a space-efficient reversible circuit which computes SIA.

The following three subsections treat the reconstruction of the partial assignments (Sec.~\ref{sub:reconstruct-assignments}), split the computation into blocks of $w$ variables (Sec~\ref{sub:splitting-sia}), and combine these sub-computations in a space-efficient way (Sec~\ref{sub:combining-siab}).
At the end of this we obtain Theorem~\ref{t:main}. 

\begin{theorem}\label{t:main}
There is a reversible circuit that, given an advice of size $S_{\mathit{adv}}$,
computes $s$-SIA for a $w$-biw formula $F$ with $n$ variables in space
\[
S_w \defn O(w \cdot \log(\nicefrac nw))
\]
and time   
\[
O((\nicefrac nw)^{\log(3)} \cdot  w \cdot (2 + sw)^{ks} \cdot \polylog(n))
\] 
\end{theorem}

Note that the space complexity $S_w$ excludes the space required to store the advice ($S_{\mathit{adv}}$). The space requirements $S_w$ and $S_{\mathit{adv}}$ are also treated separately in Section~\ref{sub:ppsz}.
As a final remark we also note that while in practice the size of $s$ for $s$-implication can be considered a large constant, it should in fact grow (slowly) in $n$~\cite{og-ppsz} (the inverse Ackermann function would suffice).

\subsection{Reconstructing partial assignments}
\label{sub:reconstruct-assignments}
The simplifications in the search algorithm discussed in Section~\ref{sub:ppsz} (see Algorithm~\ref{algorithm:dnc-ppsz2}) shift the difficulty to 
creating a space-efficient unitary implementing the
search predicate $P$. It should take as input an advice string
$\vec a$ of length $S_{\mathit{adv}} = \sizeof{\vec a}$,
representing the node in the search tree and determine whether this
node is a solution leaf ($P(\vec a) = 1$), a normal leaf ($P(\vec a) = 0$)
 or a branch (such that we can query children $ch2(\vec a, 0)$ and $ch2(\vec a, 1)$). 
For conciseness, we will assume here that the advice length $S_{\mathit{adv}}$ is fixed
and no counter is required for recording the length of $\vec a$.

Ignoring space requirements for a moment,
Algorithm~\ref{alg:spec} provides a specification of $P(\vec a)$. We call this procedure  $s$-implication with advice (SIA).
It reconstructs the partial assignment on variables $x_1, \ldots, x_n$ by testing $s$-implication on each variable $x_i$ in order. If $x_i$ is forced ($s$-implied) the variable is assigned accordingly. If  $x_i$ should be guessed,
then the next value from the advice is used to assign it, using the advice pointer $p$.
If at one point the formula becomes trivial under the partial assignment,
i.e., $F_{|\vec x} = b$, for $b \in \{0,1\}$, then the specification returns $b$
(simplifying from the $\ket{1b}$ discussed in Section~\ref{sub:ppsz}).
If the advice runs out ($p = \sizeof{\vec a}$) and the formula is not yet trivial after eagerly forcing variables then the return value is $\bot$ ($\ket{00}$).

\begin{algorithm}[t]
\caption{SIA specification for reconstructing the full partial assignment
from the advice in branching representation (indices~start at~1).}
\label{alg:spec}
\begin{lstlisting}
SIA$_F$ ($\vec a$)
   $p$ := 1           (* \cmmnt advice pointer  *)
   for $i$ in 1 .. $n$
      if $F_{x_1, .., x_{i-1}} = b$
         return $b$     (* \cmmnt 0 children*)
      if $F_{x_1, .., x_{i-1}} \models_s x_i \lor F_{x_1, .., x_{i-1}} \models_s \bar x_i$
         $x_i$ := $F_{x_1, .., x_{i-1}} \models_s x_i$
      else if $p = \sizeof{\vec a}$(*\cmmnt out of guesses  *)
         return $\bot$  (* \cmmnt 2 children  *)
      else
         $x_i$ := $\vec a[p]$
         $p$ := $p$ + 1    
\end{lstlisting}
\end{algorithm}

\subsection{Splitting SIA into blocks}
\label{sub:splitting-sia}
In order to create a space-efficient, reversible implementation of the entire SIA operation, we split SIA into a number of blocks, each of which acts on a subset of variables.
The variables $x_1, \dots, x_n$ of $F$ are split up into $l = n/w$ blocks of size $w$ (for simplicity we assume $w$ divides $n$). Each block $B_i$, for $i \in \{1, \dots, l\}$, contains variables 
\begin{equation}
B_i \defn \{ x_{j+1}, \dots, x_{j+w-1} \}, \text{ with $j = (i-1)w$}. \label{eq:Bi}
\end{equation}
Note that having a bounded index width of $w$ guarantees that computing the $s$-implication of $x_i$ can only rely on $x_{i-w}, \dots, x_{i-1}$, and so in order to compute SIA for variables in $B_i$ we only need the assignments on the previous $w$-sized block $B_{i-1}$. This is made more explicit in Appendix \ref{sec:s-imp-biw}.


We define a subcircuit $\SIAB_i$ (see Figure~\ref{fig:siab-circ}) which computes SIA (Algorithm~\ref{alg:spec}) for variables in $B_i$, unless input $r$ (a flag) is set, in which case it is the identity function. Moreover, if $\SIAB_i$ encounters a contradiction or runs out of advice, it sets $r$. 
The first three registers contain the inputs: an assignment $\vec y_{i-1}$ on the variables in $B_{i-1}$, the advice pointer $p$ (of $\log(n)$ qubits) which keeps track of the next unused advice bit, and the flag $r$ (1 qubit). The outputs are written to the next three registers and contain an assignment $\vec y_i$ to the variables in $B_i$, and the updated values of $p$ and $r$. Additionally, each $\SIAB_i$ block uses ancilla qubits $\vec a_b$. These ancilla's are initialized in an all-zero state, and are reset to zero by means of uncomputation within the $\SIAB_i$ block. As such, this ancilla register can be reused between all the blocks. Finally, $\SIAB_i$ needs read-only access to the advice $\vec a$, which is also shared between all $\SIAB_i$ blocks. Before $\SIAB_1$ the registers for $p$, $r$, and $\vec a_b$ are initialized to all zeros.

\begin{figure}[h]
\centering
\mbox{
 \Qcircuit @C=.7em @R=.7em {
  \lstick{\ket{\vec y_{i-1}}\ket{p}\ket{r}} & /^{w'} \qw & \multigate{3}{\text{SIAB$_i$}} & \qw & \rstick{\ket{\vec y_{i-1}}\ket{p}\ket{r}} \\
  \lstick{\ket{0}\ket{0}\ket{0}} &/^{w'} \qw & \ghost{\text{SIAB$_i$}} & \qw & \rstick{\ket{\vec y_i}\ket{p'}\ket{r'}} \\
  \lstick{\ket{\vec a_b}} & /^{S_a} \qw & \ghost{\text{SIAB$_i$}} & \qw & \rstick{\ket{\vec a_b}} \\
  \lstick{\ket{\vec a}} &/^a \qw & \ghost{\text{SIAB$_i$}} & \qw & \rstick{\ket{\vec a}}
 }}
\caption{A $\SIAB_i$ block, for $i \geq 2$. $\SIAB_1$ is a simpler version of $\SIAB_i$ without the top register. Here $w' = w + \log(n) + 1$, $S_a = w + O(\log(n))$, and $a = S_{\mathit{adv}}$.}
\label{fig:siab-circ}
\vspace{-0.2em}
\end{figure}

A more detailed implementation of $\SIAB_i$ is given in Appendix~\ref{app:siab}.

\begin{figure*}[t]
\centering
\mbox{
 \Qcircuit @C=.7em @R=.7em {
  \lstick{B_1 : \ket{0}} & /^{w'} \qw & \gate{\text{SIAB$_1$}} & \multigate{1}{\text{SIAB$_2$}} & \qw & \qw & \qw & & \cdots \\
  \lstick{B_2 : \ket{0}} & /^{w'} \qw & \qw & \ghost{\text{SIAB$_2$}} & \multigate{1}{\text{SIAB$_3$}} & \qw & \qw & & \cdots \\
  \lstick{B_3 : \ket{0}} & /^{w'} \qw & \qw & \qw & \ghost{\text{SIAB$_3$}} & \multigate{1}{\text{SIAB$_4$}} & \qw & & \cdots \\
  \lstick{B_4 : \ket{0}} & /^{w'} \qw & \qw & \qw & \qw & \ghost{\text{SIAB$_4$}} & \qw & & \cdots \\
  \vdots \\ \\
  \lstick{B_l : \ket{0}} & /^{w'} \qw & \qw & \qw & \qw & \qw & \qw & & \cdots \\
 }}
\caption{A (partial) visualization of the naive composition of all $\SIAB_i$ blocks. The bottom register of each $\SIAB_i$ block (see Fig.~\ref{fig:siab-circ}) has been omitted from this figure. This method requires $l \cdot w' + S_a$ qubits}
\label{fig:siar-naive}
\end{figure*}

\begin{figure*}[t]
\centering
\mbox{
 \Qcircuit @C=.7em @R=.7em {
  \lstick{\ket{0}} & /^{w'} \qw & \gate{\text{SIAB$_1$\vphantom{\textsuperscript{-1}}}} & \qw & \qw & \multigate{1}{\text{SIAB$_2$}} & \gate{\text{SIAB$_1^{-1}$}} & \qw & \ket{0} \\
  & & & \lstick{\ket{0}} & /^{w'} \qw & \ghost{\text{SIAB$_2$}} & \qw & \qw & \qw & \qw & \multigate{1}{\text{SIAB$_3$}} & \qw & \qw & \qw & \qw & \multigate{1}{\text{SIAB$_3^{-1}$}} & \qw & & \cdots \\
  & & & & & & & & \lstick{\ket{0}} & /^{w'} \qw & \ghost{\text{SIAB$_3$}} & \qw & \qw & \multigate{1}{\text{SIAB$_4$}} & \qw & \ghost{\text{SIAB$_3^{-1}$}} & \qw & \ket{0} \\
  & & & & & & & & & & & \lstick{\ket{0}} & /^{w'} \qw & \ghost{\text{SIAB$_4$}} & \qw & \qw & \qw & & \cdots \\
 }}
\caption{A (partial) visualization of $\SIAR$; the composition of the $\SIAB_i$ blocks using Bennett's algorithm \cite{bennett}. The bottom register of each $\SIAB_i$ block (see Fig.~\ref{fig:siab-circ}) has been omitted from this figure. Uncomputing $\SIAB_1$ frees up a ${w'}$ sized register which can be used by $\SIAB_3$. This method requires ${w'} \cdot (\log(l) + 1) + S_a$ qubits.}
\label{fig:siar-bennett}
\end{figure*}

\subsection{Combining SIAB blocks}
\label{sub:combining-siab}

Having defined $\SIAB_i$, the composition $\SIAB_l \circ ... \circ \SIAB_1$ now computes the value we are interested in, still disregarding the low-space requirement. Note that in order to ``compose'' two blocks $\SIAB_{i+1} \circ \SIAB_i$, the second output register (Fig.~\ref{fig:siab-circ}) of $\SIAB_i$ needs to be connected to the first input register of $\SIAB_i$.
A circuit which composes the $\SIAB_i$ blocks in this manner is shown in Figure~\ref{fig:siar-naive}.
In order to keep to computation unitary, the top output register of each $\SIAB_i$ block cannot simply be discarded, even though these registers are not used as inputs anymore. 

\begin{remark}
Naively composing all $\SIAB_i$ by simply allocating a new $w$-sized register for each block (see Fig.~\ref{fig:siar-naive}) requires $l \cdot w' + S_a$ qubits on top of the $S_{adv}$ qubits storing the advice.
\end{remark}

To meet the space requirement of $o(n)$, we must reduce the $l \cdot w = n$ qubits used
by combining block. To achieve this, we turn to Bennett's strategy \cite{bennett}. It relies on  reducing the problem of turning a deterministic computation into a space-efficient reversible computations to a \emph{reversible pebble game}. Such a pebble game is played on a rooted, directed graph, which corresponds to the dependency graph of inputs and outputs of different sub-computations. The rules of a reversible pebble game can be defined as follows: one can (un)pebbled a leaf freely, whereas any other node can only be (un)pebbled if their predecessors are pebbled; one wins the game if one can put a pebble on the root node. After the game is won, the maximum number of pebbles used at any time during the game corresponds to the amount of space the reversible circuit requires.

For our problem (composing $\SIAB_l \circ ... \circ \SIAB_1$) the dependency graph is simply a line graph of length $l$. Bennett's algorithm constructs a concrete pebbling strategy for the line graph corresponding to a deterministic computation.
The strategy shows that a line graph of length $l$ can be reversibly pebbled with only $\floor{\log(l)} + 1$ pebbles. A clear visualization of this algorithm is shown in \cite[Table 2]{bennett}. Figure~$\ref{fig:siar-bennett}$ shows the $\SIAB_i$ blocks composed using Bennett's algorithm, and illustrates how this saves on the space requirements: For example, by temporarily uncomputing $\SIAB_1$, a $w'$ sized register is freed up which can be used by $\SIAB_3$. We will refer to this specific reversible composition of the $\SIAB_i$ blocks as \SIAR.

\begin{lemma}[Rephrased from Bennett \cite{bennett}:]\label{lem:bennett}
The composition of (irreversible) operations $f_t \circ \dots \circ f_1$, each taking $T$ time and $S$ space, can be implemented reversibly in $O(t^{\log(3)} \cdot T)$ time and $(\floor{\log(t)} + 1) \cdot S$ space.
\end{lemma}

\begin{theorem}
There is a reversible circuit, $\SIAR$, that computes SIA for a $w$-biw formula $F$ using only 
\[
\tsiar \defn O(l^{\log(3)} \cdot w \cdot (2 + sw)^{ks} \cdot \polylog(n))
\] 
time and
\[
\ssiar \defn (\floor{\log(l)} + 1) \cdot w' + S_a  = O(w \cdot \log(l))
\]
space (wires), where $w' = (w + \log(n) + 1)$ and $S_a = w + O(\log(n))$.
\end{theorem} 
\begin{proof}
By Theorem \ref{theorem:siab}, $\SIAB$ can be reversibly implemented in
$\tsiab \defn O(w \cdot (2 + sw)^{ks} \cdot \polylog(n))$ time. The time for $\SIAR$ follows then from Lemma~\ref{lem:bennett}. In terms of space, all $\SIAB$ blocks share a $S_a = O(w + \log(n))$ sized register, on top of which Bennet's strategy requires $(\floor{\log(l) + 1}) \cdot w'$ space to reversibly compose all $\SIAB_i$ (Lem.~\ref{lem:bennett}).
\end{proof}

Theorem~\ref{t:main} simplifies this statement.

\section{Implementation of SIAB}
\label{app:siab}

\newcommand{\rev}[1]{\text{rev}(#1)}

In this note, we implement the circuit $\texttt{SIAB}_i$ defined in Appendix~\ref{app:sia}, which realizes a $w$-sized block of computation of s-SIA, by computing the values of the variables in $B_i$ (Eq.~\ref{eq:Bi}) given access to the variables in $B_{i-1}$. First, the dependency of $\texttt{SIAB}_i$ on only the previous $w$ assigned variables is proven in Section~\ref{sec:s-imp-biw}. After that, the specification of the circuit $\texttt{SIAB}_i$ (given in Sections~\ref{sub:impl_siab} and \ref{sub:siab-routines}) ultimately leads to a proof of the space and time complexity (see Sec.~\ref{sub:siab_complexity}) stated in following theorem.


\begin{theorem}\label{theorem:siab}
Each circuit $\SIAB_i$ can be implemented reversibly in  $\tsiab \defn O(w \cdot (2 + sw)^{ks} \cdot \polylog(n) )$ time, with $\ssiab \defn  2(w + \log(n) + 1)$ wires and
$S_a \defn w + O(\log(n))$ ancillas, omitting the space for the advice $\vec a$.
\end{theorem}

\subsection{S-implication under bounded index width}
\label{sec:s-imp-biw}
In this subsection we prove that in order to do $s$-implication in the dncPPSZ algorithm for a variable $x_j$, only two sets of variables need to be considered: the previous $w$ assigned variables, and any of the unassigned variables $x_k$, with $j \leq k \leq n$.

We recall the definition of $s$-implications.

\begin{definition}
A literal $x_j$ $(\overline{x_j})$ is $s$-implied by $F$, written $F \models_s x_j$ $(F \models_s \overline{x_j})$, if and only if there is a subset of clauses $G\subseteq F$ of size at most $s$ such that all satisfying assignments of $G$ set $x_j$ to $1$ $(0)$.
\end{definition}

The following lemma specifies that it is not necessary to remember a partial assignment in its entirety in order to determine $s$-implications.

\begin{lemma}
\label{lem:construct-s-subformula}
Consider an $n$-variables formula $F$ of bounded index width $w$, with a fixed variables order $x_1<\ldots<x_n$. 
Let $F_{|\alpha}$ be the formula obtained by assigning $\alpha = [\alpha_1, \dots, \alpha_{j-1}]$ to variables $[x_1, \dots, x_{j-1}]$, and assume
that $F_{|\alpha}$ is not yet trivial.
Then, in order to determine whether $F_{|\alpha} \models_s x_j$, it is sufficient to consider the assignment $\beta = [\alpha_{j-w}, \dots, \alpha_{j-1} ]$ to $[x_{j-w}, \dots, x_{j-1}]$.
\end{lemma}
\begin{proof}
Let us define $F_1$, $F_2$, and $F_3$ as subformulas of $F$, such that each only contains clauses with the variables as visualized in Figure~\ref{fig:vars-viz}.
By the property of index width, we have that the formula $F_{|\beta}$ consists of two independent formulae
$F_1$ and  $F_3$.
Hence, we have $F_3 = F_{2|\beta} = F_{|\alpha}$.
It immediately follows then that $F_{|\alpha} \models_s x_j \iff F_{2|\beta} \models_s x_j$.
%
%
\end{proof}

{
\vspace{-1.5em}
\begin{figure}[H]
    \centering
    \includegraphics[width=0.8\columnwidth]{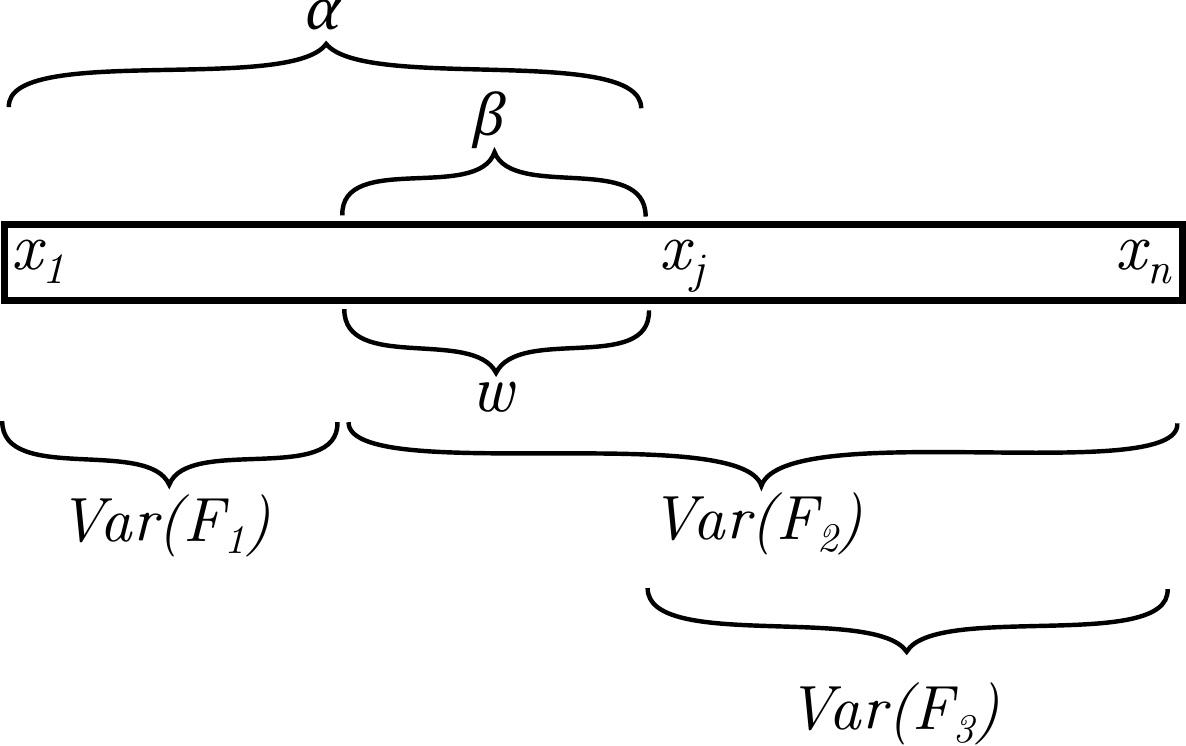}
    \vspace{-0.5em}
    \caption{Visualization of subsets of variables of $F$.}
    \label{fig:vars-viz}
    \vspace{-0.5em}
\end{figure}
}

Generally, in order to check if $F_{|\alpha} \models_s x_j$, all $s$-sized $G \subseteq F$ need to be considered, where each $G$ can take variables from $\{x_1, \dots, x_n\}$. However, since $F$ has bounded index width $w$, from Lemma~\ref{lem:construct-s-subformula} we have that $F_{2|\beta} = F_{|\alpha}$ and so checking all $G \subseteq F_2$ is sufficient.



\begin{figure*}
\centering
\mbox{
 \Qcircuit @C=1em @R=.7em {
 & & & & & & & & & & \mbox{$\mathcal{G}'$} & & & \\
  \lstick{\ket{\vec y}_1} & /^{w} \qw & 
  \multigate{9}{\mathcal{H}_G} & \qw & & &
  \multigate{4}{\mathcal{G}}
  \gategroup{1}{6}{9}{12}{.7em}{--}
  & \qw & \qw & 
  \multigate{4}{\mathcal{G}^\dagger} & \qw & \qw & 
  \multigate{6}{\mathcal{G}'} & \qw & \cdots & & \multigate{6}{\mathcal{G}'} & \qw & 
  \qw & \qw & \qw & \qw \\
  \lstick{\ket{\vec z}_{a0}} & /^{v} \qw & 
  \ghost{\mathcal{H}_G} & \qw & & & 
  \ghost{\mathcal{G}}  
  & \qw & \qw & 
  \ghost{\mathcal{G}^\dagger} &
  \gate{inc} & \qw & 
  \ghost{\mathcal{G}'} & \qw & \cdots & & \ghost{\mathcal{G}'} & \qw & 
  \qw & \qw & \qw & \qw \\
  \lstick{\ket{j}_{a1}} & /^{\log n} \qw  &  
  \ghost{\mathcal{H}_G} & \qw & & &
  \ghost{\mathcal{G}}  
  & \qw & \qw &
  \ghost{\mathcal{G}^\dagger} & \qw
  & \qw & 
  \ghost{\mathcal{G}'} & \qw & \cdots & & \ghost{\mathcal{G}'} & \qw & 
  \qw & \qw & \qw & \qw \\
  \lstick{\ket{0}_{a2}} & \qw  &  
  \ghost{\mathcal{H}_G} & \qw & & &
  \ghost{\mathcal{G}} 
  & \ctrl{1} & \ctrl{1} & 
  \ghost{\mathcal{G}^\dagger} & \qw
  & \qw & 
  \ghost{\mathcal{G}'} & \qw & \cdots & & \ghost{\mathcal{G}'} & \qw & 
  \qw & \qw & \qw & \qw \\
  \lstick{\ket{0}_{a3}} & \qw  &  
  \ghost{\mathcal{H}_G} & \qw & \equiv & &
  \ghost{\mathcal{G}}  
  & \ctrl{1} & \ctrlo{2} & 
  \ghost{\mathcal{G}^\dagger} & \qw
  & \qw & 
  \ghost{\mathcal{G}'} & \qw & \cdots & & \ghost{\mathcal{G}'} & \qw 
  & \qw & \qw & \qw & \qw \\
  \lstick{\ket{c_+}_{a4}} &  /^{v} \qw &  
  \ghost{\mathcal{H}_G} & \qw & &
  & \qw & \gate{inc} & \qw & \qw & \qw  
  & \qw & 
  \ghost{\mathcal{G}'} & \qw & \cdots & & \ghost{\mathcal{G}'} & \qw & 
  \ctrlo{1} & \ctrl{1} & \ctrlo{1} & \qw \\
  \lstick{\ket{c_-}_{a5}} &  /^{v} \qw &  
  \ghost{\mathcal{H}_G} & \qw & & &
  \qw & \qw & \gate{inc} & \qw & \qw  & \qw &
  \ghost{\mathcal{G}'} & \qw & \cdots & & \ghost{\mathcal{G}'} & \qw  & 
  \ctrl{2} & \ctrlo{1} & \ctrlo{3} & \qw \\
  \lstick{\ket{0}_{a6}} & \qw &  
  \ghost{\mathcal{H}_G} & \qw & & &
  \qw & \qw & \qw & \qw & \qw & \qw & \qw & 
  \qw & \qw & \qw & \qw & \qw & 
  \targ & \targ & \qw & \qw \\
  \lstick{\ket{0}_{a7}} & \qw &  
  \ghost{\mathcal{H}_G} & \qw & & &
  \qw & \qw & \qw & \qw & \qw & \qw & \qw & 
  \qw & \qw & \qw & \qw & \qw  & 
  \targ & \qw & \qw & \qw \\
  \lstick{\ket{0}_{a8}} & \qw &
  \ghost{\mathcal{H}_G} & \qw & & &
  \qw & \qw & \qw & \qw & \qw & \qw & \qw & 
  \qw & \qw & \qw & \qw & \qw  & 
  \qw & \qw & \targ & \qw 
 }
 }
\caption{\label{fig:imp-circ} Routine $\mathcal{H}_G$, used to check whether $x_j$ is $s$-implied by a given subset of clauses $G$. Each $\mathcal{G}'$ checks a single assignment $\vec y$ to the variables in $G$. The $\mathcal{G}'$ block is repeated $|V'|$ times. The registers for $\vec z$, $c_+$, and $c_-$ need $v = \ceil{\log(|V'|)} \leq ks$ qubits. The controls on $\ket{c_+}_{a4}$ and $\ket{c_-}_{a5}$ check whether $c_{\pm} = 0$ (open dot) or $c_{\pm} \neq 0$ (solid dot).}
\end{figure*}

\begin{figure*}
\centering
\mbox{
 \Qcircuit @C=1em @R=.7em {
 & & & & & & & & & \mbox{~~~~~~~~~$\mathcal{H}$}\\
  \lstick{\ket{\vec y}_1} & {/} \qw & 
  \multigate{4}{\mathcal{H}_{G_1}} & \qw & \qw & \cdots & & \qw & 
  \multigate{4}{\mathcal{H}'_{G_L}} & \qw & \qw & \qw & \qw &
  \multigate{5}{\mathcal{H}^\dagger} & \qw & \qw \\
  \lstick{\ket{j}_{a1}} & {/} \qw & 
  \ghost{\mathcal{H}_{G_1}} & \qw & \qw & \cdots & & \qw & 
  \ghost{\mathcal{H}'_{G_L}} & \qw & \qw & \qw & \qw &
  \ghost{\mathcal{H}^\dagger} & \gate{inc} & \qw \\
  \lstick{\ket{0}_{a6}} & \qw & 
  \ghost{\mathcal{H}_{G_1}} & \ctrl{3} & \qw & \cdots & & \qw & 
  \ghost{\mathcal{H}'_{G_L}} & \ctrl{3} & \qw & \ctrl{1} & \ctrlo{4} &
  \ghost{\mathcal{H}^\dagger} & \qw & \qw \\
  \lstick{\ket{0}_{a7}} & \qw & 
  \ghost{\mathcal{H}_{G_1}} & \qw & \qw & \cdots & & \qw & 
  \ghost{\mathcal{H}_{{G_L}}} & \qw & \qw & \ctrl{4} & \qw &
  \ghost{\mathcal{H}^\dagger} & \qw & \qw \\
  \lstick{\ket{0}_{a8}} & \qw & 
  \ghost{\mathcal{H}_{G_1}} & \qw & \qw & \cdots & & \qw & 
  \ghost{\mathcal{H}_{{G_L}}} & \qw & \ctrl{2} & \qw & \qw &
  \ghost{\mathcal{H}^\dagger} & \qw & \qw \\
  \lstick{\ket{0}_{a9}} & \qw & 
  \ctrlo{-1} & \targ & \qw & \cdots & & \qw & 
  \ctrlo{-1} & \targ \gategroup{1}{2}{7}{10}{.7em}{--} & \qw & \qw & \qw & 
  \ghost{\mathcal{H}^\dagger} & \qw & \qw \\
  \lstick{\ket{0}_{a10}} & \qw & \qw & \qw & \qw & \qw & \qw & \qw &\qw & \qw & \targ & \qw &
  \multigate{2}{\mathcal{A}} & \qw & \qw & \qw  \\
  \lstick{\ket{0}_{a11_j}} & \qw & \qw & \qw & \qw & \qw & \qw & \qw &\qw & \qw &\qw & \targ &
  \ghost{\mathcal{A}}& \qw & \qw & \qw  \\
  \lstick{\ket{p}_2} & \qw & \qw & \qw & \qw & \qw & \qw & \qw &\qw &\qw &\qw & \qw & 
  \ghost{\mathcal{A}} & \qw & \qw & \qw  \\
 }
 }
  \caption{Inner loop $\mathcal{L}_j$ of $\texttt{SIAB}_i$, checking all $s$-sized subsets of clauses and fixing the value of the current variable. For ease of visualization, wires in $\mathcal{H}_G$ (Fig.~\ref{fig:imp-circ}) which are not directly relevant in $\mathcal{L}_j$ have been omitted.}
\label{fig:siab-inner}
\end{figure*}

\begin{figure*}
\centering
\mbox{
 \Qcircuit @C=1em @R=.7em {
 & & & & & & & & & \mbox{~~~~~~~~$\mathcal{L}$}\\
  \lstick{\ket{\vec y}_1} & /^{\ } \qw & \multigate{3}{\mathcal{L}_1} & \qw & \qw & \cdots & & \qw & \multigate{5}{\mathcal{L}_w} & \qw & \qw & \qw & \qw & \qw & \multigate{6}{\mathcal{L}^\dagger} & \qw\\
  \lstick{\ket{p}_2} & /^{\ } \qw & \ghost{\mathcal L_1} & \qw & \qw & \cdots & & \qw & \ghost{\mathcal L_w} & \qw & \qw & \qw & \ctrl{9} & \qw & \ghost{\mathcal{L}^\dagger} & \qw\\
  \lstick{\ket{0}_{a10}} & \qw & \ghost{\mathcal L_1} & \ctrl{4} & \qw & \cdots & & \qw & \ghost{\mathcal L_w} & \ctrl{4} & \qw  & \qw & \qw & \qw & \ghost{\mathcal{L}^\dagger} & \qw\\
  \lstick{\ket{0}_{a11_1}} & \qw & \ghost{\mathcal L_1} & \qw & \qw & \cdots & & \qw & \ghost{\mathcal L_w} & \qw & \ctrl{4}  & \qw & \qw & \qw & \ghost{\mathcal{L}^\dagger} & \qw\\
  & \vdots &  \\
  \lstick{\ket{0}_{a11_w}} & \qw & \qw & \qw & \qw & \cdots & & \qw & \ghost{\mathcal L_w} & \qw & \qw  & \ctrl{4} & \qw & \qw & \ghost{\mathcal{L}^\dagger} & \qw\\
  \lstick{\ket{r}_{3}} & \qw & \ctrlo{-3} & \targ & \qw & \cdots & & \qw & \ctrlo{-1} & \targ \gategroup{1}{2}{8}{10}{.7em}{--} & \qw &\qw & \qw & \ctrl{5} & \ghost{\mathcal{L}^\dagger} & \qw \\
  \lstick{\ket{0}_{o1}} & \qw & \qw & \qw & \qw & \qw & \qw &\qw &\qw &\qw &\targ &\qw &\qw &\qw &\qw & \qw \\
  & \vdots & \\
  \lstick{\ket{0}_{o1}} & \qw & \qw & \qw & \qw & \qw & \qw &\qw &\qw &\qw &\qw &\targ &\qw &\qw & \qw & \qw\\
  \lstick{\ket{0}_{o2}} & /^{\ } \qw & \qw & \qw & \qw & \qw & \qw &\qw &\qw &\qw &\qw &\qw &\targ &\qw &\qw & \qw \\
  \lstick{\ket{0}_{o3}} & \qw & \qw & \qw & \qw & \qw & \qw &\qw &\qw &\qw &\qw & \qw &\qw &\targ &\qw & \qw\\
 }
}
\caption{\label{fig:siab} Circuit for $\texttt{SIAB}_i$, running the inner loop for every variable in the current $w$-block, provided that the remaining formula has not been found to be unsatisfiable (unsatisfiability would be indicated by $r > 0$). For ease of visualization, wires in $\mathcal{L}_j$ (Fig.~\ref{fig:siab-inner}) which are not directly relevant this high-level overview have been omitted.}
\end{figure*}

\subsection{Implementing $\texttt{SIAB}_i$}
\label{sub:impl_siab}
Algorithm~\ref{fig:siab-pseudo} gives a high level presentation of the implementation of  $\texttt{SIAB}_i$, with references to the corresponding reversible subroutines
(see Section~\ref{sub:siab-routines}). We omit $\texttt{SIAB}_1$ 
which is a simpler version of $\texttt{SIAB}_i$ without input block.
We write $\gets$ for the action of copying the value of a bit with a XOR gate.

\begin{algorithm}[H]
\caption{\label{fig:siab-pseudo}Pseudocode for $\texttt{SIAB}_i$($\vec{y}_{i-1}$, $p$, $r$).}
\begin{lstlisting}
for $j$ in $i w \ldots iw +w -1$ (*\cmmnt Fig.~\ref{fig:siab}*)
  if $(r = 0)$
    $c \gets 0$
    for each $s$-sized $G \subseteq F_{2|\vec{y}_{i-1}}$ (*\cmmnt Fig.~\ref{fig:siab-inner}*)
      if $(c = 0)$
        $(b',b'_j) \gets$check if $ G \models x_j, \overline{x_j}$ (*\cmmnt Fig.~\ref{fig:imp-circ}*)
        if $(b' = 1)$ (*\cmmnt $x_j$ is implied*)
            $c \gets c \oplus 1$
        if $G$ is unsat
            $c \gets c \oplus 1$
            $r \gets r \oplus 1$
    if $(b' = 1)$ (*\cmmnt $x_j$ is implied *)
        $x_j \gets b'_j$
    else (*\cmmnt $\mathcal{A}$ in Fig.~\ref{fig:siab-inner}*)
        if $p < a$
            $x_j \gets \vec a[p]$
            $p \gets p + 1$
        else
            $r \gets r \oplus 1$
$o1, o2, o3 \gets [x_{i\cdot w}, \ldots, x_{(i+1)\cdot w}], p, r$
uncompute workspace (*\cmmnt $\mathcal{L}^\dag$ in Fig~\ref{fig:siab} *)
\end{lstlisting}
\end{algorithm}

Each circuit $\texttt{SIAB}_i$ has the following input registers:
register 1 contains an assignment $\vec y_{i-1}$ to previous $w$-sized block $B_{i-1}$,
register $2$ contains the pointer $p$ to the next unused advice index (defined as a counter from 0 to $S_{adv} - 1 = |\vec a| - 1$, using $\ceil{\log(S_{adv})}$ bits),
register $3$ contains a one bit flag $r$ which is raised whenever we encounter a contradiction or the advice is fully used.
The registers $o1,o2,o3$ are used to output the next values $\vec y_i$, $p'$, and $r'$ respectively.
We also allocate ancilla registers $a0, \ldots, a11$.
\begin{align*}
  \SIAB_i \colon 
    &\ket{\vec y_{i-1}}_1
    \ket{p}_2
    \ket{r}_3
    \ket{0}_{o1}
    \ket{0}_{o2}
    \ket{0}_{o3}
    \ket{\vec a}
    \ket{\vec a_{b}} 
\mapsto \\
    &\ket{\vec y_{i-1}}_1
    \ket{p}_2
    \ket{r}_3
    \ket{y_{i}}_{o1}
    \ket{p'}_{o2}
    \ket{r'}_{o3}
    \ket{\vec a}
    \ket{\vec a_{b}} 
\end{align*}

The ancilla registers $a0,\ldots,a8$ are the ones primarily manipulated by the subroutines $\mathcal H$ and $\mathcal H_G$, as defined in Section~\ref{sub:siab-routines}. Ancilla registers $a9$ and $a10$ are used to store $c'$ and $r'$ (see Alg.~\ref{fig:siab-pseudo}), such that $c$ and $r$ can be modified within the loop-body, without immediately affecting the control qubits $c'$ and $r'$. Finally, ancilla register $a11$ is a buffer of size $w$ in which one stores the values of the current $w$-block $B_i$.

The circuit in Figure~\ref{fig:siab-inner} corresponds to the the inner loop of $\texttt{SIAB}_i$ and the subsequent instructions which fix the value of $x_j$. The unitary $\mathcal A$ corresponds to the instructions in the pseudocode which set $x_j$ to the next advice bit and increase the advice pointer. Figure~\ref{fig:siab} gives the implementation of the whole circuit $\texttt{SIAB}_i$ (note that running $\mathcal L_j$ is conditional on $r'=0$, and the circuit flips $r$ when it finds that the remainder of the formula is unsatisfiable). To simplify our representation, we omit ancilla wires which are only manipulated in the subcomponents of the circuit, but not the overall circuit.

\subsection{Reversible subroutines of $\texttt{SIAB}_i$}
\label{sub:siab-routines}
For each $s$-sized subformula $G \subseteq F$, we define a unitary
\begin{align*}
    \mathcal{G} : &\ket{\vec y}_1 \ket{\vec z}_{a0} \ket{j}_{a1} \ket{0}_{a2} \ket{0}_{a3} \mapsto  \\
    &\ket{\vec y}_1 \ket{\vec z}_{a0} \ket{j}_{a1} \ket{b}_{a2} \ket{b_j}_{a3}
\end{align*}

\noindent
which takes the following inputs: an index $j$, an assignment $\vec y$ to the $w$ variables preceding $x_j$, an  assignment $\vec z$ of length (at most) $ks$ to variables of indices greater or equal to $j$, and two ancilla qubits to write the output to.


Each unitary $\mathcal G$ hardcodes the Boolean formula $G$, and encodes the following reversible procedure: if variable $x_j$ does not appear in $G$, output $(0,0)$; else if $G$ is satisfied by the partial assignment specified by $\vec y$ on $x_{j-w}, \dots, x_{j-1}$ and $\vec z$ on 
\[ V' = \text{Var}(G) \setminus \{x_{j-w}, \ldots, x_{j-1}\}, \] 
output $(1,b_j)$, where $b_j$ is the assignment to variable $x_j$,
and else output $(0,0)$.

Next, we define a unitary
\begin{align*}
\mathcal{G}' :&\ket{\vec y}_1\ket{\vec z}_{a0}\ket{j}_{a1}\ket{c_+}_{a4}\ket{c_-}_{a5} \mapsto\\
 &\ket{\vec y}_1\ket{\vec z \oplus 1}_{a0}\ket{j}_{a1}\ket{c'_+}_{a4}\ket{c'_-}_{a5}
\end{align*}
which applies $\mathcal G$, increase the counter $c_+$ ($c_-$) if $\mathcal G$ outputs $(1,1)$ ($(1,0$)), and then uncomputes the ancilla registers. Additionally $\mathcal{G}'$ increased the value of the $\ket{\vec z}_{a0}$ register, such that the next $\mathcal{G}'$ is applied to the next partial assignment. This is shown in Figure~\ref{fig:imp-circ}.
The registers $\ket{c_+}_{a4}$ and $\ket{c_-}_{a5}$ (as well as $\ket{\vec z}_{a0}$) require $\lceil\log(|V'|)\rceil \leq ks$ qubits to count up to $2^{|V'|}$, the total number of possible assignments to $G$.

For each subformula $G \subseteq F$, we implement an unitary 
\begin{align*}
    \mathcal{H}_G :&\ket{\vec  x}_1\ket{j}_{a1}\ket{0}_{a6}\ket{0}_{a7}\ket{0}_{a8} \mapsto \\
    &\ket{\vec x}_1\ket{j}_{a1}\smallket{b'}_{a6}\smallket{b'_j}_{a7}\ket{r''}_{a8}
\end{align*}
which evaluates $G$ for all its possible partial assignments, to determine whether the satisfying assignments of $G$ agree on the value assigned to $x_j$, and if yes ($b'=1$), outputs that value ($b'_j$). If $G$ is unsatisfiable, $\mathcal{H}_G$ outputs $r''=1$.  

Our implementation of $\mathcal{H}_G$ initializes the ancilla registers $a0,a2,\ldots,a5$ to $0$, and repeatedly applies $\mathcal{G}'$ and increments $\vec y$ (see Figure~\ref{fig:imp-circ}) up until $\vec y = 2^{|V'|} - 1$.
After applying $\mathcal{G}'$ for all assignments $\vec y$, the following can be concluded based on the counters $c_+$ and $c_-$:
\begin{align*}
\begin{cases}
    \text{$x_j$ not implied, $G$ is sat} & \text{ if $c_+ > 0, c_- > 0$} \\
    \text{$G$ implies $x_j$} & \text{ if $c_+ > 0, c_- = 0$} \\
    \text{$G$ implies $\overline{x_j}$} & \text{ if $c_+ = 0, c_- > 0$} \\
    \text{$G$ is unsat} & \text{ if $c_+ = 0, c_- = 0$}
\end{cases} 
\end{align*}
If both $c_+$ and $c_-$ are non-zero, $G$ has satisfying assignments, but they do not all agree on $x_j$. In this case $\mathcal{H}_G$ outputs $(b',b_j') = (0,0)$. If exactly one of $c_+$ and $c_-$ is non-zero, all satisfying assignments of $G$ agree on $x_j$, and either $x_j$ or $\overline{x_j}$ is implied by $G$. Here $\mathcal{H}_G$ outputs $(1,1)$ or $(1,0)$ respectively. If both $c_+$ and $c_-$ are zero, $G$ is not satisfiable at all given the variables assigned before $x_j$, and so neither is $F$. In this case $\mathcal{H}_G$ outputs $r''=1$, which ultimately leads the circuit to raise the flag $r$.

\subsection{Complexity of the implementation of $\texttt{SIAB}_i$}
\label{sub:siab_complexity}
First, let us analyze the time complexity of $\texttt{SIAB}_i$. To simplify the analysis, we drop any $\polylog(n)$ factor by writing $\tilde{O}(f(n)) = O(\polylog(n) \cdot f(n))$. By doing so the increment operations, as well as the $\mathcal{A}$ operation can be done in $\tilde{O}(1)$ time.

The unitary $\mathcal{G}$ hardcodes a Boolean formula $G$ which has (at most) $s$ clauses and $k$ variables per clause, taking $T_\mathcal{G} = O(ks) = \tilde{O}(1)$ time. The subroutine $\mathcal{G}'$ runs $\mathcal{G}$ and $\mathcal{G}^\dagger$, and a number of $\tilde{O}(1)$ operations, which gives $T_{\mathcal{G}'} = \tilde{O}(1)$. Next, $\mathcal{H}_G$ repeats $\mathcal{G}'$ for all assignments to $\vec z$, which are at most $2^{ks}$, giving $T_{\mathcal{H}_G} = \tilde{O}(2^{ks})$.

The inner loop $\mathcal{L}_i$ loops over all $s$-sized formulae $G$.
Since we consider formulas with bounded index width $w$, for any variable $x_j$, any subformula $G\subseteq F$ with $s$ clauses can either be split into independent formulae or contains only unassigned variables from $\{x_j, \dots, x_{j+sw} \}$ (a chain of $s$ clauses each with index width $w$).
Each $s$-sized $G$ has at most $ks$ variables, and which $sw$ options per variable this results in at most $O((sw)^{ks})$ formulae $G$. This yields a time complexity of $T_{\mathcal{L}_j} = \tilde{O}((sw)^{ks} \cdot 2^{ks}) = \tilde{O}((2 + sw)^{ks})$.

Finally, the outer loop of $\SIAB_i$ loops over $w$ variables. Giving $\tsiab = \tilde{O}(w \cdot (2 + sw)^{ks}) = O(w \cdot (2 + sw)^{ks} \cdot \polylog(n) )$.


Including the input registers $1, 2, 3$ and output registers $o1, o2, o3$ the space requirement for is $\SIAB_i$ is $2(w + \log(n) + 1) + S_{a}$, where $S_a$ is the space for the ancilla registers $a0$ through $a11$, which adds up to $S_{a} = w + \log(n) + 3ks + O(1) = w + O(\log(n))$.

\section{Lattice SAT is \NP-complete}
\label{app:latticesat}

We define Lattice SAT to be the restriction of the $3$-SAT problem to formulas defined on a lattice, so that each clause is associated to a  constraint defined on a tile (unit square of the grid), with corners of the tiles labelled by variables. 

Formally, an instance of Lattice SAT is a formula defined on $n$ variables, and whose clauses are defined by $2$ to $3$ corners on the same tile (see Figure~\ref{fig:latticesat}), with at least one tile defined on $3$ corners, in such a way that the overall lattice fits within a square of length $\sqrt{n}$. 

By construction, there are permutations of indices which are such that any given instance of Lattice SAT has bounded index width $\sqrt{n} \in o(n)$.

First, observe that Lattice SAT is in \NP because the validity of an assignment for the underlying 3-SAT instance can be verified in polynomial time. Let us show that Lattice SAT is \NP-hard.

A polynomial-time reduction from $3$-SAT to Lattice SAT can be implemented as follows. Consider a $3$-SAT formula $F$ defined on $n$ variables over $L$ clauses. Consider a lattice $\mathbb{F}$ defined as a $nL$-by-$nL$ $2$-dimensional grid. In what follows, for each clause $C$, we place variables (which appear in $C$) in the lattice $\mathbb F$, and we add to the lattice $\mathbb F$ a set of tiles which is equisatisfiable to the clause $C$.

We associate $L$ copies $x_{i,1},\ldots,x_{i,L}$ (one per clause) to each variable $x_i$.
We introduce new constraints to ensure that all copies have the same truth value, that is
\[
x_{i,l} \vee \bar{x}_{i,l+1}
\text{ and }
\bar{x}_{i,l} \vee {x}_{i,l+1}.
\]
Having $L$ copies of each variable $x_i$ allows for a spatial arrangement on the lattice of any two copies  $x_{i,l}$ and $x_{i,l+1}$ in a constrained relationship, by defining one tile in the lattice where they meet.
We place such copies on the diagonal of the lattice $\mathbb{F}$, from the top left corner to the bottom right one in the order determined by the order of variables in $\text{Vars}(F)$.

Consider a clause $C_l = x_{i_1} \vee x_{i_2} \vee x_{i_3}$ in $F$ (where we assume without loss of generality that $i_1$, $i_2$ and $i_3$ are variables indices such that $i_1 \leq i_2 \leq i_3$). Let us define a set of tiles which corresponds to a restricted formula which is equisatisfiable to $C_l$. 
Observe that each clause $C_l$ can be decomposed into three clauses $C'_l = x_{i_1,l} \vee x_{i_2,l} \vee t$, $C''_l = x_{i_2,l} \vee x_{i_3,l} \vee \bar{t'}$ and $C'''_l = \bar{t} \vee t'$ (for some fresh variables $t,t'$), so that $C_l \equiv C'_l \wedge C''_l \wedge C'''_l$.

\begin{figure}
\hspace{-.5cm}
\begin{tikzpicture}[minimum width=1.8cm]
  \node (ghost1i) at (-1,1) {};
  \node (ghost2i) at (1,-.25) {};
  \node[draw, circle] (xi) at (1,1) {$x_i$};
  \node (ghost1in) at (7,1) {};
  \node (ghost2in) at (5,-0.25) {};
  \node[draw, circle] (xin) at (5,1) {$x_{i+\sqrt{n}}$};
  \node (ghost1iu) at (-1,5) {};
  \node (ghost2iu) at (1,6.5) {};
  \node[draw, circle] (xiu) at (1,5) {$x_{i+1}$};
  \node (ghost1iun) at (7,5) {};
  \node (ghost2iun) at (5,6.5) {};
  \node[draw, circle] (xiun) at (5,5) {$x_{i+1+\sqrt{n}}$};
  
  \node[minimum width=.2cm,draw] (clause) at (3,3) {$C$};
  
  \draw[thick] (ghost1i) -- (xi);
  \draw[thick] (ghost2i) -- (xi);
  \draw[thick] (xi) -- (xiu);
  \draw[thick] (xi) -- (xin);
  \draw[thick] (ghost1in) -- (xin);
  \draw[thick] (ghost2in) -- (xin);
  \draw[thick] (xiu) -- (xiun);
  \draw[thick] (xin) -- (xiun);
  \draw[thick] (ghost1iu) -- (xiu);
  \draw[thick] (ghost2iu) -- (xiu);
  \draw[thick] (ghost1iun) -- (xiun);
  \draw[thick] (ghost2iun) -- (xiun);
  
  \draw[red, dashed, very thick] (clause) -- (xi);
  \draw[red, dashed, very thick] (clause) -- (xin);
  \draw[red, dashed, very thick] (clause) -- (xiu);
      
\end{tikzpicture}
    \caption{$C = x_i \vee \neg x_{i+\sqrt{n}} \vee x_{i+1}$ in Lattice SAT (black lines) and in Planar 3-SAT (red dotted lines)}
    \label{fig:latticesat}
\end{figure}

{In what follows, we construct a set of constraints which is equisatisfiable to the clause $C'_l$.}
First, fresh variables $y_1,\ldots,y_p,z_1,\ldots,z_q,t$ are spatially arranged on the lattice $\mathbb F$ so that: $y_1,\ldots,y_p$ are placed on the same horizontal line as $x_{i_1,l}$ in $\mathbb F$, with $y_1$ directly at the right of $x_{i_1,l}$, and each $y_{i+1}$ is placed directly at the right of $y_i$; $z_1,\ldots,z_q$ are placed on the same vertical line as $x_{i_2,l}$, with $z_1$ directly above $x_{i_2,l}$, and each $z_{i+1}$ is placed directly above $z_i$; $t$ is placed on the same tile as $y_p$ and $z_q$ and $t$, so that $t$ is directly at the right of $y_p$ and directly above $z_q$. Then, we add the following set of the constraints to the tiles of lattice $\mathbb{F}$ on which those variables $y_1,\ldots,y_p,z_1,\ldots,z_q,t$ are located:
\begin{itemize}
    \item $\bar{y}_i \vee {y}_{i+1}$, $\bar{y}_{i+1} \vee {y}_{i}$ (for $0 \leq i \leq p$),
    \item $\bar{z}_j \vee {z}_{j+1}$, $\bar{z}_{j+1} \vee {z}_{j}$ (for $0 \leq j \leq q$),
   \item $y_p \vee t \vee y_1$,
\end{itemize}
where $y_0$ is $x_{i_1,l}$ and $z_0$ is $x_{i_2,l}$,
ensuring that the variables $y_1,\ldots,y_p,x_{i_1,l}$ take the same truth value, and the variables $z_1,\ldots,z_q,x_{i_2,l}$ take the same truth value.

We repeat the same process for $x_{i_2}$ and $x_{i_3}$, adding fresh variables $y'_1,\ldots,y'_p,z'_1,\ldots,z'_q,t'$, with the same constraints but this time $\bar{t'}$ in the constraint where $t$ previously appeared. We obtain a second set of constraints which is equisatisfiable to $C''_l$.

Now, observe that reiterating the same process a third time for $t$ and $t'$, we obtain a third set of constraints which is equisatisfiable to $C'''_l$.
Combining the three sets of constraints, we obtain a set of constraints which is equisatisfiable to the formula $C'_l \wedge C''_l \wedge C'''_l$, which is itself equisatisfiable to the clause $C_l$.

\usetikzlibrary{decorations.shapes}
\tikzset{decorate sep/.style 2 args=
{decorate,decoration={shape backgrounds,shape=circle,shape size=#1,shape sep=#2}}}

\begin{figure}
    \centering
\begin{tikzpicture}
\draw[decorate sep={2mm}{4mm},fill] (0.5,0) -- (1.5,0);
\draw[decorate sep={2mm}{4mm},fill] (0.5,0.5) -- (1.5,0.5);
\draw[decorate sep={2mm}{4mm},fill] (0.5,1) -- (1.5,1);
\draw[red,line width=0.5mm] (0.2,0.5) --(1.5,0.5);
\draw[blue,line width=0.5mm] (0.9,-0.2) -- (0.9,1.2);

\draw[->,thick] (2,0.5) -- (3,0.5);
\draw[decorate sep={2mm}{4mm},fill] (4,0) -- (5,0);
\draw[decorate sep={2mm}{4mm},fill] (4,0.5) -- (5,0.5);
\draw[decorate sep={2mm}{4mm},fill] (4,1) -- (5,1);
\draw[red,line width=0.5mm] (4.87,0.7) -- (4.25,-0.3);
\draw[red,line width=0.5mm] (4.55,-0.3) -- (3.85,0.7);
\draw[blue,line width=0.5mm] (4.4,0.3) -- (4.4,1.2);
\draw[blue,line width=0.5mm] (5,-0.3) -- (4.3,0.7);

\end{tikzpicture}
    \caption{Rewriting overlaps}
    \label{fig:rewriting-latticesat}
\end{figure}

We repeat this process for every clause $C_l$, obtaining an instance $\mathbb{F}$ of Lattice SAT defined by $O((nL)^2)$ constraints on a $nL$-by-$nL$ squared grid.
The instance $\mathbb{F}$ is not equisatisfiable to $F$, as our reduction potentially generate overlaps, which can be eliminated by repeatedly inserting empty rows and lines and applying the rewriting gadget described in a simplified representation in Figure~\ref{fig:rewriting-latticesat}, with coloured lines corresponding to tiles defined on two variables (black dots). At most $(nL)^2$ overlaps exist, and $L \in O(\text{poly}(n))$, so that this reduction can be done in polynomial time.

\begin{theorem}
Lattice SAT is \NP-complete.
\end{theorem}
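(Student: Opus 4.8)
The plan is to establish membership in NP first, which is immediate, and then reduce 3-SAT to Lattice SAT in polynomial time. The statement of the theorem is preceded by a detailed construction sketch, so the proof proposal largely amounts to organizing that construction and verifying its correctness and resource bounds carefully.

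\textbf{Membership in NP.} First I would note that an instance of Lattice SAT is in particular a 3-CNF formula (the lattice/plaquette structure is just a restriction on which clauses are allowed), so a satisfying assignment is a polynomial-size certificate that can be checked in polynomial time by evaluating every clause. This disposes of the easy half.

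\textbf{NP-hardness: the reduction.} The main work is a polynomial-time many-one reduction from 3-SAT. Given a 3-SAT formula $F$ on $n$ variables with $L$ clauses, I would build a lattice $\mathbb{F}$ on an $nL \times nL$ grid as follows. (i) \emph{Variable copies:} replace each variable $x_i$ by $L$ copies $x_{i,1},\dots,x_{i,L}$, placed along the main diagonal in index order, and chain them with equivalence plaquettes $x_{i,l}\vee\bar x_{i,l+1}$ and $\bar x_{i,l}\vee x_{i,l+1}$ so all copies share a truth value. (ii) \emph{Routing wires:} for clause $C_l = x_{i_1}\vee x_{i_2}\vee x_{i_3}$, introduce fresh ``wire'' variables $y_1,\dots,y_p$ extending horizontally from $x_{i_1,l}$ and $z_1,\dots,z_q$ extending vertically from $x_{i_2,l}$, chained by two-corner equivalence plaquettes, meeting at a fresh variable $t$ on a single plaquette carrying the 3-corner constraint $y_p\vee t\vee y_1$ — which is exactly $C'_l = x_{i_1,l}\vee x_{i_2,l}\vee t$ after propagating equivalences. (iii) Repeat with fresh wires for the pair $(x_{i_2},x_{i_3})$ to realize $C''_l = x_{i_2,l}\vee x_{i_3,l}\vee \bar{t'}$, and once more for $(t,t')$ to realize $C'''_l = \bar t\vee t'$; since $C_l \equiv C'_l\wedge C''_l\wedge C'''_l$ over the fresh variables, the union of these plaquette sets is equisatisfiable with $C_l$. (iv) \emph{De-overlapping:} the routings for different clauses may cross; I would insert empty rows and columns and apply the local rewriting gadget of Figure~\ref{fig:rewriting-latticesat} to replace each crossing by a planar gadget of two-corner plaquettes, preserving equisatisfiability. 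Finally I would verify: the construction uses $O((nL)^2)$ variables and plaquettes, there are at most $(nL)^2$ crossings to resolve, $L\in O(\poly(n))$, so the whole reduction runs in polynomial time and outputs a genuine Lattice SAT instance (everything is on plaquettes of a $\le \sqrt{N}\times\sqrt{N}$ grid with $N$ the final variable count, at least one 3-corner plaquette present). Correctness is the two-way implication: a satisfying assignment of $F$ lifts to one of $\mathbb{F}$ by setting all copies/wire variables consistently and choosing $t,t'$ to satisfy the decomposed clauses; conversely any satisfying assignment of $\mathbb{F}$ restricts to one of $F$ because the equivalence chains force copies to agree and the 3-corner plaquettes re-assemble to $C_l$.

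\textbf{Anticipated main obstacle.} The routine combinatorics (counting variables, clauses, grid size) is straightforward; the genuinely fiddly part is the \emph{geometric realizability} — showing that all the horizontal/vertical wire routes and the diagonal placement of copies can be laid out on a grid of polynomial side length so that (a) each intended constraint sits on an actual plaquette with its variables on adjacent corners, and (b) every unwanted crossing is isolatable and can be resolved by the rewriting gadget without introducing new crossings or breaking planarity of the plaquette incidences. I would handle this by being explicit about coordinates (diagonal copies at positions scaled by $L$, wires emanating axis-aligned into otherwise-empty bands, crossings resolved one at a time after padding), and by checking that the rewriting gadget is local and crossing-count-decreasing so the de-overlapping terminates in polynomially many steps. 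This is essentially the same style of argument used for Planar 3-SAT hardness, adapted to force all constraints onto unit plaquettes, which is why Figure~\ref{fig:latticesat} contrasts the two.
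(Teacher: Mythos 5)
Your proposal is correct and follows essentially the same route as the paper: NP membership by certificate checking, then a polynomial-time reduction from $3$-SAT using $L$ diagonal copies of each variable chained by equivalence plaquettes, horizontal/vertical wire variables routing each clause to meeting plaquettes realizing the decomposition $C_l \equiv C'_l \wedge C''_l \wedge C'''_l$, and crossing elimination via the local rewriting gadget. One small point (inherited from the construction as written): the meeting-plaquette constraint should read $y_p \vee t \vee z_q$ rather than $y_p \vee t \vee y_1$ in order to actually encode $x_{i_1,l} \vee x_{i_2,l} \vee t$.
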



\end{document}